\newcommand\version\today
\numberwithin{equation}{section}
\newtheorem{theorem}{Theorem}[section]
\newtheorem{corollary}[theorem]{Corollary}
\newtheorem{lemma}[theorem]{Lemma}
\newtheorem{proposition}[theorem]{Proposition}
\theoremstyle{definition}
\newtheorem{definition}[theorem]{Definition}
\newtheorem{remark}[theorem]{Remark}
\newtheorem{assumption}[theorem]{Assumption}
\newcounter{theoremi}[theorem]
\newcounter{assumptions}
\newcounter{smalllist}
\newcounter{listi}
\newcounter{smallenum}
\newcommand{\Ra}{\big\rangle}
\newcommand{\La}{\big\langle}
\newcommand{\la}{\langle}
\newcommand{\ra}{\rangle}
\newcommand{\veps}{\varepsilon}
\newcommand{\id}{\mathds{1}}
\newcommand{\eps}{\varepsilon}
\newcommand{\N}{\mathbb{N}}
\newcommand{\R}{\mathbb{R}}
\newcommand{\x}{\la x \ra }
\newcommand{\rt}{{\rm curl}}
\newcommand{\C}{\mathbb{C}}
\newcommand{\D}{\mathscr{D}}
\newcommand{\h}{\mathcal{H}}
\newcommand{\U}{\mathcal{U}}
\newcommand{\pd}{\partial}
\newcommand{\LL}{\mathcal{\big\langle}}
\newcommand{\RR}{\mathcal{\big\rangle}}
\DeclareMathOperator{\im}{Im}
\DeclareMathOperator{\re}{Re}
\DeclareMathOperator{\supp}{supp}
\DeclareMathOperator*{\esssup}{ess\,sup}
\newcommand{\wti}{\widetilde  }
\newcommand{\curl}{{\rm curl}} 
\newcommand{\ol}[1]{\overline{#1}}
\title[Absence of embedded eigenvalues of Pauli and Dirac  operators]{Absence of embedded eigenvalues of Pauli and Dirac  operators}
\author {Dirk Hundertmark}
\address{Dirk Hundertmark, Department of Mathematics, Institute for Analysis, Karlsruhe Institute of Technology, 76128 Karlsruhe, Germany, and
Department of Mathematics
University of Illinois at Urbana-Champaign
1409 W. Green Street Urbana, Illinois 61801-2975 }
\email {dirk.hundertmark@kit.edu}
\author {Hynek Kova\v{r}\'{\i}k}
\address {Hynek Kova\v{r}\'{\i}k, DICATAM, Sezione di Matematica, Universit\`a degli studi di Brescia,Via Branze 38 - 25123, Brescia, Italy}
\email {hynek.kovarik@unibs.it}
\thanks{\noindent\copyright 2023 by the authors. Faithful reproduction of this article, in its entirety, by any means is permitted for non-commercial purposes. }
\begin{document}

\begin{abstract}
We consider eigenvalues of the Pauli operator in $\R^3$ embedded in the continuous spectrum. In our main result we prove the absence of such eigenvalues above a
threshold which depends on the asymptotic behavior of the magnetic and electric field at infinity.  We show moreover that the decay conditions on the magnetic and electric field are sharp. Analogous results are obtained for purely magnetic Dirac operators.
\end{abstract}

\maketitle


\noindent {\bf Keywords:} Pauli operator, Dirac operator, embedded eigenvalues.  

\smallskip

\noindent {\bf MSC 2020:}  35Q40, 35P05

\section{\bf Introduction and outline of the paper } \label{sec-intro}
In this paper we study the point spectrum of the Pauli operator in $L^2(\R^3,\C^2)$  formally given by 
\begin{equation} \label{pauli-formal}
H_{A,V} = \big (\sigma\cdot (P-A)\big)^2\,  + V.
\end{equation} 
Here $P=-i\nabla $ denotes the momentum operator, $A\in L^2_{\text{loc}}(\R^3,\R^3)$ is a magnetic vector potential, $\sigma=(\sigma_1,\sigma_2,\sigma_3)$ is the set of  Pauli matrices, see equation \eqref{sigma-j} below, and $V$ is a potential function which associates to each $x\in\R^3$ a two by two hermitian matrix $V(x)$. We refer to equations \eqref{q0} and \eqref{q} for a more precise definition of $H_{A,V} $. The free Pauli operator $H_{A,0}$ represents a quantum Hamiltonian of a particle with spin $\frac 12$ interacting with a magnetic field $B= \curl\, A$, see e.g.~\cite{thaller} for further reading and references. 

Our aim is to find sharp conditions on $B$ and $V$ under which the operator $H_{A,V}$ has no eigenvalues above certain critical energy.  

The absence of discrete eigenvalues of $H_{A,V}$, also in dimensions higher than three,  can be deduced from the results of \cite{cfk}, where the authors show, via the method of multipliers, that if $B$ and $V$ satisfy certain smallness assumptions, then $H_{A,V}$ has no eigenvalues at all.

The absence of eigenvalues at the threshold of the essential spectrum, typically zero, is also well understood, at least in the case $V=0$. A sharp criterion for zero to be an eigenvalue of $H_{A,0}$  was recently established in \cite{fl,fl2}, see also \cite{be,bvb}. In particular, it is proved in \cite{fl2} that $H_{A,0}$ can have a zero energy eigenfunction only if $\|A\|_{L^3(\R^3)}$ exceeds certain explicit value. 
Examples of magnetic fields which produce zero energy eigenfunctions of $H_{A,0}$, and which show that the criterion of \cite{fl2} is sharp,  can be found in \cite{amn,elt,ly,rs}.  We will give more comments on this question in Remark \ref{rem-zero}.

What is not well understood so far is the question of absence of eigenvalues embedded in the essential spectrum, which is of fundamental importance e.g.~for the validity of a limiting absorption principle, for the scattering theory, as well as for dispersive estimates.  One could of course apply the result of \cite{cfk}, since the conditions stated there guarantee not only the absence of discrete eigenvalues, but also the absence of {\it all eigenvalues}, \cite[Thm.~3.5]{cfk}. However, this automatically implies
that such conditions are way too strong if one is interested only in embedded eigenvalues, since creating discrete eigenvalues is usually much ``easier" than creating  eigenvalues embedded in the essential spectrum. Indeed, consider the generic case in which $\sigma_{\rm es} (H_{A,0})= [0,\infty)$. Then any negative and sufficiently strong potential $V$ will create negative eigenvalues, but it should typically not create positive eigenvalues, at least when $B$ and $V$ decay fast enough at infinity. Hence in order to exclude all eigenvalues, one
has to impose global smallness assumptions on $B$ and $V$, see \cite[Thm.~3.5]{cfk}. On the other hand, embedded eigenvalues belong to the essential spectrum and therefore their absence or existence should depend only on the behavior of $B$ and $V$ at infinity.

In this paper we prove that the operator $H_{A,V}$  cannot have eigenvalues above an energy level $\Lambda \geq 0$ allowing, at the same time, $H_{A,V}$ to have discrete and/or threshold eigenvalues, see  Theorem \ref{thm-abs}. We provide an explicit expression for $\Lambda$ which shows, in agreement with the above heuristics, that $\Lambda$ depends only on the behavior of $B$ and $V$ at infinity. In particular, no global bounds on $B$ and $V$ are needed.

Let us describe the main result of this paper more in detail. In Theorem \ref{thm-abs} it is proved, under rather mild regularity and decay conditions on $B$ and $V$, that  
$H_{A,V}$ has no eigenvalues larger than 
\begin{equation} \label{edge}
 \Lambda = \Lambda(B,V) := \frac{1}{4}\left(\beta+\omega_1 +\sqrt{(\beta+\omega_1)^2+2\, \omega_2} \, \right)^2\, ,
\end{equation}
where $\beta, \omega_1$ and $\omega_2$ are non-negative constants which depend, in a weak sense, on the spacial asymptotics of $B$ and $V$. We refer to 
 Assumption \ref{ass-bounded infinity} and equation \eqref{def-beta-omega} for a full definition of $\beta$ and $\omega_j$. If $B$ and $V$ are regular enough at infinity, then the values of $\beta$ and $\omega_j$ are determined from their pointwise asymptotics. Indeed, splitting the potential into a sum of its short-range and long-range component; $V=V^s+V^\ell$, we find

 $$
 \beta\le  \, \limsup_{|x|\to\infty} |\wti B(x)|, 
\quad \omega_1 \, \le  \, \limsup_{|x|\to\infty} |x\, V^s(x)|_{\C^2}, \quad\text{and }\ \  \omega_2\, \le \, 
\limsup_{|x|\to\infty}\, | (x\cdot \nabla V^\ell(x))_+ |_{\C^2}.
 $$
see Lemma \ref{lem-pointwise}.

\begin{remark}
 It is illustrative to compare Theorem \ref{thm-abs}  with classical results on the absence of positive eigenvalues of non-magnetic Schr\"odinger operators \cite{ag,kato, s2}. 
If $B=0$, then by choosing $V^s=V$ and $V^\ell=0$ we obtain $\Lambda=\omega_1^2$ which generalizes the result of Kato \cite{kato}. On the other hand, by choosing $V^s$ such that $V^s(x) = o(|x|^{-1}),$ 
 and setting $V^\ell=V-V^s$ we get $\Lambda=\omega_2/2$, and recover thus the results of Agmon \cite{ag} and Simon \cite{s2}.
\end{remark}

To prove Theorem \ref{thm-abs} we adapt a version of the quadratic form method of \cite{ahk}, which in turn is  inspired by the approach  invented by Froese and Herbst for non-magnetic Schr\"odinger operators in \cite{fh,fhhh}.  However, due to the spinor structure of the operator $H_{A,V}$ and of its wave-functions, the technique of \cite{ahk,fh,fhhh} cannot be applied directly. The problem is that the operator-valued matrix $H_{A,V}$ is, contrary to the two-dimensional case, non-diagonal. 
Consequently, a direct application of the above mentioned technique, developed for scalar magnetic operators, is not feasible. It is therefore necessary to implement the fundamental 
ingredients of  \cite{ahk} in such a way that the spinor structure of  $H_{A,V}$  be taken into account.
To do so we make use of multiplication and commutation relations for the Pauli matrices and of their convenient interplay with the Poincar\'e gauge for the vector potential $A$. This is yet another example of the importance of choosing a gauge which suits best the problem in question. In our case the choice of the Poincar\'e gauge, together with the properties of the Pauli matrices, allows us to prove a matrix-valued versions of the virial-type identities for the weighted commutator between $H_{A,V}$ and the generator of dilations,  see equations  \eqref{eq-psiF-2} and \eqref{eq-energy boost-2}. With the help of these identities we then show that any eigenfunction 
of $H_{A,V}$ with eigenvalue larger than $\Lambda$ must identically vanish. We would like to point out that although the identities \eqref{eq-psiF-2} and \eqref{eq-energy boost-2} are identical to their scalar counterparts obtained in \cite{ahk}, due to the spinor structure of the problem under consideration their derivation is essentially different.

The paper is organized in the following way. In the first two sections we collect  necessary prerequisites and state our hypotheses. In Sections \ref{sec-aux} and \ref{sec-comm} we prove some preliminary results concerning dilations and commutator properties of $H_{A,V}$. The main result is stated and proved in Section \ref{sec-abs}.
In Section \ref{sec-example} we construct an example which shows that the critical energy $\Lambda(B,V)$ given by \eqref{edge} is sharp. As a consequence of Theorem \ref{thm-abs} we also establish sufficient conditions 
for the absence of  embedded eigenvalues of the magnetic Dirac operator, see Theorem \ref{thm-dirac} and Corollary \ref{cor-dirac}.
In Appendinx \ref{sec-pointwise} we show that all the hypothesis stated in Section \ref{sec-ass} are satisfied under some mind pointwise conditions on $B$ and $V$.

\section{\bf Prerequisites} \label{sec-prelim}
\subsection{Basic setup}
We identify the magnetic field with 
the vector--field $B:\R^3\to\R^3$ with components 
$(B_1,B_2, B_3) $. A vector potential is a vector field $A:\R^3\to \R^3$ which generates 
a magnetic fields via $B=\curl A$, in the distributional sense.   
We recall the well-known Pauli matrices $\sigma_j :\C^2\to \C^2$;
\begin{equation} \label{sigma-j}
\sigma_1 = \left( \begin{array}{cc}
0  & 1 \\
1 & 0
\end{array} \right), 
\qquad 
\sigma_2 = \left( \begin{array}{cc}
0  & -i \\
i & 0
\end{array} \right), 
\qquad 
\sigma_3 = \left( \begin{array}{cc}
1  & 0 \\
0 & -1
\end{array} \right) .
\end{equation}
In what follows we use the shorthand 
\begin{equation} 
z \cdot \sigma = \sum_{j=1}^3\,  z_j \, \sigma_j \qquad z\in\C^3. 
\end{equation}
The Pauli matrices satisfy the following  multiplication and commutation relations,
\begin{align}
\sigma_{j}\,  \sigma_k &= \delta_{j k} \id +i \sum_{m=1}^3 \eps_{j km}\, \sigma_m                 \label{pauli-re-1} \\
[ \sigma_{j}, \sigma_k] &= 2 i \sum_{m=1}^3 \eps_{j km}\,  \sigma_m             \label{pauli-re-2} \, .
\end{align}
Here $\id$ is the unite $2\times 2$ matrix, and $ \eps_{jkm}$ denotes the Levi-Civita permutation symbol. In particular, $\sigma_j^2= \id $ for $j=1,2,3$.

Given a magnetic field $B$ and a point $w\in\R^3$ let $\wti{B}_{w}(x)\coloneqq B(x+w)[x]$. More precisely, $\wti{B}_{w}$ is a vector--field on $\R^3$ defined by
\begin{align} \label{B-tilde}
\wti{B}_{w}(x+w & )= B(x+w)\wedge x	\, .
\end{align}
Making use of translations, we will often  assume $w=0$, in which case we will simply write $\wti{B}$.

\subsection{Notation} \label{ssec-notation} 
\noindent If $A\in L^2_{\text{loc}}(\R^3,\R^3)$ is a magnetic vector potential, the the magnetic Sobolev space is defined by 
\begin{equation} \label{wB}
\h^1(\R^3, \C^2)\coloneqq \D(P-A) = \Big\{ \varphi\in L^2(\R^3,\C^2)\, : \ (P-A)\, \varphi \in L^2(\R^3,\C^2) \Big\},
\end{equation}
equipped with the graph norm 
\begin{equation}
\| u\|_{\h^1} =  \Big( \|(P-A) u\|_{L^2(\R^3,\C^2) }^2  +  \| u\|_{L^2(\R^3,\C^2)}^2 \Big)^{1/2}\, .
\end{equation}
The corresponding scalar Sobolev space will be denoted by
\begin{equation*}
\h^1(\R^3) = \big\{ u\in L^2(\R^3)\, : \ (P-A)\, u \in L^2(\R^3) \big\} .
\end{equation*}

\smallskip

\noindent Given a set $M$ and two functions $f_1,\, f_2:M\to\R$, we write $f_1(x) \lesssim f_2(x)$ if there exists a numerical constant $c$ such that $f_1(x) \leq c\, f_2(x)$ for all $x\in M$. The symbol $f_1(x) \gtrsim f_2(x)$ is defined analogously. Moreover, we use the notation 
$$
f_1(x) \sim  f_2(x)  \quad \Leftrightarrow \quad f_1(x) \lesssim f_2(x) \ \wedge \ f_2(x) \lesssim f_1(x),
$$
and
\begin{equation} 
\limsup_{|x|\to \infty} f(x) = L \quad \Leftrightarrow \quad \lim_{r\to\infty} \,  \esssup_{|x|\geq r} f(x) = L,
\end{equation}
and similarly for $\liminf_{|x|\to \infty} f(x)$.
We will use $\partial_j=\frac{\partial}{\partial x_j}$ for the usual partial derivatives in the weak sense, i.e., as distributions.  

The scalar product on a Hilbert space $\mathscr{H}$ will be denoted  by $\LL\, \cdot\, , \, \cdot \RR_\mathscr{H}$. If $\mathscr{H}=L^2(\R^3,\C^2)$, we omit the subscript and write  
$$
 \LL \varphi , \, \psi \RR_{L^2(\R^3,\C^2)} =  \LL \varphi , \, \psi \RR \, , \qquad \varphi , \psi \in L^2(\R^3,\C^2). 
$$

Accordingly,  for any $\varphi\in L^r(\R^3,\C^2)$ with $1\le r\le \infty$ we will use the shorthand 
$$
\|\varphi\|_r : =\|\varphi\|_{L^r(\R^3, \C^2)}
$$
for the $L^r$-norm of $\varphi$. By the symbol 
$$
\U_R(x) = \{ y\in \R^3\, : \, |x-y| < R\}
$$
we denote the ball of radius $R$ centered at a point $x\in\R^3$. If $x=0$, we abbreviate 
$
\U_R = \U_R(0) . 
$

Given a Hermitian matrix valued function $\R^3\ni x\mapsto M(x):\C^2\to\C^2$, we denote by $\lambda(x)$ and $\mu(x)$ its eigenvalues. The norm of $M$  is then equal to 
$$
| M(x)|_{\C^2} = \max\big\{ \,|\lambda(x)| , \, |\mu(x)|\, \big\} .
$$
Accordingly we define
\begin{equation} \label{pos-part}
| M(x)_+|_{\C^2} = \max\big\{ \,\lambda(x)_+\,  , \, \mu(x)_+\, \big\} .
\end{equation}

\noindent {\bf Convention:} In the sequel we will use latin letters for functions with values in $\C$, and greek letters for functions with values in $\C^2$. In particular, we will often identify a spinor $\varphi$ with two scalar fields as follows;
\begin{equation} \label{phi-u-v}
\varphi =  \begin{pmatrix}
u \\
v  
\end{pmatrix} . \\[3pt]
\end{equation}
Throughout the paper we will often make use of the polarisation identity which, for the reader's convenience, we now briefly recall; given a sesquilinear form $s$ on a Hilbert space $\mathscr{H}$, and any $\varphi,\psi\in \mathscr{H}$, we have 
\begin{equation} \label{polarisation}
s(\varphi,\psi)  = \frac 14 \Big [ s(\varphi+\psi, \varphi+\psi) -s(\varphi-\psi, \varphi-\psi)  +is(\varphi-i\psi, \varphi-i\psi)  -i s(\varphi+i\psi, \varphi+i\psi) \Big ]. 
\end{equation}


\subsection{The Poincar\'e gauge} 
\noindent  For a given magnetic field $B$ and a point $w\in \R^3$ we define the vector field  $\wti{B}_{w}$ by equation \eqref{B-tilde}, and
put 
\begin{equation}  \label{eq-a}
A_{w}(x) \coloneqq \int_0^1 \wti{B}_w(tx)\, dt  \, , 
\end{equation}
which is the vector potential in the Poincar\'e gauge. Using translations, it is no loss of generality to assume $w=0$, in which case we will simply write $A$ for the vector potential given by \eqref{eq-a}. 
Note that when $w=0$, then $A$ given by \eqref{eq-a} satisfies 
\begin{equation} \label{poincare}
x\cdot A(x) =0 \qquad \forall\ x\in\R^3 .
\end{equation} 
It is easy to see that for $A$ given by \eqref{eq-a} one has $A\in L^2_{\text{loc} }(\R^3,\R^3)$ for bounded magnetic fields $B$ and this extends to a large class of singular magnetic fields, see  \cite[Lem.~2.9]{ahk}. Except otherwise stated, we will always use the Poincar\'e gauge.

\section{\bf Hypotheses} \label{sec-ass}

In this section we formulate general sufficient conditions on $B$ and $V$ under which our main result, Theorem \ref{thm-abs}, holds true.  In  Appendix \ref{sec-pointwise} we will show that all these conditions are satisfied under rather mild  assumptions on $B$ and $V$, see in particular Lemma \ref{lem-pointwise}, Proposition \ref{prop-sufficient} and Proposition \ref{prop-Lp-unif}.

\begin{assumption} \label{ass-V-hermit}
The matrix valued function $V:\R^3\to M(2,\C)$ is Hermitian. i.e.
\begin{equation} \label{V-hermit}
\big(V(x)\big)_{jk} = \overline{\big(V(x)\big)}_{kj} \qquad 	\forall\, x\in\R^3, \quad \forall\, j,k=1,2.
\end{equation} 
 If the potential is split as $V=V^s+V^\ell$, then $V^s$ and $V^\ell$ also satisfy \eqref{V-hermit}.
\end{assumption}

\begin{remark} 
Similarly as in the scalar non-magnetic case, see in particular \cite[Thm.~2.1]{fhhh}, our results could be extended to all matrix valued $V$, possibly non-Hermitian, for which the associated Pauli operator $H_{A,V}$ has real spectrum. For the sake of brevity, we will stick to  Assumption \ref{ass-V-hermit} throughout the paper.  
\end{remark}

 \begin{assumption}\label{ass-B-mild-int} The magnetic field $B$ is such that for some $w\in\R^3$ 
 \begin{align} \label{B-int-cond}
 	\R^3\ni x\, \mapsto\,  |x-w|^{-1}\, \log_+^2\Big( \frac{R}{|x-w|} \Big)\ |\wti{B}_{w}(x)|^2 \in L^1_{\text{loc}}(\R^3)
 \end{align}	
 for all $R>0$. 
 \end{assumption}

 \noindent We have already pointed out that without loss of generality we may assume 
 $w=0$. In view of \cite[Lem.~2.9]{ahk} condition \eqref{B-int-cond} assures that the corresponding vector potential in the Poincar\'e gauge is locally square integrable. The latter property is essential in order to define the Pauli operator through the associated quadratic form.

\subsection{Global relative bounds}
	
\begin{assumption} \label{ass-B-rel-bounded}
The scalar fields $|B|^2$ and $|\wti{B}|^{ 2} $ are relatively form bounded w.r.t.~$(P-A)^2$, where $A$ is the Poincar\'e gauge vector potential corresponding to $B$. That is, 
\begin{equation} \label{ass-B1-eq}
\| |B|  \varphi\|_2^2\, +  \||\wti{B}|  \varphi\|_2^2\ \lesssim \ \|(P-A)\varphi\|_2^2  +  \| \varphi\|_2^2 \qquad \forall\, \varphi\in \D(P-A). 
\end{equation}    
 \end{assumption}

\noindent Here we abuse the notation and use the same symbol $P-A$ for the operator in $L^2(\R^3)$ as well as for the operator in $L^2(\R^3,\C^2)$ acting as $\id (P-A)$.

\begin{assumption} \label{ass-V-form small}
The potential $V$ is relatively form small w.r.t.~$(P-A)^2$, that is,  
there exist constants $\alpha_0 < 1$ and $C$ such that 
\begin{equation} \label{ass-V0-eq}
 |\La \varphi, \, V \varphi \Ra |  \leq  \alpha_0\,  \|(P-A) \varphi\|_2^2  + C \| \varphi\|_2^2 \qquad \forall\, \varphi\in  \D(P-A).
\end{equation} 
\end{assumption}

\noindent In order to control the 
virial $x\cdot\nabla V$, we decompose the potential as $V=V^s+V^\ell$.
The splitting $V=V^s+V^\ell$ is arbitrary, as long as the conditions below are satisfied.

\subsection{ Behaviour at infinity} 
Below we quantify the notions of boundedness and vanishing at infinity w.r.t.~$(P-A)^2$.

\begin{definition}[\bf Boundedness at infinity] 
\label{def-bounded infinity}
A potential $W$ is bounded from above at infinity 
with respect to $(P-A)^2$ if for some $R_0>0$ its 
quadratic form domain contains 
all $\varphi\in\D(P-A)$ with $\supp(\varphi)\in \U_{R_0}^{\,c}$ 
and for $R\ge R_0$ there exist positive  $\alpha_R, \gamma_R$ 
with $\lim_{R\to\infty}\alpha_R=0$ and 
$\liminf_{\R\to\infty}\gamma_R<\infty$ such that 
	\begin{align}\label{eq-bounded infinity}
		\La \varphi, W\varphi \Ra  \le 
		\alpha_R \|(P-A)\varphi\|_2^2 + \gamma_R\|\varphi\|_2^2 \quad \text{for all } \varphi\in\D(P-A) 
		\text{ with } \supp(\varphi)\subset \U_R^c
	\end{align}
	 By monotonicity  we may assume, without loss of 
	generality,  that  $\alpha_R$ and $\gamma_R$ are 
	decreasing in $R\ge R_0$.
	\end{definition}

\smallskip

\begin{assumption} \label{ass-V-vanishing infinity} The positive part of the potential $V$ vanishes  
at infinity w.r.t.~$(P-A)^2$ in the following sense: there exist positive  $\alpha_R, \gamma_R$ with 
	$\alpha_R, \gamma_R \to 0$ as $R\to\infty$ such that 
	\begin{align}
	\la \varphi, V\varphi \ra_+  \le 
		\alpha_R \|(P-A)\varphi\|_2^2 + \gamma_R\|\varphi\|_2 \quad \text{for all } \varphi\in\D(P-A) 
		\text{ with } \supp(\varphi)\subset \U_R^c\, .
	\end{align}  
Moreover, if we split $V=V^s+V^\ell$, then also the positive parts of $V^s$ and $V^\ell$ vanishe at infinity in the sense defined above.
By monotonicity  we may assume, without loss of 
generality,  that  $\alpha_R$ and $\gamma_R$ are decreasing in $R\ge R_0$.
\end{assumption}

\smallskip

\begin{assumption} \label{ass-V-bounded-infinity} The potential $V$ is bounded 
at infinity w.r.t.~$(P-A)^2$ in the sense of Definition \ref{def-bounded infinity}. Moreover, if we split $V=V^s+V^\ell$, then also $V^s$  is bounded at infinity w.r.t~$(P-A)^2$ in the sense of Definition \ref{def-bounded infinity}.
\end{assumption}

\begin{assumption} \label{ass-bounded infinity} 
There exist positive sequences $(\eps_j)_{j}, (\beta_j)_{j}$ and $(R_j)_{j}$ with $\eps_j \to 0$ and $R_j\to\infty$ as $j\to\infty$, such that for all $\varphi\in \D(P-A)$ with ${\rm supp}(\varphi) \subset \U_j^{\, c}=\{x\in\R^3: |x|\ge R_j\}$ 
\begin{align} \label{ass-B-inft}
\| | \wti{B}|  \varphi\|_2^2  \  & \leq \   \eps_j\,  \|(P-A) \varphi\|_2^2  + \beta_j^2 \, \| \varphi\|_2^2 
\end{align} 
For the decomposition $V=V^s+V^\ell$ of the potential, we also assume that there exist positive sequences 
$(\omega_{1,j})_{j}$  and $(\omega_{2,j})_{j}$ 
such that for all $\varphi\in \D(P-A)$ with ${\rm supp}(\varphi) \subset \U_j^{\, c}$
\begin{align} 
\| x\,  V^s \varphi\|_2^2   \  & \leq \   \eps_j\,  \|(P-A) \varphi\|_2^2  + \omega_{1,j}^{\, 2} \, \| \varphi\|_2^2 \\
\La \varphi,  x\cdot \nabla V^\ell \varphi \Ra\ &  \leq \   \eps_j\,  \|(P-A) \varphi\|_2^2  + \omega_{2,j}\,  \| \varphi\|_2^2  \label{ass-x nablaV_2-inft}
\end{align}  
\end{assumption}

\noindent By monotonicity  we may assume, without loss of 
generality,  that the sequences $\beta_j$, $\omega_{1,j}$, 
and $\omega_{2,j}$ in Assumptions \ref{ass-bounded infinity} are decreasing. We define
\begin{equation} \label{def-beta-omega}
\beta \coloneqq \lim_{j\to \infty} \beta_j \, , 
\qquad \omega_k\coloneqq \lim_{j\to \infty} \omega_{k,j} \,  , \quad k=1,2.
\end{equation}

\subsection{Unique continuation at infinity.} 
\noindent For a unique continuation type argument at infinity, we also need a quantitative version of relative form boundedness.  

\begin{assumption} \label{ass-bounded unique continuation}
 If $V=V^s+V^\ell$, then we assume  that for all $\varphi\in\D(P-A)$
 \begin{align} 
  \| |\wti{B} |  \varphi\|_2^2 +\|xV^s\varphi\|_2^2  \
  	&  \leq \   \frac{\alpha_1^2}{4}\,  \|(P-A) \varphi\|_2^2  + C \| \varphi\|_2^2 , \label{ass-B2-eq}\\
  \La \varphi,    x\cdot \nabla V^\ell\,  \varphi \Ra\  
    &  \leq \   \alpha_2\,  \|(P-A) \varphi\|_2^2  + C\| \varphi\|_2^2 , \label{ass-V1-eq} \\
  |\La \varphi,  V^s\, \varphi \Ra|\  & \leq \   \alpha_3\,  \|(P-A) \varphi\|_2^2  + C \| \varphi\|_2^2   \label{ass-xV-quantitative}
 \end{align} 
 for some $ C>0$ and $\alpha_j$ such that
\begin{equation} \label{alphas}
  \alpha_1 + \alpha_2 +3\alpha_3< 1. 
\end{equation} 
\end{assumption} 

\begin{remark}
By the diamagnetic inequality
 \begin{equation}  \label{diamagnetic-inequality}
\big | P |\varphi | \big | \leq \big | (P-A)\, \varphi \big | \quad \text{a.e.} \qquad \text{for all } \varphi\in\D(P-A), 
 \end{equation}
 see e.g.~\cite{kato-2}, it suffices to verify the conditions of Assumptions \ref{ass-B-rel-bounded}-\ref{ass-bounded unique continuation} with $(P-A)$ replaced by $P$.
\end{remark}

\smallskip



\section{\bf Preliminary results} \label{sec-aux}
\noindent In this section we collect several technical results which will be needed later. 

\begin{lemma} \label{lem-1}
Let $A\in L_{\rm loc}^2(\R^3)$ and let $B= \curl\, A$. Suppose moreover that $B$ satisfy Assumption \ref{ass-B-rel-bounded}. Then 
$$
\LL \sigma\cdot (P-A)\varphi,\,  \sigma\cdot (P-A)\varphi \RR= \| (P-A)\varphi\|^2_2 +\LL \varphi , \, \sigma \cdot B\, \varphi \RR   \qquad \forall\ \varphi \in \h^1(\R^3, \C^2).
$$
\end{lemma}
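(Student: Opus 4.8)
The plan is to prove this Lichnerowicz--Weitzenböck-type identity for the Pauli operator by a direct algebraic computation using the multiplication relations \eqref{pauli-re-1} for the Pauli matrices, combined with a density argument to handle the a priori non-smoothness of $A$ and $\varphi$. Formally, one expands
\[
\sigma\cdot(P-A)\,\sigma\cdot(P-A) = \sum_{j,k} \sigma_j\sigma_k\, (P_j-A_j)(P_k-A_k)
= \sum_{j,k}\Big(\delta_{jk}\id + i\sum_m \eps_{jkm}\sigma_m\Big)(P_j-A_j)(P_k-A_k),
\]
so that the diagonal part gives $(P-A)^2\,\id$ and the antisymmetric part produces
$\tfrac{i}{2}\sum_{j,k,m}\eps_{jkm}\sigma_m[(P_j-A_j),(P_k-A_k)]$. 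Since $[(P_j-A_j),(P_k-A_k)] = i(\partial_j A_k - \partial_k A_j)$ in the distributional sense, and $B_m = (\curl A)_m = \sum_{j,k}\eps_{mjk}\partial_j A_k$, this collapses to $-\sigma\cdot B$ — wait, tracking signs carefully one lands on $+\sigma\cdot B$, giving the claimed identity at the formal level.

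The first genuine step is to make sense of the left-hand side and of $\LL\varphi,\sigma\cdot B\,\varphi\RR$ for $\varphi\in\h^1(\R^3,\C^2)$: the form $\LL\varphi,\sigma\cdot B\,\varphi\RR$ is well-defined and finite precisely because of Assumption \ref{ass-B-rel-bounded}, which gives $\||B|\varphi\|_2 \lesssim \|\varphi\|_{\h^1}$, so $|\LL\varphi,\sigma\cdot B\,\varphi\RR|\le \||B|^{1/2}\varphi\|_2\,\||B|^{1/2}\varphi\|_2 \lesssim \|\varphi\|_{\h^1}^2$ — actually more simply $|\LL\varphi,\sigma\cdot B\,\varphi\RR|\le \||B|\varphi\|_2\|\varphi\|_2$. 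The second step is the density argument: I would first establish the identity for $\varphi\in C_c^\infty(\R^3,\C^2)$ (or for a smooth mollified $A$), where all manipulations with distributional derivatives and commutators are legitimate, then pass to the limit. Both sides are continuous in $\varphi$ with respect to the $\h^1$-norm — the left-hand side manifestly (it is $\|\sigma\cdot(P-A)\varphi\|_2^2$, and $\|\sigma\cdot(P-A)\varphi\|_2 = \|(P-A)\varphi\|_2$ since $\sigma\cdot\xi$ has $|\sigma\cdot\xi|^2 = |\xi|^2$ pointwise... though that pointwise algebraic fact is exactly a special case of what we are proving, so one should argue the continuity of each term separately using Assumption \ref{ass-B-rel-bounded}), and the right-hand side by the relative form bound. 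The main subtlety is therefore that $C_c^\infty(\R^3,\C^2)$ is a core for $P-A$: this needs $A\in L^2_{\mathrm{loc}}$, which holds in the Poincaré gauge under Assumption \ref{ass-B-mild-int}, and is a standard fact (e.g. via \cite[Lem.~2.9]{ahk} or the diamagnetic/mollification machinery) that I would invoke rather than reprove.

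The step I expect to be the main obstacle is the careful justification of the integration by parts / commutator identity $\LL (P_j - A_j)\varphi, (P_k - A_k)\varphi\RR$ manipulations at the level of regularity we have: even on $C_c^\infty$ one must know $\partial_j A_k$ makes sense, and to combine the $j,k$ and $k,j$ terms one rewrites $\sigma_j\sigma_k (P_j-A_j)(P_k-A_k)$ summed over all $j,k$ by symmetrizing, picking up $\tfrac12\sum_{j,k}\sigma_j\sigma_k\big((P_j-A_j)(P_k-A_k) + (P_k-A_k)(P_j-A_j)\big) + \tfrac12\sum_{j,k}\sigma_j\sigma_k[(P_j-A_j),(P_k-A_k)]$; the symmetric sum contracts against $\delta_{jk}$ only and the commutator sum against $\eps_{jkm}$ only. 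I would carry this out first with $A$ replaced by a smooth approximation $A_\eta = A*\rho_\eta$, for which $B_\eta = \curl A_\eta = B*\rho_\eta$, prove the identity with $(P-A_\eta)$ and $B_\eta$, and then let $\eta\to 0$ using Assumption \ref{ass-B-rel-bounded} (uniformly in $\eta$, after checking the relative bound is stable under mollification, or simply using that $A_\eta\to A$ in $L^2_{\mathrm{loc}}$ and $B_\eta\to B$ suitably) to recover the stated identity; alternatively one cites the corresponding scalar computation in \cite{ahk} and only tracks the extra $\sigma$-algebra. Either way the content is bookkeeping, so in the write-up I would present the formal computation with \eqref{pauli-re-1}, note the commutator identity $[(P_j-A_j),(P_k-A_k)] = i(\partial_jA_k-\partial_kA_j)$ and $\sum_{j,k}\eps_{jkm}\partial_jA_k = B_m$, and then dispatch the approximation by a one-sentence appeal to density of $C_c^\infty$ in $\h^1$ together with Assumption \ref{ass-B-rel-bounded}.
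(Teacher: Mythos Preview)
Your approach is essentially the same as the paper's: a direct algebraic computation using the Pauli relation \eqref{pauli-re-1} (the paper also cites \eqref{pauli-re-2}) to expand $\sigma_j\sigma_k(P_j-A_j)(P_k-A_k)$ and identify the antisymmetric part with $\sigma\cdot B$. The paper's entire proof is the single sentence ``the claim follows by a direct calculation from \eqref{pauli-re-1} and \eqref{pauli-re-2}'', so your density/mollification discussion and the appeal to Assumption~\ref{ass-B-rel-bounded} for continuity of both sides on $\h^1$ simply supply rigor that the paper leaves implicit.
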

\begin{proof}
The claim follows by a direct calculation from \eqref{pauli-re-1} and \eqref{pauli-re-2}.
\end{proof}

\begin{lemma} \label{lem-2a}
Let $B$ satisfy Assumption \ref{ass-B-rel-bounded}. 
For any $\eta >0$ there exists $C_\eta\in\R$ such that 
\begin{equation} \label{sigma-bound}
\|  (P-A)\varphi\|_2^2 \, \leq \, (1+\eta) \|\sigma\cdot (P-A)\varphi\|_2^2 +C_\eta \|\varphi\|_2^2 
\end{equation}
holds for all $\varphi\in\D(P-A)$.
\end{lemma}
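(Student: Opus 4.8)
The plan is to deduce \eqref{sigma-bound} from Lemma \ref{lem-1}, which gives the exact identity
$$
\|\sigma\cdot(P-A)\varphi\|_2^2 = \|(P-A)\varphi\|_2^2 + \La \varphi, \sigma\cdot B\,\varphi\Ra
$$
for all $\varphi\in\h^1(\R^3,\C^2)$. Rearranging, $\|(P-A)\varphi\|_2^2 = \|\sigma\cdot(P-A)\varphi\|_2^2 - \La\varphi,\sigma\cdot B\,\varphi\Ra$, so the task reduces to absorbing the term $-\La\varphi,\sigma\cdot B\,\varphi\Ra$ into a small multiple of $\|\sigma\cdot(P-A)\varphi\|_2^2$ plus a constant times $\|\varphi\|_2^2$.

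First I would bound $|\La\varphi,\sigma\cdot B\,\varphi\Ra|$. Pointwise, $\sigma\cdot B(x)$ is a Hermitian $2\times2$ matrix whose operator norm on $\C^2$ equals $|B(x)|$ (since $(\sigma\cdot B)^2 = |B|^2\,\id$ by \eqref{pauli-re-1}), hence $|\La\varphi,\sigma\cdot B\,\varphi\Ra| \le \||B|\,\varphi\|_2\,\|\varphi\|_2 \le \tfrac12\delta\,\||B|\,\varphi\|_2^2 + \tfrac{1}{2\delta}\|\varphi\|_2^2$ for any $\delta>0$. By Assumption \ref{ass-B-rel-bounded}, $\||B|\,\varphi\|_2^2 \le c\big(\|(P-A)\varphi\|_2^2 + \|\varphi\|_2^2\big)$ for some fixed constant $c$. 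Combining, for any $\delta>0$ there is a constant $C(\delta)$ with
$$
|\La\varphi,\sigma\cdot B\,\varphi\Ra| \le \tfrac{c\delta}{2}\,\|(P-A)\varphi\|_2^2 + C(\delta)\,\|\varphi\|_2^2 .
$$

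Now substitute into the rearranged identity: $\|(P-A)\varphi\|_2^2 \le \|\sigma\cdot(P-A)\varphi\|_2^2 + \tfrac{c\delta}{2}\|(P-A)\varphi\|_2^2 + C(\delta)\|\varphi\|_2^2$. Choosing $\delta$ small enough that $\tfrac{c\delta}{2} \le \tfrac12$ (say), we can absorb the middle term on the left, obtaining $\tfrac12\|(P-A)\varphi\|_2^2 \le \|\sigma\cdot(P-A)\varphi\|_2^2 + C(\delta)\|\varphi\|_2^2$, i.e. $\|(P-A)\varphi\|_2^2 \le 2\|\sigma\cdot(P-A)\varphi\|_2^2 + 2C(\delta)\|\varphi\|_2^2$. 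To get the sharp constant $(1+\eta)$ in front of $\|\sigma\cdot(P-A)\varphi\|_2^2$ for arbitrary $\eta>0$, I would be more careful: keep a free small parameter $\delta$ and, after absorbing, track that the coefficient of $\|\sigma\cdot(P-A)\varphi\|_2^2$ is $\big(1-\tfrac{c\delta}{2}\big)^{-1}$, which tends to $1$ as $\delta\to0$; given $\eta>0$, pick $\delta$ so that $\big(1-\tfrac{c\delta}{2}\big)^{-1}\le 1+\eta$, and set $C_\eta = C(\delta)\big(1-\tfrac{c\delta}{2}\big)^{-1}$. A density argument extends the inequality from $\h^1(\R^3,\C^2)$ to all of $\D(P-A)$ if needed (here these coincide by \eqref{wB}). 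The only mild technical point — not really an obstacle — is making sure the bound from Assumption \ref{ass-B-rel-bounded} is applied with its fixed constant so that the $\delta$-dependence is transparent; everything else is elementary.
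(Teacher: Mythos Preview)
Your proposal is correct and follows essentially the same route as the paper: apply Lemma~\ref{lem-1}, bound $|\La\varphi,\sigma\cdot B\,\varphi\Ra|$ via Cauchy--Schwarz and Assumption~\ref{ass-B-rel-bounded}, then absorb the resulting $\eps\|(P-A)\varphi\|_2^2$ term and set $(1-\eps)^{-1}=1+\eta$. Your version is slightly more explicit about the Young-inequality splitting and the tracking of the constant, but the argument is the same.
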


\begin{proof}
Let $\varphi\in\D(P-A)$. A short calculation shows that 
\begin{equation} \label{sigma-w}
\|\sigma\cdot w\, \varphi\|_2 =\| |w| \, \varphi\|_2\qquad \forall\, w\in\C^3.
\end{equation}
Hence by \eqref{ass-B1-eq}, Lemma \ref{lem-1} and Cauchy-Schwarz inequality,
\begin{align*}
\|  (P-A)\varphi\|_2^2 & = \|\sigma\cdot (P-A)\varphi\|_2^2 - \LL \varphi , \, \sigma \cdot B\, \varphi \RR \leq   \|\sigma\cdot (P-A)\varphi\|_2^2 +\| |B| \, \varphi\|_2  \|\varphi\|_2 \\
& \leq  \|\sigma\cdot (P-A)\varphi\|_2^2 +\eps \| (P-A) \, \varphi\|_2^2 +C_\eps \|\varphi\|_2^2
\end{align*}
for any $0<\eps<1$ and some $C_\eps$, independent of $\varphi$. Inequality \eqref{sigma-bound} now follows upon setting $\frac{1}{1-\eps} = 1+\eta$. 
\end{proof}

\smallskip

\noindent An immediate consequence of Lemma \ref{lem-2a} is the following 

\begin{corollary} \label{cor-uc}
Let $B$ and $V$ satisfy Assumptions \ref{ass-B-rel-bounded},  \ref{ass-V-bounded-infinity} and \ref{ass-bounded unique continuation}. Then for all $\varphi\in\D(P-A)$, 
 \begin{align*} 
  |\La \varphi, \, V \varphi \Ra |  &  \leq \    \alpha_0\,  \|(P-A) \varphi\|_2^2  + C_0\| \varphi\|_2^2 \\
  \| \wti{B}  \varphi\|_2^2 +\|xV^s\varphi\|_2^2  
  	&  \leq \  \displaystyle \frac{\alpha_1^2}{4} \,  \|(P-A) \varphi\|_2^2  + C_1 \| \varphi\|_2^2 ,\\
  \La \varphi,    x\cdot \nabla V^\ell\,  \varphi \Ra\  
    &  \leq \   \alpha_2\,  \|(P-A) \varphi\|_2^2  + C_2 \| \varphi\|_2^2 , \label{ass-V1-eq} \\
  |\La \varphi,  V^s\, \varphi \Ra|\  & \leq \   \alpha_3\,  \|(P-A) u\|_2^2  + C_3 \| \varphi\|_2^2  
 \end{align*} 
where $\alpha_0<1$,  and $\alpha_j,\, j=1,2,3$  satisfy \eqref{alphas}.
\end{corollary}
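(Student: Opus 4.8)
The plan is to show that Corollary \ref{cor-uc} is essentially a restatement of Assumption \ref{ass-bounded unique continuation} once one replaces the bare Laplacian-type quantity $\|P\varphi\|_2^2$ by the magnetic one $\|(P-A)\varphi\|_2^2$, using the extra flexibility granted by Lemma \ref{lem-2a} together with the global relative bounds of Assumptions \ref{ass-B-rel-bounded} and \ref{ass-V-form small}. First I would recall that Assumptions \ref{ass-V-form small} and \ref{ass-bounded unique continuation} already give the four inequalities in the statement, but with the \emph{strict} constant $\alpha_1^2/4$, $\alpha_2$, $\alpha_3$ replaced by slightly larger ones and, more importantly, with $\|(P-A)\varphi\|_2^2$ possibly replaced by $\|\sigma\cdot(P-A)\varphi\|_2^2$ or by an auxiliary term. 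The role of Lemma \ref{lem-2a} is precisely to trade $\|(P-A)\varphi\|_2^2$ for $(1+\eta)\|\sigma\cdot(P-A)\varphi\|_2^2 + C_\eta\|\varphi\|_2^2$ with $\eta>0$ arbitrarily small, so that the strict inequality \eqref{alphas} is preserved: choosing $\eta$ small enough that $(1+\eta)(\alpha_1+\alpha_2+3\alpha_3)<1$ still holds, one absorbs the loss into the $\|\varphi\|_2^2$ term at the cost of enlarging the constant $C$.

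The main structural point is the first line, $|\La\varphi,V\varphi\Ra|\le\alpha_0\|(P-A)\varphi\|_2^2+C_0\|\varphi\|_2^2$ with $\alpha_0<1$: this is literally Assumption \ref{ass-V-form small}, so nothing is needed there beyond quoting it. For the second line I would start from \eqref{ass-B2-eq}, which bounds $\||\wti B|\varphi\|_2^2+\|xV^s\varphi\|_2^2$ by $\tfrac{\alpha_1^2}{4}\|(P-A)\varphi\|_2^2+C\|\varphi\|_2^2$ already in the magnetic form — so in fact here one may take the constants directly, the only cosmetic change being $\||\wti B|\varphi\|_2=\|\wti B\varphi\|_2$ (these agree since $\wti B$ is $\R^3$-valued and one uses \eqref{sigma-w}-type scalar identities). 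The third and fourth lines, \eqref{ass-V1-eq} and \eqref{ass-xV-quantitative}, are identical to the corresponding lines of Assumption \ref{ass-bounded unique continuation} verbatim, with $\alpha_2,\alpha_3$ unchanged and the constants $C_2,C_3$ equal to $C$. Hence the corollary reduces to checking that the four constants $\alpha_0,\alpha_1,\alpha_2,\alpha_3$ can be taken to be those appearing in \eqref{alphas}, which is immediate, and that $\alpha_0<1$, which is Assumption \ref{ass-V-form small}.

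Concretely, the steps in order: (1) quote Assumption \ref{ass-V-form small} for the first inequality, setting $C_0:=C$ from \eqref{ass-V0-eq}; (2) quote \eqref{ass-B2-eq} for the second, identifying $\||\wti B|\varphi\|_2=\|\wti B\varphi\|_2$ and setting $C_1:=C$; (3) quote \eqref{ass-V1-eq} and \eqref{ass-xV-quantitative} for the third and fourth, setting $C_2:=C$, $C_3:=C$; (4) observe that the resulting $\alpha_j$ are exactly those of Assumption \ref{ass-bounded unique continuation}, so \eqref{alphas} holds by hypothesis and $\alpha_0<1$ by Assumption \ref{ass-V-form small}; (5) if at any point one wishes to pass from $\|\sigma\cdot(P-A)\varphi\|_2^2$ to $\|(P-A)\varphi\|_2^2$ (e.g.\ because one reformulated the hypotheses via $\sigma\cdot(P-A)$), invoke Lemma \ref{lem-2a} with $\eta$ chosen so small that $(1+\eta)(\alpha_1+\alpha_2+3\alpha_3)<1$, absorbing $C_\eta\|\varphi\|_2^2$ into the remainder constants. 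I do not expect a genuine obstacle here; the only mild subtlety is bookkeeping — making sure that the single constant $C$ on the right-hand side of each line can be chosen uniformly (take the maximum of the finitely many constants produced) and that Assumption \ref{ass-V-bounded-infinity}, although listed in the hypotheses, is not actually needed for these particular bounds (it is listed because the corollary is stated for the same cluster of hypotheses used elsewhere). Thus the proof is essentially a one-line ``follows from Lemma \ref{lem-2a} and Assumptions \ref{ass-V-form small}, \ref{ass-bounded unique continuation}'' with the constants tracked.
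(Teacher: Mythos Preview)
Your proposal is correct and matches the paper's approach: the paper itself gives no separate proof, simply introducing the corollary as ``an immediate consequence of Lemma~\ref{lem-2a}'', and your analysis that the four inequalities are essentially verbatim restatements of Assumptions~\ref{ass-V-form small} and~\ref{ass-bounded unique continuation} (with Lemma~\ref{lem-2a} available to pass to $\|\sigma\cdot(P-A)\varphi\|_2^2$ when needed in later applications) is exactly right. Your observation that Assumption~\ref{ass-V-bounded-infinity} is listed but not actually used, and that the first line really requires Assumption~\ref{ass-V-form small} (not formally among the stated hypotheses), is also accurate---this appears to be a minor bookkeeping slip in the paper.
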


\noindent Another consequence of Lemma \ref{lem-2a} is the identity
\begin{equation}  \label{eq-domains}
\D(P-A) = \D(\sigma\cdot (P-A))\, ,
\end{equation}
which holds whenever Assumption \ref{ass-B-rel-bounded} is satisfied. This allows us to define the sesquilinear form 
\begin{equation} \label{q0}
Q_{A,0} (\varphi, \psi) = \LL \sigma\cdot (P-A)\varphi,\,  \sigma\cdot (P-A)\psi \RR, \qquad \varphi, \psi \in \D(P-A) .
\end{equation}
By standard arguments one verifies that the quadratic form $Q_{A,0}(\varphi, \varphi)$ is closed. 
In view of Lemma \ref{lem-2a}  and Assumption \ref{ass-V-form small} the quadratic form associated to 
\begin{equation} \label{q}
Q_{A,V}(\varphi,  \psi) = Q_{A,0} (\varphi, \psi)  + \LL \varphi, V \psi\RR , \qquad \varphi,\psi \in \D(P-A) 
\end{equation}
is then closed as well.  Now we can define the Hamiltonians $H_{A,0}$ and $H_{A,V}$ as the unique self-adjoint operators associated to $Q_{A,0}$ and $Q_{A,V}$ respectively. 
	
For the next result we need to introduce some additional notation. Given a vector field $v: \R^3\to \R^3$ we define the operator 
\begin{equation} \label{op-D}
D_v = \frac 12 \big(v\cdot P +P\cdot v), \qquad D: = D_x \quad \text{if} \quad v= x.
\end{equation}

\begin{lemma} \label{lem-2}
Let $B$ satisfy Assumption \ref{ass-B-mild-int}. Let $g, F\in C^1(\R^3)$ $g, F\in C^1(\R^3; \R)$ be radial functions  such that $\nabla F = x g$, and such that $x\cdot \nabla g$ and $|x| g$ are bounded. Put $v = \nabla F$. Then 
\begin{equation} \label{domains-q}
  \D(P-A) \subset {\D(D_v) = \D(g D)}
\end{equation}
\end{lemma}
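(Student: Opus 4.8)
The statement is that $\D(P-A)\subset\D(D_v)=\D(gD)$ where $v=\nabla F=xg$. First I would record that $D_v=\tfrac12(v\cdot P+P\cdot v)=v\cdot P+\tfrac12(\Div v)$ in the distributional sense, so since $v=xg$ with $g$ radial, $\Div v = 3g + x\cdot\nabla g$, which is bounded by hypothesis; hence the difference $D_v - v\cdot P$ is a bounded multiplication operator, and it suffices to understand $v\cdot P = g\, x\cdot P$ and $gD$ on $\D(P-A)$. Likewise $D = x\cdot P - \tfrac{3i}{2}$ (since $\Div x=3$), so $gD$ and $g\,x\cdot P$ differ by the bounded multiplier $-\tfrac{3i}{2}g$ (recall $|x|g$, hence $g$ near infinity, is bounded, and $g$ is continuous, so $g\in L^\infty$); thus $\D(gD)=\D(g\,x\cdot P)$ and the whole claim reduces to showing $\D(P-A)\subset \D(g\,x\cdot P)$, i.e. that $g\,x\cdot P\,\varphi\in L^2$ whenever $\varphi\in\D(P-A)$.

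The key trick is the Poincar\'e gauge identity \eqref{poincare}: $x\cdot A(x)=0$, so $g\,x\cdot P = g\,x\cdot(P-A)$ as operators. Therefore it is enough to bound $\|g\,x\cdot(P-A)\varphi\|_2$ for $\varphi\in\D(P-A)$. Since $|x|g$ is bounded, say $\||x|g\|_\infty =: c_0<\infty$, and $|x\cdot(P-A)\varphi|\le |x|\,|(P-A)\varphi|$ pointwise, we get the pointwise bound $|g\,x\cdot(P-A)\varphi| = |g|\,|x\cdot(P-A)\varphi|\le (|x|\,|g|)\,|(P-A)\varphi|\le c_0\,|(P-A)\varphi|$ a.e., hence $\|g\,x\cdot(P-A)\varphi\|_2\le c_0\|(P-A)\varphi\|_2<\infty$. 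This shows $\varphi\in\D(g\,x\cdot P)=\D(gD)=\D(D_v)$, giving the inclusion.

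There is one technical point I should be careful about: the manipulations above are formally clean for $\varphi$ in a core, so the honest argument is to first establish the inequality $\|D_v\varphi\|_2\lesssim \|(P-A)\varphi\|_2+\|\varphi\|_2$ on a core of $\D(P-A)$ (e.g.\ on $C_c^\infty$ if available, or using that $B$ satisfies Assumption \ref{ass-B-mild-int} so that the Poincar\'e-gauge $A\in L^2_{\rm loc}$ and $C_c^\infty$ is a form core), and then extend by density: if $\varphi_n\to\varphi$ in the graph norm of $P-A$, then $D_v\varphi_n$ is Cauchy in $L^2$, $D_v$ is a closed (or at least closable) operator, so the limit lies in $\D(D_v)$ and equals $D_v\varphi$. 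The only place the hypothesis ``$x\cdot\nabla g$ bounded'' genuinely enters is in controlling $\Div v$; ``$|x|g$ bounded'' enters in the main estimate and in $g\in L^\infty$. The main obstacle, such as it is, is purely the bookkeeping of distributional identities and the density/closedness argument — there is no deep analytic difficulty, the Poincar\'e gauge condition $x\cdot A=0$ does all the real work by converting $x\cdot P$ into $x\cdot(P-A)$.
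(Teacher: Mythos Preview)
Your proof is correct and follows essentially the same approach as the paper: both show that $D_v$ and $gD$ differ by a bounded multiplication operator (the paper computes $2D_v=2gD-i\,x\cdot\nabla g$ directly), then use the Poincar\'e gauge identity $x\cdot A=0$ together with the boundedness of $|x|g$ to conclude that $g\,x\cdot P=g\,x\cdot(P-A)$ maps $\D(P-A)$ into $L^2$. Two minor remarks: the correct identity is $D_v=v\cdot P-\tfrac{i}{2}\Div v$ (not $+\tfrac12\Div v$), and the density/closedness worry you raise is unnecessary, since the distributional identities hold for any $\varphi\in L^2$ and the inclusion follows directly without approximation.
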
 

\begin{proof}
In the sense of distributions,
$$
2 D_v = gx \cdot P + P\cdot (g x) = gx\cdot P +g P\cdot x -ix\cdot \nabla g = 2g D -ix\cdot \nabla g.
$$
So if $\varphi \in \D(g D)$ and $x\cdot\nabla g$ is bounded, then 
$$
D_v\varphi  = g D \varphi -i x\cdot \nabla g \varphi \in L^2(\R^3,\C^2).
$$
Hence $\varphi\in \D(D_v)$. Conversely, if  $\varphi \in \D(D_v)$ and $x\cdot\nabla g$ is bounded, then  $\varphi \in \D(g D)$. 
This proves the equality   $\D(D_v) = \D(g D)$. 
 Moreover, since $x\cdot A(x) =0$, 
$$
2 g D = g (x\cdot P+ P\cdot x)= 2g x\cdot P -3i = 2g x\cdot (P-A) -3i.
$$
So if $\varphi\in \D(P-A)$, and $|x| g$ is bounded, then $g D \varphi \in L^2(\R^3,\C^2)$. Hence $  \D(P-A) \subset \D(g D) =\D(D_v)  $.
\end{proof}


\smallskip

\begin{lemma} \label{lem-3}
Under the assumptions of Lemma \ref{lem-2}, 
\begin{equation} \label{eq-lem-3}
{\rm Re} \LL (\sigma \cdot v)\ \varphi ,\, \sigma\cdot (P-A)\, \varphi \RR = \LL \varphi, \, D_v \id \varphi\RR \qquad \forall\ \varphi\in  \D(P-A).
\end{equation}
\end{lemma}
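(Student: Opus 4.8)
The plan is to prove the identity \eqref{eq-lem-3} by a direct computation using the algebraic relation \eqref{pauli-re-1} for the Pauli matrices, together with the Poincar\'e gauge condition \eqref{poincare}. First I would expand $(\sigma\cdot v)(\sigma\cdot(P-A))$ using \eqref{pauli-re-1}: writing $v=\nabla F=xg$ and abbreviating $\Pi=P-A$, one has
$$
(\sigma\cdot v)(\sigma\cdot \Pi) = v\cdot\Pi\,\id + i\,\sigma\cdot(v\wedge \Pi),
$$
where the wedge product is interpreted componentwise with the appropriate operator ordering dictated by the Levi--Civita symbol, i.e.\ the $m$-th component is $\sum_{j,k}\eps_{jkm} v_j\Pi_k$. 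Since $v=xg$ is a (scalar multiple of a) radial vector field and $x\cdot A=0$, the cross term $x\wedge \Pi$ simplifies: $(x\wedge P)$ is the angular momentum, and $x\wedge A = 0$ because $A$ is parallel to $\wti B_w(tx)$ integrated, which by \eqref{poincare} is perpendicular to $x$... more carefully, I expect $x\wedge A$ need not vanish, so I would instead argue that the contribution of the cross term to the real part of the inner product vanishes by antisymmetry/self-adjointness considerations.

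The key observation is that $\sigma$ is Hermitian and $v$ is real-valued, so $\sigma\cdot v$ is a self-adjoint multiplication operator; hence
$$
\re\LL (\sigma\cdot v)\varphi,\ \sigma\cdot\Pi\,\varphi\RR = \re\LL \varphi,\ (\sigma\cdot v)(\sigma\cdot\Pi)\varphi\RR = \tfrac12\LL\varphi,\ \big[(\sigma\cdot v)(\sigma\cdot\Pi) + (\sigma\cdot\Pi)(\sigma\cdot v)\big]\varphi\RR,
$$
using that $\Pi$ is symmetric (and that $\varphi\in\D(P-A)$ lies in the form domain, which by Lemma \ref{lem-2} is contained in $\D(D_v)$, so all the pairings make sense). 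Now I expand the anticommutator $(\sigma\cdot v)(\sigma\cdot\Pi)+(\sigma\cdot\Pi)(\sigma\cdot v)$ via \eqref{pauli-re-1}: the $\eps_{jkm}$ terms from the two orderings combine into commutators $[v_j,\Pi_k]$ which are first-order (involving $\partial_j v_k = \partial_j(x_k g)$), while the $\delta_{jk}$ terms give $v\cdot\Pi + \Pi\cdot v = v\cdot P + P\cdot v = 2D_v$ (the $A$ contributions cancel since $2 v\cdot A = 2g\, x\cdot A = 0$). So I must show the remaining antisymmetric piece $i\,\sigma_m\,\eps_{jkm}[v_j,\Pi_k] = i\,\sigma_m\,\eps_{jkm}(-i\partial_j v_k) = \sigma_m\,\eps_{jkm}\partial_j(x_k g)$ vanishes. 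Since $\partial_j(x_k g) = \delta_{jk} g + x_k\partial_j g$ and $\partial_j g = (x_j/|x|) g'(|x|)$ for radial $g$, both $\delta_{jk}g$ and $x_jx_k g'/|x|$ are symmetric in $j,k$ and are therefore annihilated by $\eps_{jkm}$. This yields exactly $\LL\varphi, D_v\,\id\,\varphi\RR$, completing the proof.

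The main obstacle I anticipate is not the algebra but the \emph{domain/regularity bookkeeping}: one must justify moving $\sigma\cdot v$ across the inner product and expanding the operator products on the nonsmooth vector potential $A\in L^2_{\rm loc}$. The clean way is to first establish \eqref{eq-lem-3} for $\varphi$ in a dense core of smooth compactly supported spinors (where all manipulations with \eqref{pauli-re-1}, the commutators $[v_j,\Pi_k]$, and integration by parts are classical), and then pass to general $\varphi\in\D(P-A)$ by a density/closure argument — using Lemma \ref{lem-2} to guarantee $\D(P-A)\subset\D(D_v)$ so that the right-hand side is continuous, and using the hypotheses that $x\cdot\nabla g$ and $|x|g$ are bounded (hence $v=xg$ is bounded and $\nabla v$ is controlled) so that the left-hand side $(\sigma\cdot v)\varphi$ is a bounded multiple of $\varphi$ and the pairing with $\sigma\cdot(P-A)\varphi$ is continuous in $\varphi$ with respect to the $\h^1$-norm. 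Once both sides are shown to be $\h^1$-continuous and to agree on the core, the identity extends to all of $\D(P-A)$.
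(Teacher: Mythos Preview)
Your proposal is correct and follows essentially the same approach as the paper: symmetrize to obtain the anticommutator $(\sigma\cdot v)(\sigma\cdot\Pi)+(\sigma\cdot\Pi)(\sigma\cdot v)$, use the Pauli relation \eqref{pauli-re-1} together with the Poincar\'e gauge $x\cdot A=0$ to reduce the diagonal part to $2D_v$, and kill the $\eps_{jkm}$ piece via the symmetry of $\partial_j v_k$ (equivalently, $\curl\nabla F=0$), then extend from $C_0^\infty$ to $\D(P-A)$ by density using the boundedness of $v$. The only cosmetic difference is that the paper first splits $\Pi=P-A$ and disposes of the $A$-contribution separately via $(v\wedge A)+(A\wedge v)=0$, whereas you keep $\Pi$ intact and let the $A$-part drop out automatically from $[v_j,A_k]=0$.
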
 

\smallskip

\begin{remark}   
We note that thanks to Lemma \ref{lem-2}, the right hand side of \eqref{eq-lem-3} is well defined for all $\varphi\in  \D(P-A)$. Lemma \ref{lem-2} is also used implicitly in 
Lemma \ref{lem-a-bound} and in Proposition \ref{prop:magnetic-virial}.
\end{remark}

\begin{proof}[Proof of Lemma \ref{lem-3}]
Let $\varphi\in C_0^\infty(\R^3, \C^2)$. Then
$$
{\rm Re\, } \LL (\sigma \cdot v)\ \varphi ,\, \sigma\cdot (P-A)\, \varphi \RR = {\rm Re\, } \LL \varphi ,\,  (\sigma \cdot v) \sigma\cdot (P-A)\, \varphi \RR = 
  {\rm Re\, } \LL \varphi ,\,  (\sigma \cdot v) \sigma\cdot P\, \varphi \RR - {\rm Re\, } \LL \varphi ,\,  (\sigma \cdot v) \sigma \cdot A\, \varphi \RR.
$$
In view of \eqref{pauli-re-1},
$$
(\sigma \cdot v)  (\sigma \cdot A) = \sum_{j,k=1}^3 v_j A_k \sigma_j \sigma_k = (v\cdot A) \, \id +i \sum_{m=1}^3 \Big( \sum_{j,k=1}^3 \eps_{mjk}\,  v_j A_k \Big) \sigma_m = 
(v\cdot A) \, \id + i (v\wedge A)\cdot \sigma\, .
$$
Hence
$$
{\rm Re\, } \LL \varphi ,\,  (\sigma \cdot v) \sigma \cdot A\, \varphi \RR = \frac 12 \LL \varphi ,\, \big [ (\sigma \cdot v) (\sigma \cdot A) + (\sigma \cdot A) (\sigma \cdot v)  \,\big ]  \varphi \RR
= \LL \varphi, \, (v\cdot A) \, \id \varphi \RR,
$$
because $v\wedge A +A\wedge v=0$. But $v(x) = g(|x|) x$ by assumption, and $A$ is in the Poincar\'e gauge. So $v \cdot A=0$, see \eqref{poincare}. We thus have
\begin{equation}  \label{partial} 
{\rm Re\, } \LL (\sigma \cdot v)\ \varphi ,\, \sigma\cdot (P-A)\, \varphi \RR  = {\rm Re\, } \LL \, \varphi ,\,  (\sigma \cdot v) ( \sigma\cdot P)\, \varphi \RR =  \frac 12 \LL \varphi ,\, \big [ (\sigma \cdot v) (\sigma \cdot P) + (\sigma \cdot P) (\sigma \cdot v)  \,\big ]  \varphi \RR .
\end{equation} 
Now, using \eqref{pauli-re-1} we get
\begin{align*}
(\sigma \cdot v) (\sigma \cdot P) + (\sigma \cdot P) (\sigma \cdot v) &= \sum_{j,k=1}^3 (v_j P_k+P_j v_k) \, \sigma _j \sigma_k =  \sum_{j,k=1}^3 (v_j P_k+P_j v_k) \Big( \delta_{jk} \id +i
\sum_{m=1}^3 \eps_{jkm} \, \sigma_m\Big)\\
& = ( v\cdot P +P\cdot v)\, \id +i \!\sum_{j,k,m=1}^3 (v_j P_k +P_j v_k) \, \eps_{jkm}\, \sigma_m \\
& = ( v\cdot P +P\cdot v)\, \id +i\! \sum_{j,k,m=1}^3 (v_j P_k -P_k v_j) \, \eps_{jkm}\, \sigma_m \\
&= ( v\cdot P +P\cdot v)\, \id +\sum_{j,k,m=1}^3 (\partial_k v_j) \, \eps_{jkm}\, \sigma_m = ( v\cdot P +P\cdot v)\, \id  = 2 D_v \, \id,
\end{align*}
where we have used the identity
$$
\sum_{j,k,m=1}^3 (\partial_k v_j) \, \eps_{jkm}\, \sigma_m = (\curl\,  v)\cdot \sigma = (\curl\,  \nabla F)\cdot \sigma =0.
$$
Summing up, we have 
\begin{equation}
{\rm Re} \LL (\sigma \cdot v)\ \varphi ,\, \sigma\cdot (P-A)\, \varphi \RR = \LL \varphi, \, D_v \id\, \varphi\RR \qquad \forall\ \varphi\in  C_0^\infty(\R^3, \C^2).
\end{equation}
Since $v= \nabla F$ is bounded, this identity extends by density to all $\varphi\in  \D(P-A)$.
\end{proof}

\section{\bf Dilations and the commutator } \label{sec-comm}

In this section we will define the commutator $[H_{A,V},\,  D]$ in the sense of quadratic form and derive a matrix-valued version of the weighted virial identities. The latter are our main technical tools in the proof of absence of positive eigenvalues. In some places we make use of technical results obtained in \cite{ahk}.


\subsection{Dilations} 

For $t\in\R $ define the unitary dilation operator $U_t: L^2(\R^3, \C^2)\to L^2(\R^3, \C^2)$ by 
\begin{equation}  \label{eq:U}
(U_t f)(x) = e^{\frac{3t}{2}} f(e^t x) \, \quad x\in\R^3.
\end{equation} 
Then $U_t= e^{i t D}$ on $L^2(\R^3, \C^2)$. Let
\begin{equation} \label{gt-def}
i G_t = \frac{U_t - U_{-t}}{2t} \qquad t\in\R.   
\end{equation}
It is easily seen that $G_t$ is bounded and symmetric on $L^2(\R^3, \C^2)$. We will use it to approximate the operator $D$ in the limit $t\to 0$.   

\smallskip

\noindent As in \cite{ahk} we define the commutator of $H$ and $D$ by
\begin{equation} \label{eq:comutator-form-1}
\La \varphi, i \, [H_{A,V},\,  D] \, \varphi \Ra\coloneqq \lim_{t\to 0} \la \varphi , [H_{A,V}, iG_t] \, \varphi  \ra \coloneqq  2 \lim_{t\to 0} \re Q_{A,V}(\varphi , i G_t\, \varphi )\, , 
\end{equation} 
provided the limit on the right hand side exists.  Recall that $\D(Q_{A,V})$ is invariant under dilations, see \cite[Prop.~3.3]{ahk}, hence $Q_{A,V}(\varphi , i G_t\, \varphi )$ is well defined for any $t\neq 0$.

\begin{lemma} \label{lem-a-bound}
Let $B$ satisfy Assumption \ref{ass-B-mild-int}. Then 
\begin{equation} \label{com-1}
\La \varphi, i \, [H_{A,0},\,  D] \, \varphi \Ra  = 2 \lim_{t\to 0} \re Q_{A,0}(\varphi, i G_t\, \varphi)\,  = 2 \|\sigma\cdot(P-A) \varphi \|_2^2 + 2 \re \LL \sigma\cdot \wti B\, \varphi, \, \sigma\cdot(P-A)\, \varphi\RR
\end{equation} 
for all $\varphi\in\D(P-A)$.
\end{lemma}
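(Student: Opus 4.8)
The plan is to compute $2\re Q_{A,0}(\varphi, iG_t\varphi)$ explicitly for $\varphi \in C_0^\infty(\R^3,\C^2)$, identify its limit as $t\to 0$, and then extend to all of $\D(P-A)$ by density, using Lemma \ref{lem-2a} to control the relevant norms. First I would recall that $Q_{A,0}(\varphi,\psi) = \LL \sigma\cdot(P-A)\varphi, \sigma\cdot(P-A)\psi\RR$ and that $U_t = e^{itD}$ implements the dilation $x\mapsto e^t x$. The key algebraic fact is the commutation relation between dilations and the magnetic momentum: conjugating $(P-A)$ by $U_t$ rescales $P$ by $e^{-t}$ and replaces the Poincar\'e-gauge potential $A(x)$ by $e^t A(e^t x)$, because the Poincar\'e gauge \eqref{eq-a} behaves covariantly under scaling. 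Thus $U_{-t}\,\sigma\cdot(P-A)\,U_t = e^{-t}\,\sigma\cdot(P - A_t)$ where $A_t$ is the Poincar\'e potential of the scaled field $e^{2t}B(e^t\cdot)$; writing $A_t = e^t A(e^t x)$ one sees $A_t(x) - A(x) = \int_0^1 \frac{d}{ds}\big(e^{st}\wti B(e^{st}x)\big)\,\frac{ds}{\dots}$, which to first order in $t$ contributes a term built from $\wti B$. This is exactly the mechanism that produces the $\sigma\cdot\wti B$ term.

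Concretely, I would use the identity $2\re Q_{A,0}(\varphi, iG_t\varphi) = \frac{1}{t}\big(\re Q_{A,0}(\varphi, iU_t\varphi) - \re Q_{A,0}(\varphi, iU_{-t}\varphi)\big)$ coming from \eqref{gt-def}, and then rewrite $Q_{A,0}(\varphi, iU_t\varphi)$ by moving $U_t$ through: since $U_t$ is unitary, $Q_{A,0}(\varphi, U_t\varphi) = \LL \sigma\cdot(P-A)U_{-t}\varphi, U_t\,\sigma\cdot(P-A)U_{-t}\,U_t\varphi\RR$ type manipulation, or more cleanly $\re Q_{A,0}(\varphi, iU_t\varphi) = \re\LL \sigma\cdot(P-A)\varphi, iU_t\,\sigma\cdot(P-A_{-t})\varphi\,e^{\pm t}\RR$ after conjugation. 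Differentiating at $t=0$ gives two contributions: the derivative of the scalar prefactor $e^{-t}$ (or $e^{\pm t}$) acting on $\|\sigma\cdot(P-A)\varphi\|_2^2$, which yields the $2\|\sigma\cdot(P-A)\varphi\|_2^2$ term (the virial of the kinetic energy), and the derivative of $A_t$ at $t=0$, which by the scaling formula for the Poincar\'e gauge equals $\wti B(x)$ up to the normalization in \eqref{eq-a} — this produces $-2\re\LL \sigma\cdot(P-A)\varphi, \sigma\cdot\wti B\,\varphi\RR$ with the correct sign after tracking the $i$'s and the symmetrization in $G_t$. Lemma \ref{lem-1} and the Pauli relations \eqref{pauli-re-1} may be invoked to simplify cross terms $\sigma\cdot\wti B\cdot\sigma\cdot(P-A)$ if needed, though here it seems cleanest to leave the answer in the stated bilinear form.

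The main obstacle I anticipate is making the differentiation under the limit rigorous and justifying the density extension. For the first part one needs that $t\mapsto U_t\varphi$ is differentiable in a strong enough topology on $\D(Q_{A,0})$, which on $C_0^\infty$ is fine but requires the explicit scaling identity for $A_t - A$ together with Assumption \ref{ass-B-mild-int} to guarantee the error terms (second order in $t$, involving $x\cdot\nabla$ applied to $\wti B$) are controlled in $L^1_{\mathrm{loc}}$ and vanish in the limit after integrating against the compactly supported $\varphi$. For the density step, one uses that both sides of \eqref{com-1} are continuous in $\varphi$ with respect to the graph norm of $\sigma\cdot(P-A)$: the left side by definition \eqref{eq:comutator-form-1} once the limit is known to exist, and the right side because $\|\sigma\cdot\wti B\,\varphi\|_2 = \||\wti B|\varphi\|_2 \lesssim \|(P-A)\varphi\|_2 + \|\varphi\|_2$ by \eqref{sigma-w} and Assumption \ref{ass-B-rel-bounded}, combined with Lemma \ref{lem-2a} to pass freely between $\|(P-A)\varphi\|_2$ and $\|\sigma\cdot(P-A)\varphi\|_2$. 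I would cite \cite[Prop.~3.3]{ahk} for the invariance of the form domain under dilations and, where the scaling computation for the Poincar\'e gauge is needed, adapt the corresponding lemma from \cite{ahk} to the spinor setting by inserting the Pauli matrices and using that $\sigma\cdot$ commutes with scalar multiplication operators.
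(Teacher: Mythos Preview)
Your overall strategy---exploiting the scaling behavior of the Poincar\'e-gauge potential under dilations to produce the $\sigma\cdot\wti B$ term---is the same as the paper's. The paper, however, does not argue via $C_0^\infty$ and density. It works directly for $\varphi\in\D(P-A)$: writing the components $u,v\in\h^1(\R^3)$, it uses the identity $(P_j-A_j)U_t = e^tU_t(P_j-A_j) + X^j_t$ with $X^j_t = U_t(e^tA_j - A_j(e^{-t}\cdot))$ and then invokes \cite[Prop.~3.6]{ahk} for the strong $L^2$-convergence $t^{-1}X^j_{\pm t}u\to\pm\wti B_j u$, valid for every $u\in\h^1(\R^3)$ under Assumption~\ref{ass-B-mild-int}. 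This gives the limit \eqref{com-1} on the full form domain in one stroke.

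Your density step, as written, has a genuine gap. From the identity on $C_0^\infty$ together with continuity of the right-hand side in the graph norm you can only conclude that the right-hand side extends continuously to $\D(P-A)$; you cannot conclude that the limit $\lim_{t\to 0}2\re Q_{A,0}(\varphi,iG_t\varphi)$ \emph{exists} for general $\varphi\in\D(P-A)$. Passing the limit through an approximating sequence $\varphi_n\to\varphi$ would require that the maps $\varphi\mapsto 2\re Q_{A,0}(\varphi,iG_t\varphi)$ are equicontinuous in $t$ on $\D(P-A)$, i.e., a uniform-in-$t$ bound on $G_t$ in the graph norm, which you have not established and which is not obvious (note $\D(P-A)\not\subset\D(D)$ in general since $x$ is unbounded). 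Separately, your density argument invokes Assumption~\ref{ass-B-rel-bounded} to control $\||\wti B|\varphi\|_2$, whereas the lemma is stated under Assumption~\ref{ass-B-mild-int} alone; the paper's direct approach via \cite[Prop.~3.6]{ahk} avoids this extra hypothesis.
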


\begin{proof}
Let $\varphi$ be given by \eqref{phi-u-v} with $u,v \in\h^1(\R^3)$. A short calculation shows that 
\begin{equation*} 
(P_j-A_j) U_t\, u = e^tU_t (P_j-A_j)\, u + X^j_t\, u \quad \text{with} \quad X^j_t  =  U_t (e^t\, A_j-A_j (e^{-t}\, \cdot) )
\end{equation*} 
for any $j=1,2,3$. 
Hence for any $w\in L^2(\R^3)$, 
\begin{equation} 
\LL w, (P_j-A_j) (U_t-U_{-t}) u\RR = \LL w, (e^t U_t-e^{-t} U_{-t}) (P_j-A_j) u \RR + \LL w, (X^j_t-X^j_{-t}) u\RR\, .
\end{equation} 
Since 
$$
\lim_{t\to 0} \, t^{-1}\, X^j_{\pm t}\,  u  = \pm\wti{B}_j\,  u \quad \text{in} \ \ L^2(\R^3),
$$
see \cite[Prop.~3.6]{ahk}, we deduce from \eqref{gt-def} that
$$
2 \re  \lim_{t\to 0} \re \LL  w,\,  (P_j-A_j)  i G_t\, u \RR_{ L^2(\R^3)}= 2 \re\LL w,  (P_j-A_j) u\RR  + 2 \re \LL w, \wti{B}_j\, u\RR_{L^2(\R^3)}.
$$
After an elementary, but lengthly calculation we then obtain
\begin{align*}
\La \varphi, i \, [H_{A,0},\,  D] \, \varphi \Ra &= 2 \lim_{t\to 0} \re \LL \sigma\cdot (P-A) \varphi,\,  \sigma\cdot (P-A)  i G_t\, \varphi \RR \\
&= 2 \|\sigma\cdot(P-A) \varphi \|_2^2 + 2 \re \LL \sigma\cdot \wti B\, \varphi, \, \sigma\cdot(P-A)\, \varphi\RR\, ,
\end{align*}
as claimed.
\end{proof}

\begin{lemma} \label{lem:xgradv}
Let $B$ satisfy Assumption \ref{ass-B-mild-int} and let  $W:\R^3\to\R$ be a potential with form domain containing $\D(P-A)$, such that the distribution $x\cdot \nabla W$ extends to a quadratic form which is form bounded with respect to $(P-A)^2$. Then 
\begin{align}
	\label{eq-deriv-2}
2 \lim_{t\to 0} \, \la u,\,   W i G_t\,  v \ra_{L^2(\R^3)} = -\la u,  x\cdot \nabla W \, v\ra_{L^2(\R^3)} 
\end{align}
for all $u,v \in\h^1(\R^3)$. 
\end{lemma}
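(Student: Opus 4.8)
The plan is to reduce the claim to the scalar, non-magnetic commutator identity for the generator of dilations, exploiting the fact that multiplication by $W$ commutes with dilations in a controlled way. First I would write out $2\re\langle u, W\,iG_t v\rangle_{L^2(\R^3)}$ using the definition \eqref{gt-def}, namely $iG_t = (U_t - U_{-t})/(2t)$, and use that $W$ is a multiplication operator together with the explicit action \eqref{eq:U} of $U_t$. Since $U_t$ is unitary and $U_{-t} = U_t^{-1} = U_t^*$, I would move one dilation onto $u$ and the other onto $v$, turning the expression into something of the form $\tfrac{1}{2t}\langle u, (W - W(e^{-t}\cdot))\,U_t v\rangle + \tfrac{1}{2t}\langle u, (W(e^{t}\cdot) - W)\,U_{-t} v\rangle$ up to rearrangement. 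The key pointwise fact is that $t^{-1}\big(W(e^{t}x) - W(x)\big) \to x\cdot\nabla W(x)$ as $t\to 0$, in the sense of distributions and, after pairing against $\bar u\, v$ with $u,v\in\h^1(\R^3)$, in the relevant weak sense; here the hypothesis that $x\cdot\nabla W$ is form bounded with respect to $(P-A)^2$ (hence, by the diamagnetic inequality \eqref{diamagnetic-inequality}, a bounded form on $\h^1(\R^3)$) is exactly what is needed to control the limit and to ensure the resulting quantity $\langle u, x\cdot\nabla W\, v\rangle_{L^2(\R^3)}$ makes sense.

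The key steps, in order, are: (1) insert \eqref{gt-def} and use unitarity of $U_t$ to symmetrize, writing the left-hand side as a difference quotient of $W$ composed with the scaling map, paired against dilated copies of $u$ and $v$; (2) show that $U_{\pm t}v \to v$ in $L^2$ and, more importantly, in $\h^1(\R^3)$ as $t\to0$ (invariance of $\D(P-A)$ under dilations, \cite[Prop.~3.3]{ahk}, gives the needed uniform $\h^1$-bound, from which strong convergence follows by a density argument using $C_0^\infty$); (3) establish that $t^{-1}(W(e^{t}\cdot)-W) \to x\cdot\nabla W$ as quadratic forms on $\h^1(\R^3)$, first for smooth compactly supported $W$ by direct computation and then in general by the form-boundedness hypothesis and an approximation argument; (4) combine (2) and (3), tracking the factor $e^{3t/2}$ from \eqref{eq:U} and the two terms $U_t$, $U_{-t}$, to read off the sign and the factor: the two contributions add to give $-\langle u, x\cdot\nabla W\, v\rangle_{L^2(\R^3)}$.

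I expect the main obstacle to be step (3) together with the interchange of limit and pairing in step (4): one must justify that $\lim_{t\to0}\langle u, t^{-1}(W(e^t\cdot)-W)\,U_{\pm t}v\rangle$ equals $\langle u, x\cdot\nabla W\, v\rangle$, where neither the difference quotient of $W$ nor $W$ itself is assumed bounded — only $x\cdot\nabla W$ is controlled as a form. The clean way around this is to work entirely at the level of quadratic forms: reduce to $u,v\in C_0^\infty(\R^3,\C^2)$ where everything is classical and the Taylor expansion $W(e^tx) = W(x) + t\,x\cdot\nabla W(x) + o(t)$ can be used locally (or, if $W$ is merely $L^1_{\mathrm{loc}}$, by first mollifying $W$), prove \eqref{eq-deriv-2} there, and then extend to all $u,v\in\h^1(\R^3)$ by density using the uniform bound $|\langle u, x\cdot\nabla W\, v\rangle| \lesssim \|u\|_{\h^1}\|v\|_{\h^1}$ coming from form-boundedness of $x\cdot\nabla W$ and the diamagnetic inequality, together with the uniform-in-$t$ bound on $\|iG_t\|$ as an operator on $\h^1$ (from dilation-invariance of the form domain). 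This is the non-magnetic, scalar analogue of the computation already carried out in the magnetic vector-potential case in the proof of Lemma \ref{lem-a-bound}, so the structure of the argument can follow that template, referring to \cite[Prop.~3.3, Prop.~3.6]{ahk} where convenient.
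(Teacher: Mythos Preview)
Your approach is substantially more elaborate than the paper's. The paper's proof is two lines: it quotes the diagonal case $u=v$ from \cite[Lemma~3.7, Eq.~(3.32)]{ahk}, namely
\[
2 \lim_{t\to 0} \, \la u,\,   W i G_t\,  u \ra_{L^2(\R^3)} = -\la u,  x\cdot \nabla W \, u\ra_{L^2(\R^3)}
\quad\text{for all } u\in\h^1(\R^3),
\]
and then observes that both $(u,v)\mapsto \la u, W i G_t v\ra$ and $(u,v)\mapsto \la u, x\cdot\nabla W\, v\ra$ are sesquilinear, so the polarisation identity \eqref{polarisation} immediately yields the off-diagonal statement. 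All of the analytic work (controlling difference quotients of $W$ under dilation, justifying the limit for non-smooth $W$, handling the form-boundedness) is already contained in the cited scalar diagonal result.

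Your plan, by contrast, is essentially to reprove \cite[Lemma~3.7]{ahk} from scratch in the off-diagonal setting. The outline is reasonable, but there is a concrete gap in your density argument: you invoke a ``uniform-in-$t$ bound on $\|iG_t\|$ as an operator on $\h^1$'' to pass from $C_0^\infty$ test functions to general $u,v\in\h^1(\R^3)$. This bound is false. Since $iG_t\to D$ strongly on $\D(D)$ and $D$ is unbounded, the operators $iG_t$ cannot be uniformly bounded on $\h^1$ (or on $L^2$) as $t\to 0$; indeed the computation in the proof of Lemma~\ref{lem-a-bound} shows $(P-A)iG_t v$ picks up a term $\sim D(P-A)v$ in the limit. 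Without that uniform bound you cannot simply interchange the limit $t\to 0$ with the approximation $u_n\to u$, $v_n\to v$, so your extension step as written does not close. The polarisation route avoids this entirely: once the diagonal limit is known to exist for every $u\in\h^1$, the four diagonal limits in \eqref{polarisation} each exist, and no additional uniformity is needed.
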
 

\begin{proof}
By \cite[Lemma~3.7,  Eq. (3.32]{ahk} we have 
$$
2 \lim_{t\to 0} \, \la u,\,   W i G_t\,  u \ra_{L^2(\R^3)} = -\la u,  x\cdot \nabla W \, u\ra_{L^2(\R^3)} 
$$
The claim thus follows again from the polarisation identity \eqref{polarisation}.
\end{proof}

\subsection{The commutator} The following result provides a matrix-operator version of a magnetic virial theorem. 

\begin{proposition}\label{prop:magnetic-virial} 
Let $B$ and $V$ satisfy Assumptions \ref{ass-B-mild-int}-\ref{ass-V-form small}.  Suppose moreover that $x\cdot\nabla V$  is form bounded with respect to 
$(P-A)^2$.  
Then for all $\varphi\in \D (\sigma\cdot (P-A))$, the limit $\lim_{t\to0}\re\big(Q_{A,V}(\varphi, i G_t \varphi)\big) $ in \eqref{eq:comutator-form-1} exists. Moreover,
\begin{equation}  \label{eq-hd}
\La\varphi, [H_{A,V}, iD] \, \varphi \Ra  = 2 \|\sigma\cdot (P-A) \varphi\|_2^2  + 2 \re \La \sigma\cdot \wti{B}\, \varphi ,\, \sigma\cdot (P-A)\, \varphi\Ra- \La \varphi, x\cdot\nabla V\varphi \Ra \, .
\end{equation}
\end{proposition}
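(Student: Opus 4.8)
\textbf{Proof plan for Proposition \ref{prop:magnetic-virial}.}
The plan is to decompose $Q_{A,V} = Q_{A,0} + \langle\,\cdot\,, V\,\cdot\,\rangle$ and treat the two pieces separately, combining Lemma \ref{lem-a-bound} with Lemma \ref{lem:xgradv}. For the free part $Q_{A,0}$, Lemma \ref{lem-a-bound} already gives, for every $\varphi\in\D(P-A)$,
\begin{equation*}
2\lim_{t\to 0}\re Q_{A,0}(\varphi, iG_t\varphi) = 2\|\sigma\cdot(P-A)\varphi\|_2^2 + 2\re\La\sigma\cdot\wti B\,\varphi,\ \sigma\cdot(P-A)\,\varphi\Ra,
\end{equation*}
and by \eqref{eq-domains} the domain $\D(\sigma\cdot(P-A))$ coincides with $\D(P-A)$, so this is exactly the content needed for the first two terms of \eqref{eq-hd}. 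The only genuine point to check here is that the right-hand side is finite: the term $\re\La\sigma\cdot\wti B\,\varphi,\sigma\cdot(P-A)\varphi\Ra$ is controlled by $\||\wti B|\varphi\|_2\,\|(P-A)\varphi\|_2$ via \eqref{sigma-w}, which is finite by Assumption \ref{ass-B-rel-bounded} (relative form boundedness of $|\wti B|^2$) together with Lemma \ref{lem-1}; this also justifies that $\varphi\in\D(P-A)$ suffices rather than needing extra regularity.

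For the potential part, I would write $\varphi = (u,v)^{T}$ with $u,v\in\h^1(\R^3)$ (available by \eqref{eq-domains} and the discussion of the quadratic form domain), and expand $\La\varphi, V\,iG_t\varphi\Ra$ into its four scalar matrix-entry contributions $\La u_a, V_{ab}\,iG_t u_b\Ra_{L^2(\R^3)}$, where $u_1=u$, $u_2=v$. Each entry $V_{ab}$ is a scalar potential whose form domain contains $\D(P-A)$ (Assumption \ref{ass-V-form small}) and for which $x\cdot\nabla V_{ab}$ is form bounded with respect to $(P-A)^2$ by hypothesis; hence Lemma \ref{lem:xgradv} applies to each entry (its statement already includes the polarisation step needed to pass from the diagonal $u=v$ case to the general sesquilinear form). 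Summing the four contributions reassembles the matrix identity
\begin{equation*}
2\lim_{t\to 0}\re\La\varphi, V\,iG_t\varphi\Ra = -\La\varphi, x\cdot\nabla V\,\varphi\Ra,
\end{equation*}
and adding this to the free-part identity, using the definition \eqref{eq:comutator-form-1} of $\La\varphi,[H_{A,V},iD]\varphi\Ra$ and linearity of the limit, yields \eqref{eq-hd}. Existence of the limit in \eqref{eq:comutator-form-1} follows because each of the two limits exists separately.

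The main obstacle is bookkeeping rather than conceptual: one must make sure the decomposition into the free form plus the potential form is legitimate at the level of the approximating commutators $[H_{A,V},iG_t]$ — i.e. that $Q_{A,V}(\varphi,iG_t\varphi) = Q_{A,0}(\varphi,iG_t\varphi) + \La\varphi, V\,iG_t\varphi\Ra$ with each term separately finite for $t\neq 0$ — which is clear since $iG_t\varphi\in\D(Q_{A,V})$ by dilation-invariance of the form domain (\cite[Prop.~3.3]{ahk}) and $V$ is form bounded. A secondary subtlety is that Lemma \ref{lem:xgradv} as stated is for a \emph{real} scalar potential $W$, whereas the off-diagonal entries $V_{ab}$ are complex; here I would either apply the lemma separately to $\re V_{ab}$ and $\im V_{ab}$, or note that the Hermiticity of $V$ (Assumption \ref{ass-V-hermit}) makes $\La\varphi, V\,iG_t\varphi\Ra$'s real part reduce to contributions handled by the real-valued lemma, so that after taking $\re$ the argument goes through verbatim. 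No step requires more than the results already established above.
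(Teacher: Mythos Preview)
Your approach is correct and coincides with the paper's: split $Q_{A,V}=Q_{A,0}+\langle\,\cdot\,,V\,\cdot\,\rangle$, invoke Lemma~\ref{lem-a-bound} for the free part, and reduce the potential part to scalar applications of Lemma~\ref{lem:xgradv}. There is, however, one step you pass over too quickly. You write that each entry $V_{ab}$ has $x\cdot\nabla V_{ab}$ form bounded with respect to $(P-A)^2$ ``by hypothesis''; but the hypothesis only gives form boundedness of the \emph{matrix} form $\varphi\mapsto\langle\varphi,x\cdot\nabla V\varphi\rangle$ on $L^2(\R^3,\C^2)$, not of the individual scalar entries on $L^2(\R^3)$. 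Extracting the latter from the former is precisely the content of the paper's proof: one tests the matrix bound with $\varphi=(u,0)^T$ and $\varphi=(0,v)^T$ to get the diagonal entries $x\cdot\nabla V_{11}$, $x\cdot\nabla V_{22}$, and then with $\varphi=(u,u)^T$ (and, if you want $\im V_{12}$ separately, also $\varphi=(u,iu)^T$) together with the triangle inequality to get the off-diagonal contribution. Only then are the hypotheses of Lemma~\ref{lem:xgradv} met for each real scalar piece, after which polarisation \eqref{polarisation} reassembles the matrix identity \eqref{enough-1}. The same remark applies to your appeal to Assumption~\ref{ass-V-form small} for the form domain of each $V_{ab}$.

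This is not a conceptual gap---the extraction is elementary---but it is the substance of the argument here, and without it the ``by hypothesis'' claim is unjustified. Once you insert that step, your proof is the paper's proof.
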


\begin{proof}
Let $\varphi\in \D (\sigma\cdot (P-A))$ be given by \eqref{phi-u-v}. In view of Lemma  \ref{lem-a-bound} it suffices to show that 
\begin{equation} \label{enough-1} 
\La\varphi, [V, iD] \, \varphi \Ra = 2 \lim_{t\to 0} \re\,  \la \varphi , V i G_t\, \varphi \ra\, = - \La \varphi, x\cdot\nabla V\varphi \Ra \, .
\end{equation}
Let $V_{jk}$ denote the matrix elements of $V$. By hypothesis of the proposition we have 
\begin{align} \label{x-V-bound}
\big |\la u,  x\cdot \nabla V_{11} \, u\ra_{L^2(\R^3)}  & +\la v,  x\cdot \nabla V_{22} \, v\ra_{L^2(\R^3)}  +\la u,  x\cdot \nabla V_{12} \, v\ra_{L^2(\R^3)}  +\la v,  x\cdot \nabla V_{12} \, u\ra_{L^2(\R^3)} \big| \nonumber \\
& \lesssim\, \|(P-A) u\|_{L^2(\R^3)}^2 +\|(P-A) v\|_{L^2(\R^3)}^2 +\| u\|_{L^2(\R^3)}^2+\|v\|_{L^2(\R^3)}^2 ,
\end{align} 
for all $u,v \in \h^1(\R^3)$. 
Applying the above inequality first with $v=0$ and then with $u=0$ shows that  $x\cdot \nabla V_{11}$ and $x\cdot \nabla V_{22}$ are  relatively form bounded with respect to 
$(P-A)^2$ in $L^2(\R^3)$. Hence if we return to  \eqref{x-V-bound} and put  $u=v$, then using the triangle inequality we deduce that also the quadratic form 
$$
\la u,  x\cdot \nabla V_{12} \, u\ra_{L^2(\R^3)} + \la u,  x\cdot \nabla V_{21} \, u\ra_{L^2(\R^3)}  = 2 {\rm Re} \la u,  x\cdot \nabla V_{12} \, u\ra_{L^2(\R^3)} 
$$
is relatively bounded with respect to $(P-A)^2$ in $L^2(\R^3)$. Equation \eqref{enough-1}, and hence the claim, thus follows from Lemma \ref{lem:xgradv} and  \eqref{polarisation}.
\end{proof}

\begin{remark}
For rigorous results on 
virial identities, which have a long history 
in mathematics and physics, we refer e.g.~to
\cite{weidmann} and \cite{alb}.
\end{remark}

\subsection{Exponentially weighted commutator}
\label{ssec-exp-w-com} 
The crucial ingredient for the proof of our main result, see Theorem \ref{thm-abs} below, is finding two different expressions for the weighted commutator $\la e^F\psi, [H_{A,V}, D]\, e^F\psi\ra$, when $F$ is a weight function and $\psi$ is a weak eigenfunction of $H_{A,V}$. This is provided by the following Lemma and by the subsequent equation \eqref{eq-energy boost-2}.

\begin{lemma} \label{lem-com-ef}
Let $B$ and $V$ satisfy Assumptions \ref{ass-B-mild-int}-\ref{ass-V-form small}.  
Assume that $x\cdot\nabla V$ is form bounded with respect to $(P-A)^2$. 
  Let $F \in C^2(\R^3;\R)$ be a
  bounded radial function, such that  $\nabla F =x  g$, and assume 
  that $g\ge0$ and  
  that the functions $\nabla(|\nabla F|^2)$, $(1+|\cdot|^2) g$,  $x\cdot\nabla g$ and 
  $(x\cdot\nabla)^2g$ are bounded. 
Let $\psi\in\D(P-A)$ be a weak eigenfunction of $H_{A,V}$, i.e., 
  $E\la \varphi,\psi\ra= Q_{A,V}(\varphi,\psi)$ for some $E\in\R $ and 
  all $\varphi\in\D(P-A)$. Then 
\begin{align}
\La\psi_F, i \, [H_{A,V}, D] \, \psi_F \Ra& = -4 \, \|\sqrt{g}\, D\, \psi_F\|_2^2 + \La\psi_F, \big((x\cdot \nabla)^2 g -x\cdot \nabla |\nabla F|^2  \big) \psi_F \Ra    \, ,\label{eq-psiF-2} 
\end{align} 
where $\psi_F = e^F\, \psi$.
\end{lemma}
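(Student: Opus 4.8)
The identity \eqref{eq-psiF-2} is the matrix-operator analogue of the exponentially weighted virial identity from \cite{ahk}, so the natural strategy is to specialize Proposition \ref{prop:magnetic-virial} to the weighted function $\psi_F = e^F\psi$ and then rewrite the right-hand side using the eigenvalue equation. First I would observe that because $F$ is bounded and $\nabla F = xg$ with $g\ge 0$ and all the listed derivatives bounded, multiplication by $e^F$ and by the relevant combinations of $\nabla F$ and its derivatives preserve $\D(P-A)$; in particular $\psi_F \in \D(P-A) = \D(\sigma\cdot(P-A))$ by \eqref{eq-domains}, so Proposition \ref{prop:magnetic-virial} applies to $\psi_F$ and gives
\begin{equation*}
\La\psi_F, [H_{A,V}, iD]\, \psi_F \Ra = 2\|\sigma\cdot(P-A)\psi_F\|_2^2 + 2\re\La \sigma\cdot\wti B\, \psi_F, \sigma\cdot(P-A)\psi_F\Ra - \La\psi_F, x\cdot\nabla V\, \psi_F\Ra.
\end{equation*}

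Next I would compute $\sigma\cdot(P-A)\psi_F$ in terms of $\psi$. Since $F$ is real and radial, $(P-A)(e^F\psi) = e^F(P-A)\psi - i e^F (\nabla F)\psi$, hence $\sigma\cdot(P-A)\psi_F = e^F\big(\sigma\cdot(P-A)\psi - i(\sigma\cdot\nabla F)\psi\big)$. Expanding the square $\|\sigma\cdot(P-A)\psi_F\|_2^2$ and using \eqref{sigma-w} to handle $\|(\sigma\cdot\nabla F)\psi_F\|_2 = \||\nabla F|\psi_F\|_2$, together with Lemma \ref{lem-3} (with $v=\nabla F$, legitimate by Lemma \ref{lem-2} and the boundedness hypotheses) to identify the cross term $\re\La (\sigma\cdot\nabla F)\psi_F, \sigma\cdot(P-A)\psi_F\Ra$ with $\La\psi_F, D_{\nabla F}\psi_F\Ra$, should convert the quadratic-form expression into one involving $\|\sigma\cdot(P-A)\psi\|_2$ (weighted), $\||\nabla F|\psi_F\|_2$, and the dilation operator. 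The $\sigma\cdot\wti B$ term should combine with these pieces using the Pauli relations \eqref{pauli-re-1}.

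The second, independent input is the eigenvalue equation: testing $Q_{A,V}(\varphi,\psi) = E\la\varphi,\psi\ra$ against $\varphi = e^{2F}\psi = e^F\psi_F$ (which lies in $\D(P-A)$) produces a relation expressing $\|\sigma\cdot(P-A)\psi_F\|_2^2 + \La\psi_F, V\psi_F\Ra$ in terms of $E\|\psi_F\|_2^2$ and a term quadratic in $\nabla F$ coming from moving one factor $e^F$ through $\sigma\cdot(P-A)$; this is the standard ``energy boost'' computation referenced as \eqref{eq-energy boost-2} in the text. Substituting this back eliminates $\|\sigma\cdot(P-A)\psi_F\|_2^2$ and $V$ in favor of $E$, $|\nabla F|^2 = |x|^2 g^2$, and the dilation term; the $E\|\psi_F\|^2$ contributions cancel, and what survives is exactly $-4\|\sqrt g\, D\psi_F\|_2^2$ plus the scalar multiplier $(x\cdot\nabla)^2 g - x\cdot\nabla|\nabla F|^2$. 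Here one uses $D_{\nabla F} = gD - \tfrac{i}{2}x\cdot\nabla g$ (as in the proof of Lemma \ref{lem-2}) and $\sqrt g\, D = $ the relevant symmetrized operator, carefully tracking the terms $x\cdot\nabla g$ and $(x\cdot\nabla)^2 g$ that arise from commuting $g$ past $D$, plus the identity $|\nabla(|\nabla F|^2)|$-type manipulation $x\cdot\nabla|\nabla F|^2$.

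\textbf{Main obstacle.} The genuinely delicate point is the bookkeeping in reducing the $\sigma$-weighted quantities to the purely scalar operator $\sqrt g\, D$: one must show that all the off-diagonal, spinor-valued contributions coming from $\sigma\cdot\wti B$ and from the cross terms in expanding $\|\sigma\cdot(P-A)\psi_F\|^2$ either cancel or reassemble, using \eqref{pauli-re-1}--\eqref{pauli-re-2}, into the diagonal expression $-4\|\sqrt g D\psi_F\|_2^2 + \La\psi_F,(\cdots)\psi_F\Ra$ — and simultaneously that every term is a well-defined quadratic form on $\D(P-A)$, which is where Lemma \ref{lem-2} (the domain inclusion $\D(P-A)\subset\D(gD)$) and the boundedness of $x\cdot\nabla g$, $(x\cdot\nabla)^2 g$, $(1+|\cdot|^2)g$, and $\nabla(|\nabla F|^2)$ are each used. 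The algebra is parallel to \cite{ahk} but, as the authors stress, the derivation is structurally different because $H_{A,V}$ is non-diagonal, so care is needed that no commutator $[\sigma\cdot\wti B, \sigma\cdot(P-A)]$-type term is silently dropped.
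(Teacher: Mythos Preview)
Your route through Proposition~\ref{prop:magnetic-virial} followed by the energy-boost identity does not lead to \eqref{eq-psiF-2}; it leads to the \emph{other} expression \eqref{eq-energy boost-2}. The whole architecture of Sections~\ref{ssec-exp-w-com}--\ref{sec-abs} rests on having two \emph{independent} formulas for $\La\psi_F,[H_{A,V},iD]\psi_F\Ra$: one (\eqref{eq-energy boost-2}) obtained exactly as you describe, and one (\eqref{eq-psiF-2}) obtained by a different mechanism. If your plan worked, the two formulas would coincide tautologically and the argument in Section~\ref{sec-abs} would collapse.

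Concretely, starting from \eqref{eq-hd} you carry the term $2\re\La\sigma\cdot\wti B\,\psi_F,\sigma\cdot(P-A)\psi_F\Ra$. Lemma~\ref{lem-3} does not apply to $\wti B$ (it is not a gradient), the Pauli relations alone will not make it vanish, and a single use of the energy-boost identity \eqref{eq-energy boost} introduces exactly one $E\|\psi_F\|_2^2$ term with nothing to cancel it. Your expansion of $\sigma\cdot(P-A)\psi_F$ in terms of $\psi$ produces $e^{2F}$-weighted quantities in $\psi$, not the term $\|\sqrt g\,D\psi_F\|_2^2$; the dilation operator simply does not appear along this path.

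The paper's proof bypasses Proposition~\ref{prop:magnetic-virial} entirely. It returns to the defining expression $2\re\,Q_{A,V}(\psi_F,iD\psi_F)$ and applies the conjugation identity \eqref{fhf},
\[
H_{A,V} = e^{F}H_{A,V}e^{-F} - i\big[(\sigma\cdot\nabla F)\,\sigma\cdot(P-A)+\sigma\cdot(P-A)\,(\sigma\cdot\nabla F)\big] + |\nabla F|^2,
\]
directly to $\psi_F$. The first piece gives $e^F H_{A,V}\psi = E\psi_F$, whose pairing with $iD\psi_F$ has zero real part; the $|\nabla F|^2$ piece yields $-\La\psi_F,x\cdot\nabla|\nabla F|^2\psi_F\Ra$ by the virial of a multiplication operator. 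The entire content sits in the anticommutator $(\sigma\cdot\nabla F)\,\sigma\cdot(P-A)+\sigma\cdot(P-A)\,(\sigma\cdot\nabla F)$: because $\nabla F = gx$ is a gradient (so its curl vanishes) and $x\cdot A=0$ in the Poincar\'e gauge, the Pauli algebra collapses this to the \emph{scalar} operator $2D_{\nabla F}\,\id = (2gD - i\,x\cdot\nabla g)\,\id$. Pairing with $D\psi_F$ then gives $-4\|\sqrt g\,D\psi_F\|_2^2 + \La\psi_F,(x\cdot\nabla)^2 g\,\psi_F\Ra$. No $\sigma\cdot\wti B$ term ever enters this computation, which is precisely why the resulting identity is purely scalar and independent of $B$, $V$, and $E$.
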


\begin{proof}
Note that in the sense of quadratic forms 
\begin{equation} \label{fhf}
e^{F}\, H_{A,V}\, e^{-F} = H_{A,V} +i \big[ (\sigma \cdot \nabla F) \sigma \cdot (P-A) +(\sigma \cdot (P-A) )\, \sigma \cdot \nabla F\big] - |\nabla F|^2\, .
\end{equation}
Hence 
\begin{align}
\LL e^F \psi, [H_{A,V}, i D]\,  e^F \psi\RR & = 2\re\, \LL H_{A,V}\, e^F \psi, iD e^F \psi\RR = 2\re\, \LL e^F H_{A,V}\,  e^{-F} e^F \psi, iD\, e^F \psi\RR\nonumber \\
& \quad -2\re \LL \big( (\sigma \cdot \nabla F) \sigma \cdot (P-A) +(\sigma \cdot (P-A) )\, \sigma \cdot \nabla F\big)  \, \psi_F,  D\,  \psi_F\RR\nonumber \\
& \quad +2\re \, \LL |\nabla F|^2 \, \psi_F, \, i D\,  \psi_F \RR\nonumber \\
&= 2\re \LL \big( (\sigma \cdot \nabla F) \sigma \cdot (P-A) +(\sigma \cdot (P-A) )\, \sigma \cdot \nabla F\big)   \psi_F,  D\,  \psi_F\RR\nonumber \\
&\quad  - \LL   \psi_F, x\cdot \nabla |\nabla F|^2\, \psi_F \RR,  \label{com-fhf}
\end{align}
where we have used the fact that
$$
 \re \LL e^F H_{A,V}\,  e^{-F} e^F \psi, iD\,  e^F \psi\RR  =   E\,  \re\LL     \psi_F, iD\,    \psi_F\RR =0.
$$
Now since $F$ is radial and $A$ is in the Poincar\'e gauage \eqref{poincare}, it follows from \eqref{pauli-re-1} that
$$
(\sigma \cdot \nabla F) ( \sigma \cdot A) +(\sigma \cdot A )\, ( \sigma \cdot \nabla F) = 2 (\nabla F\cdot A) \, \id =0.
$$
On the other hand, still using  \eqref{pauli-re-1} we obtain
\begin{align*}
(\sigma \cdot \nabla F) (\sigma \cdot P) +(\sigma \cdot P )\, (\sigma \cdot \nabla) F &= (\nabla F \cdot P +P\cdot \nabla F)\, \id + i \sum_{j,k,m=1}^3 \big( \partial_j F\, P_k + P_j \, \pd_k F\big)\, 
\eps_{jkm}\, \sigma_m\\
& = (g x \cdot P +P\cdot x g)\, \id +i \sum_{j,k,m=1}^3 \big( \partial_j F\, P_k - P_k \, \pd_j F\big) \, \eps_{jkm}\, \sigma_m \\
& = (g D -ix \cdot\nabla g )\, \id +\sum_{m=1}^3  \Big( \sum_{j,k=1}^3  \pd_k \partial_j F\,  \eps_{jkm}\, \Big)  \sigma_m\, .
\end{align*}
Since $\sum_{j,k=1}^3  \pd_k \partial_j F\,  \eps_{jkm}=0$ for all $m=1,2,3$, the last equation in combination with \eqref{com-fhf} gives 
\begin{align*}
\LL  \psi_F, [H_{A,V}, i D]\,  \psi_F\RR &= -\re\,  \LL  (g D -ix \cdot\nabla g )\,  \psi_F, \,  D\,  \psi_F\RR - \LL   \psi_F, x\cdot \nabla |\nabla F|^2\,  \psi_F\RR \\
& = -4 \, \|\sqrt{g}\, D\, \psi_F\|_2^2 + \La\psi_F, \big((x\cdot \nabla)^2 g -x\cdot \nabla |\nabla F|^2  \big) \psi_F \Ra \, .
\end{align*}
\end{proof}

In view of the fact that $D_{\nabla F}$ is symmetric equation \eqref{fhf} and Lemma \ref{lem-3} imply 
  \begin{equation}\label{Fmp}
  Q_{A,V}(\varphi,\varphi) =
   Q_{A,V}(e^{-F} \varphi, e^{F} \varphi) + \La\nabla F \, \varphi, \nabla F\, \varphi \Ra
  \end{equation}
  for all $\varphi\in \D(P-A)$. By inserting $\varphi= \psi_F$ in the above equation, which is allowed because $\psi_F\in \D(P-A)$, we get   
  \begin{equation}\label{eq-energy boost}
		Q_{A,V}(\psi_F,\psi_F) = \| \sigma\cdot (P-A) \psi_F\|_2^2+ \LL \psi_F, \, V \psi_F\RR = \La \psi_F,(E+|\nabla F|^2)\psi_F  \Ra\, . \\[7pt]
  \end{equation}
A combination  with \eqref{eq-hd} thus gives 
\begin{equation} \label{eq-energy boost-2}
  \begin{aligned}
\La\psi_F, i \, [H_{A,V}, D] \, \psi_F \Ra &=  \La \psi_F,(E+|\nabla F|^2)\psi_F  \Ra + \| \sigma\cdot (P-A) \psi_F\|_2^2 +2 \re \La \sigma\cdot \wti{B}\,  \psi_F ,\, \sigma\cdot (P-A)\,  \psi_F\Ra \\
& \ \quad- \La  \psi_F, (V+x\cdot\nabla V) \psi_F\Ra.
  \end{aligned}
  \end{equation}

\begin{lemma} \label{lem-lowerb}
Let $B$ and $V$ satisfy Assumptions \ref{ass-B-mild-int}, \ref{ass-V-form small}, and \ref{ass-bounded unique continuation}. Assume that $\psi$ and $F$ satisfy conditions of Lemma \ref{lem-com-ef}. Then there exist constants $\kappa>0$ and $c_\kappa>0$ such that 
\begin{equation} \label{com-lowerb-2} 
\La\psi_F,  \, [H_{A,V},  i D] \, \psi_F \Ra\, \geq  \, \kappa\,  \La\psi_F, |\nabla F|^2\, \psi_F \Ra- c_\kappa \|\psi_F\|_2^2 \, .
\end{equation}
\end{lemma}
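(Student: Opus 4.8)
The plan is to combine the magnetic virial identity \eqref{eq-hd}, applied to $\varphi=\psi_F$, with the ``energy boost'' identity \eqref{eq-energy boost}. Since $\psi_F\in\D(P-A)=\D(\sigma\cdot(P-A))$ and $x\cdot\nabla V$ is form bounded, \eqref{eq-hd} gives
\[
\langle\psi_F,[H_{A,V},iD]\,\psi_F\rangle=2\,\|\sigma\cdot(P-A)\psi_F\|_2^2+2\re\langle\sigma\cdot\wti B\,\psi_F,\sigma\cdot(P-A)\psi_F\rangle-\langle\psi_F,x\cdot\nabla V\,\psi_F\rangle .
\]
To extract a multiple of $\langle\psi_F,|\nabla F|^2\psi_F\rangle$ I fix a small $\lambda\in(0,1)$, write $2\|\sigma\cdot(P-A)\psi_F\|_2^2=(2-\lambda)\|\sigma\cdot(P-A)\psi_F\|_2^2+\lambda\|\sigma\cdot(P-A)\psi_F\|_2^2$, and replace the last term via \eqref{eq-energy boost} in the form $\|\sigma\cdot(P-A)\psi_F\|_2^2=\langle\psi_F,(E+|\nabla F|^2)\psi_F\rangle-\langle\psi_F,V\psi_F\rangle$. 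This produces
\[
\langle\psi_F,[H_{A,V},iD]\,\psi_F\rangle=(2-\lambda)\|\sigma\cdot(P-A)\psi_F\|_2^2+\lambda\langle\psi_F,|\nabla F|^2\psi_F\rangle+\lambda E\|\psi_F\|_2^2-\lambda\langle\psi_F,V\psi_F\rangle+2\re\langle\sigma\cdot\wti B\,\psi_F,\sigma\cdot(P-A)\psi_F\rangle-\langle\psi_F,x\cdot\nabla V\,\psi_F\rangle ,
\]
so it suffices to bound the last four summands from below by $-\theta\|\sigma\cdot(P-A)\psi_F\|_2^2-c\|\psi_F\|_2^2$ with $\theta<2-\lambda$; discarding the then nonnegative kinetic term yields \eqref{com-lowerb-2} with $\kappa=\lambda$ and $c_\kappa=c$.

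For the error terms I would argue as follows. Trivially $\lambda E\|\psi_F\|_2^2\ge-\lambda|E|\,\|\psi_F\|_2^2$, and by Assumption \ref{ass-V-form small} (eq.\ \eqref{ass-V0-eq}) $-\lambda\langle\psi_F,V\psi_F\rangle\ge-\lambda\alpha_0\|(P-A)\psi_F\|_2^2-\lambda C\|\psi_F\|_2^2$. Writing $V=V^s+V^\ell$, the long-range virial is controlled directly by \eqref{ass-V1-eq}: $-\langle\psi_F,x\cdot\nabla V^\ell\psi_F\rangle\ge-\alpha_2\|(P-A)\psi_F\|_2^2-C\|\psi_F\|_2^2$. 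The short-range virial $\langle\psi_F,x\cdot\nabla V^s\psi_F\rangle$ has no pointwise substitute, so I would integrate by parts: a distributional computation, first for $\psi_F\in C_0^\infty$ and then by density, gives
\[
\langle\psi_F,x\cdot\nabla V^s\,\psi_F\rangle=-3\langle\psi_F,V^s\psi_F\rangle-2\re\langle x\cdot\nabla\psi_F,\,V^s\psi_F\rangle ,
\]
and, since $x\cdot A\equiv0$ in the Poincar\'e gauge \eqref{poincare} so that $x\cdot\nabla\psi_F=ix\cdot(P-A)\psi_F$, componentwise Cauchy--Schwarz yields $|2\re\langle x\cdot\nabla\psi_F,V^s\psi_F\rangle|\le2\|(P-A)\psi_F\|_2\,\|xV^s\psi_F\|_2$, while \eqref{ass-xV-quantitative} gives $|\langle\psi_F,V^s\psi_F\rangle|\le\alpha_3\|(P-A)\psi_F\|_2^2+C\|\psi_F\|_2^2$. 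Finally the magnetic cross term is estimated by $\|\sigma\cdot w\,\varphi\|_2=\||w|\varphi\|_2$ from \eqref{sigma-w}: $|2\re\langle\sigma\cdot\wti B\,\psi_F,\sigma\cdot(P-A)\psi_F\rangle|\le2\,\||\wti B|\psi_F\|_2\,\|\sigma\cdot(P-A)\psi_F\|_2$.

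It remains to collect these bounds. Setting $P:=\|(P-A)\psi_F\|_2$, $Q:=\|\sigma\cdot(P-A)\psi_F\|_2$, $b:=\||\wti B|\psi_F\|_2^2$, $s:=\|xV^s\psi_F\|_2^2$, Lemma \ref{lem-2a} gives $P^2\le(1+\eta)Q^2+C_\eta\|\psi_F\|_2^2$ for any $\eta>0$, and \eqref{ass-B2-eq} gives $b+s\le\tfrac{\alpha_1^2}{4}P^2+C\|\psi_F\|_2^2$. Applying $2xy\le\rho x^2+\rho^{-1}y^2$ to $2\sqrt b\,Q+2P\sqrt s$ with a common parameter $\rho$ and optimizing, the magnetic term together with the $\|xV^s\psi_F\|_2$-contribution of the short-range virial is bounded by $\alpha_1\sqrt{(1+\eta)(2+\eta)}\,Q^2+C\|\psi_F\|_2^2$, while the remaining contributions of $V$, $V^\ell$ and $V^s$ give $(1+\eta)(\lambda\alpha_0+\alpha_2+3\alpha_3)\,Q^2+C\|\psi_F\|_2^2$. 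Thus the total error coefficient is $\theta=(1+\eta)(\lambda\alpha_0+\alpha_2+3\alpha_3)+\alpha_1\sqrt{(1+\eta)(2+\eta)}$, which tends to $\alpha_2+3\alpha_3+\sqrt2\,\alpha_1$ as $\lambda,\eta\to0$; by \eqref{alphas} this limit is $\le\sqrt2\,(\alpha_1+\alpha_2+3\alpha_3)<\sqrt2<2$. Hence one first fixes $\lambda$ small enough that $2-\lambda$ still exceeds this limit, then fixes $\eta$ small, so that $\theta<2-\lambda$; this completes the argument, with $\kappa=\lambda$ and $c_\kappa$ the accumulated constant. The only genuinely delicate point is the short-range virial $\langle\psi_F,x\cdot\nabla V^s\psi_F\rangle$, which has to be reduced via the integration-by-parts identity above to $\langle\psi_F,V^s\psi_F\rangle$ and $\|xV^s\psi_F\|_2\|(P-A)\psi_F\|_2$; one must then keep track of the fact that $|\wti B|^2$ and $|xV^s|^2$ share the single relative form bound $\tfrac{\alpha_1^2}{4}$ in \eqref{ass-B2-eq}, which is what forces the constant $\sqrt2$ and explains the shape $\alpha_1+\alpha_2+3\alpha_3<1$ of condition \eqref{alphas}.
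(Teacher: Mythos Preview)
Your proof is correct and follows essentially the same strategy as the paper: apply the virial identity \eqref{eq-hd} to $\psi_F$, peel off a small fraction of the kinetic term via the energy boost \eqref{eq-energy boost}, rewrite the short-range virial by the Kato integration-by-parts identity (recorded in the paper as Lemma~\ref{lem-kato-virial}/Corollary~\ref{cor-mixed-virial}), and absorb all errors into the remaining kinetic term using Assumption~\ref{ass-bounded unique continuation} and Lemma~\ref{lem-2a}. Your final error coefficient $\sqrt{2}\,\alpha_1+\alpha_2+3\alpha_3$ differs slightly from the paper's $\alpha_1+\alpha_2+3\alpha_3$, but both are $<2$ under \eqref{alphas}, so the conclusion is unaffected.
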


\begin{proof}
Below we denote by $c$ a generic constant whose value might change from line to line. By Proposition \ref{prop:magnetic-virial}, the Cauchy-Schwarz inequality and Corollary \ref{cor-uc},
\begin{align*}
\La\psi_F,  \, [H_{A,V}, i D] \, \psi_F \Ra & \geq   \|\sigma\cdot (P-A)\psi_F\|_2^2 -2  \|\sigma\cdot (P-A)\psi_F\|_2 \big(\|\wti{B} \psi_F\|_2 +\|x V^s\psi_F\|_2\big)\\ 
& \ \  -( \alpha_2+ 3\, \alpha_3) \|\sigma\cdot (P-A)\psi_F\|_2^2 - c \|\psi_F\|_2^2\, .
\end{align*}
Now let $\kappa>0$ and split 
$$
\|\sigma\cdot (P-A)\psi_F\|_2^2= (1-\kappa)\|\sigma\cdot (P-A)\psi_F\|_2^2 +\kappa \|\sigma\cdot (P-A)\psi_F\|_2^2.
$$
Using equation \eqref{eq-energy boost-2} together with Corollary \ref{cor-uc} we find 
\begin{align*}
\La\psi_F,  \, [H_{A,V}, i D] \, \psi_F \Ra & \geq (2 -\kappa) \|\sigma\cdot(P-A)\psi_F\|_2^2  + \kappa\,  \La\psi_F, |\nabla F|^2\, \psi_F \Ra- (\alpha_2 +d\alpha_3+\kappa \alpha_0) \, \|\sigma\cdot(P-A)\psi_F\|_2^2 \\
&\ \  -2  \|\sigma\cdot(P-A)\psi_F\|_2 \big(\|\wti{B} \psi_F\|_2 +\|x V^s\psi_F\|_2\big)  - c \|\psi_F\|_2^2, 
\end{align*}
and 
\begin{align*}
2 \|\sigma\cdot(P-A)\psi_F\|_2 \big(\|\wti{B} \psi_F\|_2+\|x V^s\psi_F\|_2\big)   & \, \leq \, \alpha_ 1  \|\sigma\cdot(P-A)\psi_F\|_2^2 +2\, C_1 \|(P-A)\psi_F\|_2 \,  \|\psi_F\|_2 \\
&\,  \leq \, (\alpha_ 1+\kappa)\,  \|\sigma\cdot(P-A)\psi_F\|_2^2 + \frac{C_1}{\kappa} \,  \|\psi_F\|_2^2. 
\end{align*}
Hence 
\begin{align*}
\La\psi_F,  \, [H_{A,V}, i D] \, \psi_F \Ra & \geq (1 -2\kappa-\kappa \alpha_0 -\alpha_1-\alpha_2-3 \alpha_3) \, \|\sigma\cdot(P-A)\psi_F\|_2^2 + \kappa\,  \La\psi_F, |\nabla F|^2\, \psi_F \Ra - c_\kappa \|\psi_F\|_2^2 \, .
\end{align*}
If we now set $\kappa = (2+\alpha_0)^{-1} (1-\alpha_1-\alpha_2- 3 \alpha_3)$,
then $\kappa >0,$ see \eqref{alphas}, and the claim follows. 
\end{proof}

\subsection{The  virial }\label{ssec-virial} Below we provide a matrix version of the Kato form of the virial. 

\begin{lemma}\label{lem-kato-virial}
Let $B$ and $V$ satisfy Assumptions \ref{ass-B-mild-int}-\ref{ass-V-form small}. Suppose that $V$ and  $|x|^2V^2$ are relatively form bounded with respect to $(P-A)^2$. Then 	\begin{equation}\label{eq:kato-virial}
		\La \varphi , x\cdot\nabla V\varphi \Ra = 2\im \La xV\varphi, (P-A)\varphi \Ra  - 3 \La\varphi, V\varphi \Ra
	\end{equation}
	for all $\varphi\in\D(P-A)$. 
\end{lemma}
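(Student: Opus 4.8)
\textbf{Proof plan for Lemma \ref{lem-kato-virial}.}

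The plan is to derive the identity first on a dense subspace of smooth compactly supported spinors and then extend by density using the stated form-boundedness hypotheses. First I would take $\varphi\in C_0^\infty(\R^3,\C^2)$, so that all the manipulations below are legitimate as honest integrals. The starting point is the distributional identity $x\cdot\nabla V = \sum_{j=1}^3 \big(\partial_j(x_j V) - V\big) = \Div(xV) - 3V$, valid component-wise for each matrix entry $V_{jk}$. Pairing against $\varphi$ and integrating by parts, the divergence term contributes $\La\varphi, \Div(xV)\varphi\Ra = -2\re\La xV\varphi, \nabla\varphi\Ra$, where the sign and the factor come from distributing $\nabla$ onto both copies of $\varphi$ (the two symmetric contributions combine into $2\re$). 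Writing $\nabla = iP$ and using $\re(i\,z) = -\im z$, this becomes $2\im\La xV\varphi, P\varphi\Ra$. Finally, since $x\cdot A(x)=0$ in the Poincar\'e gauge (see \eqref{poincare}), we have $\La xV\varphi, A\varphi\Ra = \La V\varphi, (x\cdot A)\varphi\Ra = 0$, so $2\im\La xV\varphi,P\varphi\Ra = 2\im\La xV\varphi,(P-A)\varphi\Ra$; this is precisely where the gauge choice pays off, just as in Lemma \ref{lem-3}. Collecting terms yields \eqref{eq:kato-virial} for $\varphi\in C_0^\infty$.

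For the density step I would approximate an arbitrary $\varphi\in\D(P-A)$ by a sequence $\varphi_n\in C_0^\infty(\R^3,\C^2)$ converging to $\varphi$ in the graph norm of $P-A$ (such a core exists for $A\in L^2_{\mathrm{loc}}$). The left-hand side $\La\varphi,x\cdot\nabla V\varphi\Ra$ and the term $\La\varphi,V\varphi\Ra$ pass to the limit because $x\cdot\nabla V$ and $V$ are relatively form bounded with respect to $(P-A)^2$ (the former by hypothesis of the present lemma, the latter by Assumption \ref{ass-V-form small}), and relatively form bounded quadratic forms are continuous with respect to the form norm of $(P-A)^2$, which is dominated by the graph norm of $P-A$. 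For the middle term $\im\La xV\varphi,(P-A)\varphi\Ra$ I would use the hypothesis that $|x|^2V^2$ is relatively form bounded with respect to $(P-A)^2$: this gives $\|xV\varphi\|_2^2 = \La\varphi, |x|^2V^2\varphi\Ra \lesssim \|(P-A)\varphi\|_2^2 + \|\varphi\|_2^2$, so $xV\varphi_n \to xV\varphi$ in $L^2$ while $(P-A)\varphi_n\to(P-A)\varphi$ in $L^2$, and the pairing converges.

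The main obstacle is purely technical bookkeeping in the first step: one must be careful that the integration by parts $\La\varphi,\Div(xV)\varphi\Ra = -2\re\La xV\varphi,\nabla\varphi\Ra$ is done correctly for matrix-valued $V$, keeping track of the Hermiticity of $V$ (Assumption \ref{ass-V-hermit}) which is what makes the two boundary-free contributions combine into a real part rather than leaving a spurious imaginary remainder, and that $x\cdot\nabla V$ is interpreted consistently as the quadratic form $\sum_{j}\La\partial_j(x_j V)\varphi,\varphi\Ra - 3\La V\varphi,\varphi\Ra$ rather than as an a priori ill-defined pointwise object. Once the gauge condition $x\cdot A=0$ is invoked to discard the $A$-contribution, no further magnetic subtleties arise, so the bulk of the work is the density argument, which is routine given the three form-boundedness assumptions in the statement.
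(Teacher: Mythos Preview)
Your approach is correct and genuinely different from the paper's. The paper does not rederive the identity from scratch: it cites the scalar version from \cite[Lemma~3.12]{ahk}, then spends all its effort checking that each matrix entry $V_{11},V_{22},V_{12}$ (and the corresponding entries of $|x|^2V^2$) individually satisfies the scalar hypotheses, after which the spinor statement follows by polarisation \eqref{polarisation}. Your route---the distributional identity $x\cdot\nabla V=\Div(xV)-3V$, integration by parts on $C_0^\infty$, Hermiticity to collapse the two terms into $-2\re$, and the Poincar\'e gauge to trade $P$ for $P-A$---is more self-contained and arguably cleaner, since it avoids dismantling the matrix structure. The paper's method has the advantage of recycling a known scalar lemma verbatim, but at the cost of the somewhat fiddly verification that form-boundedness of the matrix $|x|^2V^2$ implies form-boundedness of $|x|^2V_{jk}^2$ for each entry.

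One small slip to fix: in your density step you say that $\la\varphi,x\cdot\nabla V\varphi\ra$ passes to the limit because $x\cdot\nabla V$ is relatively form bounded ``by hypothesis of the present lemma''. It is not a hypothesis---the lemma only assumes form-boundedness of $V$ and $|x|^2V^2$. What you should say instead is that your identity on $C_0^\infty$, together with the bounds $|\la xV\varphi,(P-A)\varphi\ra|\le \|xV\varphi\|_2\|(P-A)\varphi\|_2$ and the two assumed form bounds, \emph{shows} that the quadratic form $\varphi\mapsto\la\varphi,x\cdot\nabla V\varphi\ra$ is bounded relative to $(P-A)^2$ on $C_0^\infty$, and hence extends continuously to $\D(P-A)$; the identity then persists in the limit. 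This is not a real gap, just a misattribution of where the bound comes from.
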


\begin{proof}
Let $ V_{jk}$ be the matrix elements of $V$ and let $W:\R^3\to \R$. 
By \cite[Lemma~3.12]{ahk}, 
\begin{equation} \label{ahk-3.12}
\La u , x\cdot\nabla W\, u \Ra_{L^2(\R^3)} = 2\im \La  u, x\,  W\, (P-A) u \Ra_{L^2(\R^3)}   - 3 \La u, W \, u \Ra_{L^2(\R^3)} 
\end{equation}
holds for all $u\in \h^1(\R^3)$ provided $W$ and $|x|^2 W^2$ are relatively form bounded with respect to $(P-A)^2$ in $L^2(\R^3)$.  To prove the statement of the lemma 
we have to verify that equation \eqref{ahk-3.12} can be applied with $W=V_{11}, W=V_{22}$ and $W=V_{12}$, cf.~\eqref{V-hermit}. Reasoning in the same way as in the proof of Proposition 
\ref{prop:magnetic-virial} we  verify that $V_{11}, V_{22}$ and $V_{12}$ are relatively form bounded with respect to $(P-A)^2$ in $L^2(\R^3)$. In order to verify the relative form boundedness of 
$|x|^2 V^2_{11}, |x|^2 V^2_{22}$ and $|x|^2 |V_{12}|^2$ we note that since 
$$
V^2 = 
 \left( \begin{array}{lr}
V_{11}^2 +   |V_{12}|^2 & V_{12} (V_{11} +V_{22})  \\
 V_{21} (V_{11} +V_{22})  & V_{22}^2 +   |V_{12}|^2
\end{array} \right) ,
$$
the assumptions of the lemma imply that
\begin{align} \label{x2-V2-bound}
& \la u,  |x|^2 (V^2_{11} + |V_{12}|^2) \, u\ra_{L^2(\R^3)}   +\la v,  |x|^2 ( V^2_{22}+  |V_{12}|^2) \, v\ra_{L^2(\R^3)}  +\la u,  |x|^2  V_{12} (V_{11}+V_{22}) \, v\ra_{L^2(\R^3)}  \\
& \quad +\la v, |x|^2 V_{21} (V_{11}+V_{22}) \, u\ra_{L^2(\R^3)} 
 \lesssim\, \|(P-A) u\|_{L^2(\R^3)}^2 +\|(P-A) v\|_{L^2(\R^3)}^2 +\| u\|_{L^2(\R^3)}^2+\|v\|_{L^2(\R^3)}^2 , \nonumber
\end{align} 
for all $u,v \in \h^1(\R^3)$. As in the proof of Proposition 
\ref{prop:magnetic-virial} we apply \eqref{x2-V2-bound} with $v=0$ and $u=0$ respectively, and deduce that $|x|^2 (V^2_{11} + |V_{12}|^2)$ and $|x|^2 (V^2_{22} + |V_{12}|^2)$ are relatively 
form bounded with respect to $(P-A)^2$ in $L^2(\R^3)$. Hence \eqref{ahk-3.12} holds for $W=V_{jk}$ with any $j,k =1,2$. 
In view of \eqref{polarisation}, this proves equation \eqref{eq:kato-virial}.  
\end{proof}

\begin{corollary}\label{cor-mixed-virial}
 Let $B$ and $V$ satisfy Assumptions \ref{ass-B-mild-int}-\ref{ass-V-form small}.
  Assume moreover, that the potential $V$ splits as $V=V^s+V^\ell$ 
  where  $V^s$ and  $|x|^2 (V^s)^2$ are relatively form bounded with respect to 
  $(P-A)^2$ and the distribution $ x\cdot\nabla V^\ell$ extend to a quadratic form which is form bounded with respect to $(P-A)^2$. Then  
  \begin{equation}
  	\La \varphi , x\cdot\nabla V\varphi \Ra
  		= 2\im \La xV^s \varphi, (P-A)\varphi \Ra  - 3\La\varphi, V^s \varphi \Ra
  			+ \La \varphi , x\cdot\nabla V^\ell\, \varphi \Ra
  \end{equation}
 for all $\varphi\in\D(P-A)$. 
\end{corollary}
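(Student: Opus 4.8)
The statement to prove is Corollary \ref{cor-mixed-virial}, which decomposes the virial $\langle \varphi, x\cdot\nabla V\varphi\rangle$ using the splitting $V=V^s+V^\ell$. The plan is to apply the Kato-form virial identity of Lemma \ref{lem-kato-virial} to the short-range part $V^s$ only, and to keep the long-range part's virial $x\cdot\nabla V^\ell$ as it stands. First I would note that by linearity of the distributional derivative,
\begin{equation*}
\langle \varphi, x\cdot\nabla V\varphi\rangle = \langle \varphi, x\cdot\nabla V^s\varphi\rangle + \langle \varphi, x\cdot\nabla V^\ell\varphi\rangle
\end{equation*}
for all $\varphi\in\D(P-A)$, where both quadratic forms on the right are well defined: the second by the hypothesis that $x\cdot\nabla V^\ell$ extends to a form bounded with respect to $(P-A)^2$, and the first because under the hypotheses $V^s$ and $|x|^2(V^s)^2$ are form bounded, so $V^s$ plays the role of ``$V$'' in Lemma \ref{lem-kato-virial}.

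Next I would invoke Lemma \ref{lem-kato-virial} with $V$ replaced by $V^s$. Strictly speaking, Lemma \ref{lem-kato-virial} is stated for the full potential $V$ satisfying Assumptions \ref{ass-B-mild-int}–\ref{ass-V-form small}; here one should observe that its proof uses only the relative form boundedness of $V^s$ and of $|x|^2(V^s)^2$ together with Assumptions \ref{ass-B-mild-int}–\ref{ass-B-rel-bounded}, all of which hold for $V^s$ by hypothesis (the Hermiticity of $V^s$ is guaranteed by Assumption \ref{ass-V-hermit}, and $V^s$ is form small because $V$ is form small and $V^\ell = V - V^s$ has a form-bounded virial, hence is form small too after possibly enlarging constants — or, more simply, one just re-runs the argument of Lemma \ref{lem-kato-virial} verbatim with $V^s$ in place of $V$). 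This yields
\begin{equation*}
\langle \varphi, x\cdot\nabla V^s\varphi\rangle = 2\im\langle xV^s\varphi, (P-A)\varphi\rangle - 3\langle\varphi, V^s\varphi\rangle
\end{equation*}
for all $\varphi\in\D(P-A)$.

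Combining the two displays gives the asserted identity immediately. The only genuinely delicate point is the bookkeeping of hypotheses: one must check that each ingredient needed to apply Lemma \ref{lem-kato-virial} to $V^s$ is available from the corollary's assumptions — chiefly the form boundedness of $V^s$ and $|x|^2(V^s)^2$ (assumed directly) and the relative form smallness of $V^s$ (which follows since $V = V^s + V^\ell$ is form small by Assumption \ref{ass-V-form small} and the virial form boundedness of $V^\ell$ forces $V^\ell$ itself to be controllable, or is anyway not needed beyond form boundedness in the proof of Lemma \ref{lem-kato-virial}). I expect this to be the main, though minor, obstacle; the algebraic manipulation is trivial once the decomposition of the virial distribution and the applicability of Lemma \ref{lem-kato-virial} to $V^s$ are in place.
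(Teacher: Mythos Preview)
Your proposal is correct and follows essentially the same route as the paper's proof, which simply reads ``The claim follows from Lemmas \ref{lem:xgradv} and \ref{lem-kato-virial}.'' You apply Lemma \ref{lem-kato-virial} to $V^s$ and keep the $V^\ell$ term as is; the paper additionally cites Lemma \ref{lem:xgradv}, presumably to justify that the $V^\ell$ contribution to the virial (defined via the dilation limit) agrees with the assumed form extension of $x\cdot\nabla V^\ell$, a point you handle by invoking the hypothesis directly.
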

\begin{proof} 
The claim follows from Lemmas \ref{lem:xgradv} and \ref{lem-kato-virial}. 
\end{proof}

\section{\bf The main result}
\label{sec-abs}

Once we have established the virial identities \eqref{eq-psiF-2} and \eqref{eq-energy boost-2}, we can follow the strategy of \cite{fh,fhhh,ahk}. 
This is done in two steps. First we show that if eigenfunctions corresponding to energies larger than $\Lambda$ exist, then they decay faster than exponentially. Second, we prove that such eigenfunctions have to vanish identically.    

\subsection{Super-exponential decay} \noindent  Given  $x\in\R^3,\, \lambda>0$, we set 
$$
\x_\lambda \coloneqq \sqrt{\lambda+|x|^2}\ . 
$$
If $\lambda=1$, we omit the subscript and write $ \x_1\, =\,  \x$. 

 We have

\begin{proposition} \label{prop-decay}
Assume that $B$ and $V$ satisfy Assumptions \ref{ass-V-hermit}-\ref{ass-bounded infinity} and that the magnetic field $A$ corresponding to $B$ is in the Poincar\'e 
gauge. 
Furthermore, assume that $\psi$ is a weak eigenfunction of the magnetic 
Schr\"odinger operator $H_{A,V}$ corresponding to the energy $E\in\R ,$ 
and that there exist $\ol{\mu}\ge 0$ and $\lambda>0$ such that 
$x\mapsto e^{\, \ol{\mu}\, \x_\lambda }\, \psi(x) \in L^2(\R^3,\C^2)$. If $E+\ol{\mu}^2>\Lambda$ 
with  $\Lambda$ given by \eqref{edge}, then 
\begin{equation} \label{exp-decay} 
x\mapsto e^{\, \mu  \x_\lambda  }\, \psi(x) \in L^2(\R^3,\C^2) \qquad \forall \, \mu >0, \quad \, \forall \, \lambda>0. 
\end{equation}
\end{proposition}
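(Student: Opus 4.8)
The plan is to run a bootstrap argument on the exponential weight, following the Froese–Herbst scheme as adapted in \cite{ahk}. Fix the eigenfunction $\psi$ with $H_{A,V}\psi=E\psi$ weakly, and suppose $e^{\ol\mu\x_\lambda}\psi\in L^2$ with $E+\ol\mu^2>\Lambda$. I would consider the set of admissible exponential rates
$$
S:=\{\mu\ge \ol\mu\, :\, e^{\mu\x_\lambda}\psi\in L^2(\R^3,\C^2)\},
$$
which is an interval containing $\ol\mu$, and let $\mu_\infty:=\sup S\in[\ol\mu,\infty]$. The goal is to show $\mu_\infty=\infty$; then a final soft step (choosing $\lambda$ freely, using that $\x_\lambda-\x_{\lambda'}$ is bounded) yields \eqref{exp-decay} for every $\mu,\lambda>0$. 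The argument for $\mu_\infty=\infty$ is by contradiction: assuming $\mu_\infty<\infty$, I would first show $e^{\mu_\infty\x_\lambda}\psi\in L^2$ (closedness of $S$), and then show that one can push the rate strictly past $\mu_\infty$, contradicting maximality — provided the energy stays above threshold, i.e. $E+\mu_\infty^2\ge E+\ol\mu^2>\Lambda$, which is exactly where the hypothesis $E+\ol\mu^2>\Lambda$ enters.

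The technical heart is the choice of weight function and the two virial identities. For $\mu\le\mu_\infty$ and a regularization parameter $\delta>0$ I would take
$$
F=F_{\mu,\delta}(x):=\mu\,\x_\lambda - \delta\,\x_\lambda^2 \quad\text{(or a bounded variant obtained by truncating }\x_\lambda\text{ at large radius)},
$$
so that $F$ is radial, bounded (after truncation), $\nabla F=xg$ with $g=g_{\mu,\delta}=\mu\x_\lambda^{-1}-2\delta$ eventually, $g\ge 0$ on the relevant region, and $|\nabla F|^2=|x|^2 g^2$ behaves like $\mu^2\frac{|x|^2}{\x_\lambda^2}\to\mu^2$ at infinity, while $x\cdot\nabla|\nabla F|^2$ and $(x\cdot\nabla)^2 g$ are controlled. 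For $\psi_F=e^F\psi$ one then has the \emph{upper} bound from Lemma \ref{lem-com-ef},
$$
\La\psi_F, i[H_{A,V},D]\psi_F\Ra = -4\|\sqrt{g}\,D\psi_F\|_2^2 + \La\psi_F,\big((x\cdot\nabla)^2 g - x\cdot\nabla|\nabla F|^2\big)\psi_F\Ra \le \La\psi_F,\big((x\cdot\nabla)^2 g - x\cdot\nabla|\nabla F|^2\big)\psi_F\Ra,
$$
and the \emph{lower} bound from Lemma \ref{lem-lowerb}, $\La\psi_F,i[H_{A,V},D]\psi_F\Ra\ge \kappa\La\psi_F,|\nabla F|^2\psi_F\Ra-c_\kappa\|\psi_F\|_2^2$. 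Combining the two, using that $|\nabla F|^2\to\mu^2$ (more precisely localizing to $\U_{R_j}^c$ and using Assumption \ref{ass-bounded infinity} to absorb the $|\wti B|$, $xV^s$ and $x\cdot\nabla V^\ell$ terms into small multiples of $\|\sigma\cdot(P-A)\psi_F\|_2^2$ plus $\beta,\omega_1,\omega_2$-weighted $\|\psi_F\|_2^2$), together with the energy identity \eqref{eq-energy boost}, $\|\sigma\cdot(P-A)\psi_F\|_2^2+\La\psi_F,V\psi_F\Ra=\La\psi_F,(E+|\nabla F|^2)\psi_F\Ra$, I would derive an a priori inequality of the schematic form
$$
\big(E+\mu^2 - \Lambda + o(1)\big)\,\|\psi_F\|_{2,\,\U_R^c}^2 \ \le\ C\,\|\psi_F\|_{2,\,\U_R}^2 \ +\ (\text{terms controlled by the hypothesis }e^{\ol\mu\x_\lambda}\psi\in L^2),
$$
where the $o(1)$ is uniform as $R\to\infty$ and $\Lambda$ is precisely \eqref{edge} because the quadratic-in-$(\beta+\omega_1)$ and linear-in-$\omega_2$ structure of \eqref{edge} comes from optimizing the Cauchy–Schwarz splittings against $E+\mu^2$. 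Since $E+\mu^2>\Lambda$ for $\mu$ near $\mu_\infty$, the coefficient on the left is bounded below by a positive constant for $R$ large, giving a uniform bound $\|\psi_F\|_2\le C$ independent of $\delta$; letting $\delta\to 0$ (monotone convergence) yields $e^{\mu\x_\lambda}\psi\in L^2$ for that $\mu$, and an iteration increasing $\mu$ in fixed steps $\eta$ (each step only needs the previous rate to run Lemma \ref{lem-lowerb} and the virial identities, which require $\psi_F\in\D(P-A)$) shows $\mu_\infty=\infty$.

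The main obstacle I anticipate is \textbf{making the limiting/regularization step rigorous}: the virial identities and Lemma \ref{lem-com-ef} require $F\in C^2$ \emph{bounded} with the several boundedness conditions on $g$, $x\cdot\nabla g$, $(x\cdot\nabla)^2 g$ and $\nabla(|\nabla F|^2)$, whereas the natural weight $\mu\x_\lambda$ is unbounded; so one must work with a family of genuinely bounded cutoff weights $F_{\mu,\delta,n}$, verify uniformly (in $n$, and as $\delta\to0$) all the quadratic-form estimates with constants depending only on the quantities in Assumptions \ref{ass-B-rel-bounded}–\ref{ass-bounded infinity}, and then pass to the limit by monotone/dominated convergence together with the closedness of the form $Q_{A,V}$. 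The second delicate point is bookkeeping the constants so that the exact threshold \eqref{edge} emerges — one must optimize the Cauchy–Schwarz inequalities
$$
2\re\La\sigma\cdot\wti B\psi_F,\sigma\cdot(P-A)\psi_F\Ra \le 2(\beta+o(1))\|\psi_F\|_2\,\|\sigma\cdot(P-A)\psi_F\|_2, \qquad 2\im\La xV^s\psi_F,(P-A)\psi_F\Ra\le 2(\omega_1+o(1))\|\psi_F\|_2\,\|(P-A)\psi_F\|_2,
$$
against the replacement $\|\sigma\cdot(P-A)\psi_F\|_2^2 \approx (E+\mu^2)\|\psi_F\|_2^2$ coming from \eqref{eq-energy boost}, and check that the resulting quadratic inequality in $\sqrt{E+\mu^2}$ has its relevant root at exactly $\tfrac12(\beta+\omega_1+\sqrt{(\beta+\omega_1)^2+2\omega_2})$; all other terms ($V$ itself, $V^s$, $x\cdot\nabla V^\ell$, the curvature terms $(x\cdot\nabla)^2 g$ and $x\cdot\nabla|\nabla F|^2$) are lower order at infinity and only contribute to the $o(1)$ and to $c_\kappa\|\psi_F\|_2^2$-type terms that are dominated by the strictly positive coefficient on the left.
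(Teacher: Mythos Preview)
Your overall framework is right in spirit, but the proposal misses the key mechanism by which the sharp threshold $\Lambda$ is extracted. The paper does \emph{not} aim for an a priori bound on $\|\psi_F\|_2$. It takes the bounded weight $F_{\mu,\eps}(x)=\tfrac{\mu}{\eps}(1-e^{-\eps\x})$, picks sequences $\mu_n\searrow\mu_*$, $\eps_n\to0$ with $a_n:=\|e^{F_n}\psi\|_2\to\infty$ (Lemma~\ref{lem-sequences}), and \emph{normalizes}: $\varphi_n:=e^{F_n}\psi/a_n$. Because $a_n\to\infty$ while the contribution from any compact set stays bounded, the sequence $\varphi_n$ escapes to infinity; this escape is exactly what allows one to invoke Assumption~\ref{ass-bounded infinity} and deduce (Lemmas~\ref{lem-punch-1} and \ref{lem-punch-2}) that the curvature terms in \eqref{eq-psiF-2} vanish in the limit, giving $\limsup_n\la\varphi_n,i[H,D]\varphi_n\ra\le 0$, while \eqref{eq-hd} together with the energy identity yields $\liminf_n\la\varphi_n,i[H,D]\varphi_n\ra\ge 2(E+\mu_*^2-\nu)-2(\beta+\omega_1)\sqrt{E+\mu_*^2-\nu}-\omega_2$, and $\Lambda$ drops out as the positive root of this quadratic.

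Without normalization your ``localize to $\U_{R_j}^c$'' step does not close: $\psi_F$ itself does not concentrate at infinity, so IMS cutoffs leave inner contributions and cross-terms that are not $o(\|\psi_F\|_2^2)$, and the coefficient $(E+\mu^2-\Lambda+o(1))$ you want does not emerge from the identities as written. Two further problems: the lower bound you actually invoke, Lemma~\ref{lem-lowerb}, uses the \emph{global} constants $\alpha_j$ of Assumption~\ref{ass-bounded unique continuation} rather than $\beta,\omega_1,\omega_2$, and in the paper it serves only in the second step (proof of Theorem~\ref{thm-abs}) \emph{after} super-exponential decay is already established --- by itself it cannot produce the sharp $\Lambda$. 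And your proposed weight $\mu\x_\lambda-\delta\x_\lambda^{\,2}$ has $g=\mu\x_\lambda^{-1}-2\delta$, which is negative for large $|x|$ and so violates the hypothesis $g\ge 0$ of Lemma~\ref{lem-com-ef}; the paper's $F_{\mu,\eps}$ has $g=\mu\x^{-1}e^{-\eps\x}\ge 0$ everywhere and is globally bounded by $\mu/\eps$.
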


\noindent The proof of Proposition \ref{prop-decay}  requires some preliminaries. Obviously it suffices to prove the statement for 
$\lambda=1$. First we consider the case 
$\ol{\mu}=0$ and  choose 
\begin{equation}  \label{eq-F}
F_{\mu,\eps}(x) = \frac\mu\eps \left(1-e^{-\eps\, \x}\right)\, ,
\end{equation}
for some $\mu\ge 0$ and $\veps>0$. We have   $F_{\mu,\eps}(x) \to \mu \x$ as $\eps\to 0$. 
Moreover, the identity
\begin{equation}\label{grad-F} 
	\nabla F_{\mu,\veps} = \mu \la x\ra^{-1} e^{-\veps\la x \ra} x\, 
\end{equation} 
implies
\begin{equation}\label{eq-g}
	g_{\mu,\veps}(x) = \mu \la x \ra^{-1} e^{-\veps\la x \ra}\, .
\end{equation}
Let 
$$
\mu_* = \sup \left\{ \mu\geq 0\, : \,  e^{\mu \x   } \psi \in L^2(\R^3,\C^2)\right\}\, ,
$$
be the maximal exponential decay rate of  $\psi$. To prove \eqref{exp-decay} we have to show that
 $\mu_*=\infty$. We will argue by contradiction.
 
 \begin{lemma} \label{lem-sequences}
Suppose that  $0\le \mu_*<\infty$.  Then there exist decreasing sequences $\mu_n$ and $\veps_n $ such that $\mu_n\to\mu_*$ and $\veps_n\to0$, as $n\to\infty$, and such that,
writing $F_n\coloneqq F_{\mu_n,\veps_n}$, we have
\begin{equation}  \label{div-n}
a_n : = \| e^{ F_n}\, \psi\|_2 \ \to \ \infty \quad \text{as} \quad  n\to\infty,
\end{equation} 
 \end{lemma}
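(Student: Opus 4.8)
The plan is to exploit two facts about the weight $F_{\mu,\eps}$: for each fixed $\eps>0$ it is bounded, so $a_n$ is always a genuine finite number, while $F_{\mu,\eps}(x)\to\mu\x$ pointwise as $\eps\to0$; these are then combined with the maximality built into the definition of $\mu_*$.

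First I would record the elementary structure of the admissible decay rates: the set $S:=\{\mu\ge0:\ e^{\mu\x}\psi\in L^2(\R^3,\C^2)\}$ contains $0$ and is an interval, since $0\le\mu'\le\mu$ implies $e^{\mu'\x}\le e^{\mu\x}$ pointwise, so $\mu\in S\Rightarrow\mu'\in S$. Hence $\mu_*=\sup S$, and because $\mu_*<\infty$ by hypothesis, every $\mu>\mu_*$ lies outside $S$, i.e.\ $\|e^{\mu\x}\psi\|_2=\infty$.

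Next, for a fixed $\mu>\mu_*$ I would show $\lim_{\eps\to0}\|e^{F_{\mu,\eps}}\psi\|_2=\infty$. The map $\eps\mapsto F_{\mu,\eps}(x)$ is nonincreasing — this follows from the elementary inequality $e^{-s}(1+s)\le1$ for $s\ge0$, applied with $s=\eps\x$ — and it increases to $\mu\x$ as $\eps\downarrow0$, so by monotone convergence $\|e^{F_{\mu,\eps}}\psi\|_2^2\uparrow\int_{\R^3}e^{2\mu\x}|\psi|^2\,dx=\infty$; alternatively one may simply apply Fatou's lemma along any sequence $\eps_k\to0$. Then I would assemble the sequences: put $\mu_n:=\mu_*+\tfrac1n$, which is strictly decreasing with $\mu_n\to\mu_*$ and $\mu_n>\mu_*$ for all $n$; applying the previous step with $\mu=\mu_n$, choose inductively $\eps_n\in\bigl(0,\min\{\eps_{n-1},\tfrac1n\}\bigr)$ (with $\eps_0:=1$) so that $\|e^{F_{\mu_n,\eps_n}}\psi\|_2\ge n$. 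Then $\eps_n$ is strictly decreasing to $0$, and with $F_n:=F_{\mu_n,\eps_n}$ we get $a_n=\|e^{F_n}\psi\|_2\ge n\to\infty$, which is \eqref{div-n}.

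The whole argument is soft; the only point that requires a line of care is the passage to the limit $\eps\to0$ inside the integral, which is handled at once by Fatou (or by monotone convergence after noting the monotonicity of $\eps\mapsto F_{\mu,\eps}$). I do not anticipate any serious obstacle here — the content of the lemma is really just the combination of the definition of $\mu_*$ with the pointwise convergence $F_{\mu,\eps}\to\mu\x$.
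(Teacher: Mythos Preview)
Your proof is correct and follows essentially the same route as the paper: both arguments show that $\eps\mapsto F_{\mu,\eps}(x)$ is decreasing via the inequality $(1+s)e^{-s}\le 1$, apply monotone convergence to get $\|e^{F_{\mu,\eps}}\psi\|_2\to\infty$ as $\eps\to 0$ for each $\mu>\mu_*$, set $\mu_n=\mu_*+\tfrac1n$, and then choose $\eps_n$ inductively. Your version is in fact slightly tidier in that you explicitly force $\eps_n<\tfrac1n$, whereas the paper only requires $\eps_n<\eps_{n-1}$ and leaves the reader to notice that $\eps_n\to 0$ is then forced by the uniform bound $F_{\mu_n,\eps}\le \mu_1/\eps$ together with $a_n>n$.
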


\begin{proof}
For a fixed $x$ and $\mu$ we have 
\begin{equation}
\partial_\eps F_{\mu,\veps}(x) = -\frac{\mu}{\eps^2} \big(1-(1+\eps \x) \, e^{-\eps\x}\big) \, .
\end{equation}
On the other hand, a short calculation shows that the function $t\mapsto  (1+t) e^{-t}$ is strictly decreasing on $(0,\infty)$. It follows that
$$
\partial_\eps F_{\mu,\veps}(x) <0 \qquad \forall\, \eps>0, \ \ \mu>0, \ \ x\in\R^3.
$$
Thus $F_{\mu,\veps}(x)$ is strictly decreasing in $\eps$ for any $\mu>0$ and 
$$
 F_{\mu,\veps}(x) \nearrow \mu \x \qquad \text{as} \quad \eps\searrow 0.
$$
By setting  $\mu_n =\mu_* +\frac 1n$,  we then have
\begin{equation}  \label{int-div}
\lim_{\eps\searrow 0} \,  \| e^{ F_{\mu_n, \eps}}\, \psi\|_2 = \| e^{ \mu \x }\, \psi\|_2 = +\infty \qquad \forall\, n\geq 1, 
 \end{equation}
by monotone convergence. Now we construct the sequence $\eps_n$ as follows. Take $\eps_1$ such that $\| e^{ F_{\mu_1, \eps_1}}\, \psi\|_2 >  1$, and 
for each $n\geq 2$ we choose $\eps_n < \eps_{n-1}$ so that 
$$
\| e^{ F_{\mu_n, \eps_n}}\, \psi\|_2 \, > \, n,
$$
which is possible in view of \eqref{int-div}. This proves the claim.
\end{proof} 
 
Now let 
$g_n(x)\coloneqq g_{\mu_n,\veps_n}$, and define
\begin{equation} \label{fin}
\varphi_n = \frac{ e^{ F_n}\, \psi}{\|e^{ F_n}\, \psi\|_2}\, . 
\end{equation}
Since $\mu_n\to\mu_*$, and since
\begin{equation} \label{fn-upperb}
F_n(x) \, \leq \, \mu_n \x\,  , 
\end{equation}
for any compact set $\omega\subset\R^3$ it holds
\begin{equation}\label{eq-local-vanishing-1}
	\La\varphi_n, \chi_\omega\,  \varphi_n \Ra\to  0 \quad \text{as } n\to\infty \,  
\end{equation}
where $\chi$ denotes the characteristic functions. 
Hence if $W$ is bounded and $W(x)\to 0$ as $x\to\infty$, then  
\begin{equation} \label{eq-local-vanishing-2}
	\La\varphi_n, W \varphi_n \Ra\to  0 \quad \text{as } n\to\infty. 
\end{equation}

\smallskip 

\noindent We will also need the following auxiliary Lemmas. 

\begin{lemma}\label{lem-punch-1}
	Let $F_n$, $g_n$, $\psi$, and $\varphi_n$ be given as above.  If $0<\mu_*<\infty$, then 
	\begin{equation}\label{eq-punch-1}
		\lim_{n\to\infty} \la e^{F_n}\psi, \veps_n\la x \ra  e^{F_n}\psi \ra =0\, .
	\end{equation}
	Moreover, if $0\le \mu_*<\infty$, then 
	\begin{align}
		\lim_{n\to\infty}\La \nabla F_n\varphi_n,\nabla F_n  \varphi_n\Ra 
			&= \mu_*^2  \label{eq-punch-2} 
	\end{align} 
	and 
	\begin{align}
		\lim_{n\to\infty}\La \varphi_n, \big( (x\cdot\nabla)^2g_n-x\cdot \nabla |\nabla F_n|^2\big) \varphi_n\Ra 
			&=   0  \label{eq-punch-3}	
	\end{align} 
\end{lemma}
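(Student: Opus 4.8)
\textbf{Proof strategy for Lemma \ref{lem-punch-1}.}

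The plan is to exploit the explicit formulas \eqref{eq-g} and \eqref{grad-F} for $g_{\mu,\veps}$ and $\nabla F_{\mu,\veps}$, together with the normalization $\|\varphi_n\|_2 = 1$ and the divergence $a_n = \|e^{F_n}\psi\|_2 \to \infty$ established in Lemma \ref{lem-sequences}. All three limits reduce to showing that certain explicit bounded functions of $x$, which are pointwise bounded uniformly in $n$ and which converge pointwise (away from a fixed region) to the "correct" limiting values, integrate against the probability densities $|\varphi_n|^2$ to the expected limit. The unifying tool is \eqref{eq-local-vanishing-2}: any bounded $W$ with $W(x)\to0$ at infinity has $\La\varphi_n, W\varphi_n\Ra \to 0$.

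First I would handle \eqref{eq-punch-1}. Here the key point is that $\veps_n \la x\ra$ is \emph{not} bounded uniformly, so \eqref{eq-local-vanishing-2} does not apply directly; instead one must use that the Gaussian-type weight $e^{F_n}$ itself tames the growth. The trick (as in \cite{ahk}) is to write $\veps_n\la x\ra\, e^{2F_n} = \veps_n\la x\ra\, e^{-\veps_n\la x\ra}\cdot e^{2F_n + \veps_n\la x\ra}$ and observe that, since $F_n(x) \le \mu_n\la x\ra$ and $F_{\mu_n,\veps_n}(x) = \tfrac{\mu_n}{\veps_n}(1-e^{-\veps_n\la x\ra})$, one has the pointwise bound $e^{F_n(x)} \le e^{F_{n-1}(x)}\cdot(\text{something controlled})$ — more precisely, comparing $F_n$ with $F_{n+1}$ or using monotonicity in $\mu$ and $\veps$, one shows $\veps_n\la x\ra\, e^{2F_n(x)} \le C\, e^{2F_m(x)}$ on $\{|x|\ge R\}$ for fixed $m<n$ and $R$ large, because $t e^{-t}\le C_\delta e^{-(1-\delta)t}$. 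This lets one dominate $\la e^{F_n}\psi, \veps_n\la x\ra e^{F_n}\psi\ra$ by $C a_m^2 = o(a_n^2)$ on the exterior region and by a $\veps_n\cdot(\text{fixed constant})$ term on any compact set; dividing by $a_n^2$ (i.e. passing to $\varphi_n$) kills both. This is the step I expect to be the main obstacle, since it is the only one where the unbounded weight forces a genuine comparison argument rather than a direct application of dominated convergence.

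For \eqref{eq-punch-2}: from \eqref{grad-F}, $|\nabla F_n|^2 = \mu_n^2\, \la x\ra^{-2}|x|^2\, e^{-2\veps_n\la x\ra} = \mu_n^2\big(1 - \la x\ra^{-2}\big)e^{-2\veps_n\la x\ra}$. Split this as $\mu_n^2 - \mu_n^2\big[\la x\ra^{-2} + (1-\la x\ra^{-2})(1-e^{-2\veps_n\la x\ra})\big]$. The first piece contributes $\mu_n^2\|\varphi_n\|_2^2 = \mu_n^2 \to \mu_*^2$. The bracketed correction is a bounded function; on any fixed ball it is $O(1)$ and is killed by \eqref{eq-local-vanishing-1}, while on $\{|x|\ge R\}$ the term $\la x\ra^{-2}$ is small and the term $(1-e^{-2\veps_n\la x\ra})$ needs the same exterior comparison as above (or can be absorbed: $(1-e^{-2\veps_n\la x\ra}) \le 2\veps_n\la x\ra$, reducing it to \eqref{eq-punch-1}). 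Hence the correction tends to $0$ and \eqref{eq-punch-2} follows. For \eqref{eq-punch-3}: compute $x\cdot\nabla g_n$ and $(x\cdot\nabla)^2 g_n$ from \eqref{eq-g}, and $x\cdot\nabla|\nabla F_n|^2$ from the formula above; each is an explicit linear combination of terms of the form $\mu_n^k\,\la x\ra^{-a}\,(\veps_n\la x\ra)^b\, e^{-c\veps_n\la x\ra}$ with either $a\ge 1$ or $b\ge 1$ (or both). Every such term is bounded uniformly in $n$ and vanishes at infinity for fixed $\veps_n>0$; more importantly the $\veps_n$-dependent ones carry a factor $\veps_n\la x\ra\, e^{-\veps_n\la x\ra}$ or $\la x\ra^{-1}$, so by \eqref{eq-local-vanishing-2} applied on compacta and the exterior comparison (or the crude bound $\veps_n\la x\ra e^{-\veps_n\la x\ra}\le e^{-1}$ combined with $\la x\ra^{-1}\to0$) each integrates against $|\varphi_n|^2$ to $0$. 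Collecting the pieces gives \eqref{eq-punch-3}.

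In all three cases the only subtlety beyond bookkeeping is the treatment of the exterior region where $\la x\ra$ is unbounded, and there the decisive input is the sub-exponential nature of $F_n$ relative to $\mu_n\la x\ra$ together with the elementary estimate $t^b e^{-t}\lesssim_\delta e^{-(1-\delta)t}$; I would state this estimate once and reuse it.
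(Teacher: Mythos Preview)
Your treatment of \eqref{eq-punch-2} and \eqref{eq-punch-3} is essentially the paper's: write $|\nabla F_n|^2 = \mu_n^2(1-\la x\ra^{-2})e^{-2\veps_n\la x\ra}$, reduce the error term via $1-e^{-2\veps_n\la x\ra}\le 2\veps_n\la x\ra$ to \eqref{eq-punch-1}, and for \eqref{eq-punch-3} bound the explicit combination by $\mu_n(\mu_n+1)\big[\la x\ra^{-2}+\la x\ra^{-1}+\veps_n\la x\ra+\veps_n^2\la x\ra\big]e^{-\veps_n\la x\ra}$, invoking \eqref{eq-local-vanishing-2} and \eqref{eq-punch-1}. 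One caveat: your parenthetical ``crude bound $\veps_n\la x\ra e^{-\veps_n\la x\ra}\le e^{-1}$'' is not enough on its own for the $\veps_n\la x\ra$-terms, since this function depends on $n$ and its region of decay escapes to infinity as $\veps_n\to 0$; the paper (and your main line) correctly reduce these terms to \eqref{eq-punch-1}.

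There is, however, a genuine gap in your sketch of \eqref{eq-punch-1}. The proposed pointwise inequality $\veps_n\la x\ra\, e^{2F_n(x)}\le C\, e^{2F_m(x)}$ is false: each $F_k$ is bounded (by $\mu_k/\veps_k$), so the right-hand side is bounded in $x$, while the left-hand side grows linearly in $\la x\ra$. The estimate $te^{-t}\le C_\delta e^{-(1-\delta)t}$ does not save this either, since it only yields $\veps_n\la x\ra\le C_\delta e^{\delta\veps_n\la x\ra}$, and $F_n+\tfrac{\delta}{2}\veps_n\la x\ra$ still has no useful bound by $\mu\la x\ra$ with $\mu<\mu_*$ (already $F_n\le\mu_n\la x\ra$ with $\mu_n>\mu_*$).

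The paper's comparison is different. One splits according to $\{\veps_n\la x\ra\le\delta\}$ versus $\{\veps_n\la x\ra>\delta\}$. On the first set, trivially $\la\varphi_n,\veps_n\la x\ra\varphi_n\ra\le\delta$. On the second set the decisive observation is that $\overline{\gamma}_\delta\coloneqq\sup_{t\ge\delta}\tfrac{1-e^{-t}}{t}<1$, whence $F_n\le\overline{\gamma}_\delta\,\mu_n\la x\ra$ there; picking $\overline{\gamma}_\delta<\kappa<1$, for large $n$ one has $F_n\le\kappa\mu_*\la x\ra$ with $\kappa\mu_*<\mu_*$, so the \emph{un-normalised} integral $\la e^{F_n}\psi,\id_{\{\veps_n\la x\ra>\delta\}}\la x\ra\, e^{F_n}\psi\ra$ is bounded by the fixed finite quantity $\la e^{\kappa\mu_*\la x\ra}\psi,\la x\ra\, e^{\kappa\mu_*\la x\ra}\psi\ra$. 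Division by $a_n^2\to\infty$ finishes. The missing idea in your sketch is that $F_n$ should be compared to a \emph{fixed linear} weight of rate strictly below $\mu_*$, not to another $F_m$; this comparison becomes available precisely on the set where $\veps_n\la x\ra$ is bounded away from zero.
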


\begin{proof}
Let  $\delta>0$. Since $\|\varphi_n\|_2=1$, it follows that 
\begin{align} \label{fi-n-eps}
	\La \varphi_n, \veps_n\la x\ra \varphi_n\Ra 
		\le \delta +  \La \varphi_n, \id_{\{\veps_n\la x \ra > \delta\}}\, \veps_n\la x\ra \varphi_n\Ra . 
\end{align}
Next we note that the mapping $t\mapsto \frac{1-e^{-t}}{t}$ is decreasing on $(0,\infty)$. Hence  
\begin{equation}
	\overline{\gamma}_\delta\coloneqq \sup_{t\ge \delta}\frac{1-e^{-t}}{t} <1.
\end{equation}
This shows that for any $x$ such that $\veps_n\la x \ra\ge\delta$ we have 
\begin{align*}
	F_n = \frac{\mu_n\la x \ra}{\veps_n\la x \ra}(1-e^{-\veps_n\la x \ra}) \le \mu_n\overline{\gamma}_\delta \la x \ra  .
\end{align*}
Let $\kappa$ be such that  $\overline{\gamma}_\delta<\kappa<1$. If $0<\mu_*<\infty$ then, by the definition of $\mu_*$, 
$\psi$ decays exponentially with any rate $\mu$ satisfying $\kappa\mu_*< \mu<\mu_*$. Since $\mu_n\gamma_\delta\to \gamma_\delta\mu_* <\kappa\mu_*$ as $n\to\infty$, 
this implies 
\begin{align*}
	\limsup_{n\to\infty}  
		\La e^{F_n}\psi, \id_{\{\veps_n\la x \ra > \delta\}}\, \la x\ra  e^{F_n}\psi \Ra 
		&\le  \limsup_{n\to\infty} \La e^{\mu_n\overline{\gamma}_\delta \la x \ra}\psi, \la x\ra  e^{\mu_n\overline{\gamma}_\delta \la x \ra}\psi \Ra 
		\le  \La e^{\kappa \mu_*\la x \ra}\psi, \la x\ra  e^{\kappa \mu_*  \la x \ra}\psi \Ra
		 <\infty .
\end{align*}
Equation \eqref{eq-punch-1} thus follows from \eqref{div-n} and \eqref{fi-n-eps}.
\smallskip

\noindent  To prove the remaining claims of the Lemma we need the identity 
\begin{equation}\label{eq-|grad F|^2}
	|\nabla F_n|^2 = \mu_n^2 \big( 1-\la x \ra^{-2} \big)e^{-2\veps_n\la x \ra}\, ,
\end{equation} 
which follows by a direct calculation from equation \eqref{grad-F}. Hence
\begin{equation}\label{eq-energy-shift-1}
  \begin{split}
	\mu_n^2-\La \nabla F_n \varphi_n, \nabla F_n \varphi_n\Ra 
		&= \mu_n^2 \left( \La \varphi_n, 
							\big(1- e^{-2\veps_n \la x \ra}\big)
						\varphi_n\Ra 
						+ \La \varphi_n, 
							\la x\ra^{-2} e^{-2\veps_n \la x \ra}	
						  \varphi_n\Ra 
					\right),
  \end{split}
\end{equation}
where we have used again the fact that  $\|\varphi_n\|_2=1$. 
If $\mu_*=\lim_{n\to \infty} \mu_n=0$, then \eqref{eq-energy-shift-1} shows
\begin{align*}
	\left| \mu_n^2-\La \nabla F_n \varphi_n, \nabla F_n \varphi_n\Ra  \right| \le 2\mu_n^2 \to 0 \quad \text{as } n\to\infty\, ,
\end{align*}
and hence  \eqref{eq-punch-2} with $\mu_*=0$. If $\mu_* >0$, then 
we insert the bound  $0\le 1- e^{-2\veps_n \la x \ra} \le 2\veps_n \la x \ra$ into  \eqref{eq-energy-shift-1} and get  
\begin{align*}
	\left| \mu_n^2-\La \nabla F_n \varphi_n, \nabla F_n \varphi_n\Ra  \right| 
		&\le \mu_n^2\left(   2 \La  \varphi_n, 
							\veps_n \la x \ra
						\varphi_n\Ra 
					+  \La \varphi_n, 
							\la x\ra^{-2}
						\varphi_n\Ra 
					\right) 
			\to 0 \quad \text{as } n\to\infty .
\end{align*}
In view of  \eqref{eq-punch-1} and \eqref{eq-local-vanishing-2} this proves \eqref{eq-punch-2} in the case $\mu_*>0$.
\smallskip

It remains to prove  \eqref{eq-punch-3}.  
From the definitions of $F_n$ and $g_n$  we deduce, after a short calculation, that 
\begin{equation}\label{eq-virial-lower-order}
	\left| (x\cdot\nabla)^2g_n-x\cdot \nabla |\nabla F|^2\right|
		\lesssim 
			\mu_n(\mu_n+1)
			\left[ \la x \ra^{-2} + \la x \ra^{-1}+ \veps_n \la x \ra + \veps_n^2 \la x \ra 
			\right] e^{-\veps_n\la x \ra}
\end{equation}
This and that boundedness of mapping $t\mapsto t e^{-t}$ on $[0,+\infty)$ implies that if $\mu_*=0$, then 
\begin{align*}
	\left|
		\La \varphi_n,\big( (x\cdot\nabla)^2g_n-x\cdot \nabla |\nabla F|^2\big) \varphi_n\Ra 
	\right|
		\lesssim \mu_n(\mu_n+1)
 \to 0  \quad \text{as } n\to\infty 
\end{align*}
If $0<\mu_*<\infty$, then we use \eqref{eq-virial-lower-order} to estimate 
\begin{align*}
	\left|
		\La \varphi_n,\big( (x\cdot\nabla)^2g_n-x\cdot \nabla |\nabla F|^2\big) \varphi_n\Ra
	\right|
		\lesssim 
			\La \varphi_n,\Big(\la x \ra^{-2} +\la x \ra^{-1}\Big)  \varphi_n\Ra 
			+ \La \varphi_n, \veps_n\la x \ra  \varphi_n\Ra 
			 \to 0  \quad \text{as } n\to\infty.
\end{align*}
Here we have used again equations \eqref{eq-punch-1} and \eqref{eq-local-vanishing-2}. This completes the proof of \eqref{eq-punch-3} and hence of the Lemma. 
\end{proof}

\begin{lemma}\label{lem-punch-2}
		Let $0\le \mu_*<\infty$ and $F_n$, $g_n$, and $\varphi_n$ be given as above. If $V$  satisfies Assumptions \ref{ass-V-form small} and  
		\ref{ass-V-vanishing infinity}, then 
  	\begin{align}
		\limsup_{n\to\infty}\, \la \varphi_n, V \varphi_n\ra = : \nu 
			&\leq 0  \label{eq-punch-2-1}\\
		\liminf_{n\to\infty}\la \sigma\cdot (P-A)\varphi_n,  \sigma\cdot(P-A) \varphi_n\ra 
			&\geq E+\mu_*^2 -\nu \, . \label{eq-punch-2-2}
	\end{align}
	Moreover, if the magnetic field $B$ satisfy Assumptions \ref{ass-B-rel-bounded} and \ref{ass-bounded infinity}, then 
	\begin{align}
		\limsup_{n\to\infty} \, |\La  \sigma\cdot\wti{B}\,  \varphi_n, \sigma\cdot (P-A)\, \varphi_n \Ra| 
			&\le \beta (E+\mu_*^2-\nu)^{1/2} \label{eq-punch-2-3}\, .
	\end{align}
	Finally, if one splits $V=V^s+V^\ell$, with $V^s$ and $V^\ell$ satisfying Assumptions
	\ref{ass-bounded infinity} and  \ref{ass-bounded unique continuation}, then 
		\begin{align}
 			\limsup_{n\to\infty} \La \varphi_n, x\cdot\nabla V \varphi_n\Ra 
			&\le  2\omega_1(E+\mu_*^2-\nu)^{1/2} + \omega_2 . \label{eq-punch-2-5}
	\end{align}
\end{lemma}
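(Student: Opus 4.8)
The plan is to analyze the sequence $\varphi_n$ defined in \eqref{fin} using the super-exponential setup already in place. Recall that $\|\varphi_n\|_2=1$ for all $n$, that $\varphi_n$ localizes at infinity in the sense of \eqref{eq-local-vanishing-1}--\eqref{eq-local-vanishing-2}, and that $\varphi_n = e^{F_n}\psi / a_n$ with $a_n\to\infty$ and $F_n$ radial satisfying the hypotheses of Lemma \ref{lem-com-ef} (here $g_n\ge 0$, $F_n$ bounded, with the required derivative bounds coming from \eqref{grad-F}, \eqref{eq-g}, \eqref{eq-|grad F|^2}). First I would prove \eqref{eq-punch-2-1}: splitting $V=V^s+V^\ell$ is not needed here; instead I use that the positive part $\la\varphi,V\varphi\ra_+$ vanishes at infinity (Assumption \ref{ass-V-vanishing infinity}). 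For $\varphi_n$ supported where $\varphi_n$ effectively lives, the positive-part bound with $R\to\infty$ gives $\la\varphi_n,V\varphi_n\ra \le \alpha_R\|(P-A)\varphi_n\|_2^2 + \gamma_R$; but I also need an a priori bound on $\|(P-A)\varphi_n\|_2$. This bound comes from \eqref{eq-energy boost}, i.e. $\|\sigma\cdot(P-A)\varphi_n\|_2^2 + \la\varphi_n,V\varphi_n\ra = E + \La\varphi_n,|\nabla F_n|^2\varphi_n\Ra$, whose right side is bounded uniformly in $n$ by \eqref{eq-punch-2} (it tends to $E+\mu_*^2$); combined with Assumption \ref{ass-V-form small} ($\alpha_0<1$) and Lemma \ref{lem-2a}, this controls $\|(P-A)\varphi_n\|_2^2$ uniformly. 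Feeding this back, letting first $n\to\infty$ (using that $\varphi_n$ vanishes on compacts) and then $R\to\infty$, yields $\limsup_n\la\varphi_n,V\varphi_n\ra\le 0$, so $\nu\le 0$.

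Next, for \eqref{eq-punch-2-2}, I take the exact energy identity \eqref{eq-energy boost} with $\psi_F$ replaced by $\varphi_n$ (which is legitimate since $\varphi_n\in\D(P-A)$ and is a normalized multiple of $e^{F_n}\psi$), giving
\[
\|\sigma\cdot(P-A)\varphi_n\|_2^2 = E + \La\varphi_n,|\nabla F_n|^2\varphi_n\Ra - \la\varphi_n,V\varphi_n\ra .
\]
Taking $\liminf$ and using \eqref{eq-punch-2} ($\La\varphi_n,|\nabla F_n|^2\varphi_n\Ra\to\mu_*^2$) together with $\limsup_n\la\varphi_n,V\varphi_n\ra=\nu$ yields $\liminf_n\|\sigma\cdot(P-A)\varphi_n\|_2^2\ge E+\mu_*^2-\nu$. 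For \eqref{eq-punch-2-3}, I apply Cauchy--Schwarz:
\[
|\La\sigma\cdot\wti B\,\varphi_n,\sigma\cdot(P-A)\varphi_n\Ra| \le \|\sigma\cdot\wti B\,\varphi_n\|_2\,\|\sigma\cdot(P-A)\varphi_n\|_2 = \||\wti B|\,\varphi_n\|_2\,\|\sigma\cdot(P-A)\varphi_n\|_2,
\]
using \eqref{sigma-w}. For the first factor I use Assumption \ref{ass-bounded infinity}, \eqref{ass-B-inft}: since $\varphi_n$ is essentially supported in $\U_j^c$ for large $n$ (more carefully: fix $j$, split off the part of $\varphi_n$ on $\U_{R_j}$, which vanishes as $n\to\infty$), one gets $\||\wti B|\varphi_n\|_2^2 \le \eps_j\|(P-A)\varphi_n\|_2^2 + \beta_j^2 + o(1)$. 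Using Lemma \ref{lem-2a} to pass to $\|\sigma\cdot(P-A)\varphi_n\|_2^2$ and letting $n\to\infty$ then $j\to\infty$ (so $\eps_j\to 0$, $\beta_j\to\beta$), and noting $\|\sigma\cdot(P-A)\varphi_n\|_2^2$ is bounded by $E+\mu_*^2-\nu+o(1)$ via \eqref{eq-energy boost} and $\nu\le\liminf\la\varphi_n,V\varphi_n\ra$ — actually here I need the $\limsup$ of $\|\sigma\cdot(P-A)\varphi_n\|_2^2$, which from \eqref{eq-energy boost} equals $E+\mu_*^2-\liminf_n\la\varphi_n,V\varphi_n\ra \le E+\mu_*^2-$ (the $\liminf$), and since $\nu$ is the $\limsup$ of $\la\varphi_n,V\varphi_n\ra$, a subsequence argument aligns these — gives the bound $\beta(E+\mu_*^2-\nu)^{1/2}$.

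Finally, for \eqref{eq-punch-2-5} I use Corollary \ref{cor-mixed-virial}:
\[
\La\varphi_n,x\cdot\nabla V\varphi_n\Ra = 2\,\im\La xV^s\varphi_n,(P-A)\varphi_n\Ra - 3\La\varphi_n,V^s\varphi_n\Ra + \La\varphi_n,x\cdot\nabla V^\ell\,\varphi_n\Ra.
\]
The middle term: by Assumption \ref{ass-V-vanishing infinity} the positive part of $V^s$ vanishes at infinity, so $\limsup_n(-3\la\varphi_n,V^s\varphi_n\ra)\le 0$ (we want an upper bound; $-\la\varphi_n,V^s\varphi_n\ra$ has $\limsup\le$ the $\limsup$ of $\la\varphi_n,V^s_-\varphi_n\ra$ — but actually since only the positive part is controlled, $-\la\varphi_n,V^s\varphi_n\ra = \la\varphi_n,(-V^s)\varphi_n\ra$ whose positive part need not vanish; instead I should note $\la\varphi_n,V^s\varphi_n\ra\ge -\la\varphi_n,(V^s)_-\varphi_n\ra$, hmm). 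To keep things clean I expect the intended route is: estimate $|\la\varphi_n,V^s\varphi_n\ra|$ but that requires $V^s$ vanishing at infinity, not just its positive part; I would instead absorb $-3\la\varphi_n,V^s\varphi_n\ra$ into the Cauchy--Schwarz term by writing $xV^s$ in terms of its bound via \eqref{ass-bounded infinity} — more precisely: for the first term, $2\,\im\La xV^s\varphi_n,(P-A)\varphi_n\Ra \le 2\|xV^s\varphi_n\|_2\|(P-A)\varphi_n\|_2$, and by the second inequality of Assumption \ref{ass-bounded infinity} (localized as above) $\|xV^s\varphi_n\|_2^2\le\eps_j\|(P-A)\varphi_n\|_2^2+\omega_{1,j}^2+o(1)$, so this term is $\le 2\omega_1(E+\mu_*^2-\nu)^{1/2}+o(1)$ after the same $n\to\infty$, $j\to\infty$ limits (the $-3\la\varphi_n,V^s\varphi_n\ra$ piece being handled by the positive-part vanishing in Assumption \ref{ass-V-vanishing infinity} since we need only an \emph{upper} bound, for which $\limsup_n(-3\la\varphi_n,V^s\varphi_n\ra)\le 3\limsup_n\la\varphi_n,(V^s)_-\varphi_n\ra$ requires the negative part to vanish — this is why Assumption \ref{ass-V-vanishing infinity} is stated for the positive part of $V^s$ and the relevant sign works out because $-V^s$ appears). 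For the last term, apply \eqref{ass-x nablaV_2-inft} from Assumption \ref{ass-bounded infinity} (localized), giving $\La\varphi_n,x\cdot\nabla V^\ell\varphi_n\Ra\le\eps_j\|(P-A)\varphi_n\|_2^2+\omega_{2,j}+o(1)\to\omega_2$. Combining yields \eqref{eq-punch-2-5}.

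The main obstacle I anticipate is the bookkeeping around localization and the interplay of $\limsup$/$\liminf$: each use of an ``at infinity'' assumption requires cutting $\varphi_n$ into a compactly supported piece (which vanishes in norm, and in $(P-A)$-norm, as $n\to\infty$ by \eqref{eq-local-vanishing-1} plus the uniform energy bound) and a piece supported in $\U_{R_j}^c$; then one must take $n\to\infty$ before $j\to\infty$. Keeping the constant $E+\mu_*^2-\nu$ consistent across all four estimates — in particular ensuring the $\|\sigma\cdot(P-A)\varphi_n\|_2$ appearing in the Cauchy--Schwarz bounds is controlled by exactly $(E+\mu_*^2-\nu)^{1/2}$ in the $\limsup$ — requires passing to a common subsequence along which $\la\varphi_n,V\varphi_n\ra\to\nu$, and then \eqref{eq-energy boost} gives $\|\sigma\cdot(P-A)\varphi_n\|_2^2\to E+\mu_*^2-\nu$ along that subsequence, which is what feeds \eqref{eq-punch-2-2}, \eqref{eq-punch-2-3}, \eqref{eq-punch-2-5}. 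The rest is routine given Proposition \ref{prop:magnetic-virial}, Corollary \ref{cor-mixed-virial}, Lemma \ref{lem-2a}, and Lemma \ref{lem-punch-1}.
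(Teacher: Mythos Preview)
Your plan matches the paper's approach throughout: the energy identity \eqref{eq-energy boost} for \eqref{eq-punch-2-2}, Cauchy--Schwarz together with \eqref{sigma-w} for \eqref{eq-punch-2-3}, Corollary \ref{cor-mixed-virial} for \eqref{eq-punch-2-5}, and a near/far cutoff to invoke the ``at infinity'' assumptions. One technicality on the localization: to make the compactly supported piece vanish in the $(P-A)$-norm you need more than ``\eqref{eq-local-vanishing-1} plus the uniform energy bound'' --- a uniform bound does not give vanishing. The paper uses a smooth partition $\xi_{R_-}^2+\xi_{R_+}^2=1$ and invokes \cite[Lem.~4.6]{ahk} for $\|(P-A)\xi_{R_-}\varphi_n\|_2\to 0$; the underlying reason is that $F_n$ is uniformly bounded on $\supp\xi_{R_-}$ (by \eqref{fn-upperb}), so $\|(P-A)\xi_{R_-}e^{F_n}\psi\|_2$ stays bounded while $a_n\to\infty$.

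There is, however, a genuine gap in your treatment of the term $-3\la\varphi_n,V^s\varphi_n\ra$ in \eqref{eq-punch-2-5}. Bounding this from above requires a \emph{lower} bound on $\la\varphi_n,V^s\varphi_n\ra$, whereas Assumption \ref{ass-V-vanishing infinity} controls only the positive part of $V^s$ and therefore yields an \emph{upper} bound --- the wrong direction. Your sign discussion does not close this, and ``absorbing into the Cauchy--Schwarz term'' is not carried out. The paper's device is exactly the missing step you were groping for: write
\[
|\la\varphi_n,V^s\varphi_n\ra| = \big|\la\, |x|^{-1}\varphi_n,\, |x|V^s\varphi_n\ra\big| \le \big\|\,|x|^{-1}\varphi_n\big\|_2\,\|xV^s\varphi_n\|_2 \ \lesssim\ R^{-1}
\]
after localizing to $\{|x|\ge R\}$, using the uniform control of $\|xV^s\varphi_n\|_2$ from Assumptions \ref{ass-bounded infinity}/\ref{ass-bounded unique continuation}. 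Letting $R\to\infty$ gives $\la\varphi_n,V^s\varphi_n\ra\to 0$, which disposes of the term entirely. (Your remark that the $\limsup/\liminf$ alignment in \eqref{eq-punch-2-3}--\eqref{eq-punch-2-5} is cleanest along a subsequence where $\la\varphi_n,V\varphi_n\ra\to\nu$ is well taken and in fact more careful than the paper's presentation.)
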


\begin{proof}
First we prove that 
\begin{equation} \label{sup-finite}
 \limsup_{n\to\infty}\, | \la \varphi_n, V \varphi_n\ra| <\infty.
\end{equation} 
Indeed, by Lemma \ref{lem-2a} and equation \eqref{ass-V0-eq} 
\begin{align*}
 | \la \varphi_n, V \varphi_n\ra|  & \leq \alpha_0 (1+\eta) \| \sigma\cdot (P-A)\varphi_n\|_2^2  +C
\end{align*}
for any $\eta >0$ and some $C>0$ independent of $n$. Using \eqref{eq-energy boost} with $F=F_n$ we then further get 
\begin{align*}
 | \la \varphi_n, V \varphi_n\ra|  & \leq \alpha_0 (1+\eta) \LL \varphi_n, (E+|\nabla F_n|^2)\varphi_n \RR  + \alpha_0 (1+\eta) | \la \varphi_n, V \varphi_n\ra| +C,
\end{align*}
and \eqref{sup-finite} follow by choosing $\eta$ small enough so that $\alpha_0 (1+\eta) <1$ and letting $n\to \infty$, see \eqref{eq-punch-2}.

 To prove \eqref{eq-punch-2-1} we let $j_m:[0,\infty)\to \R_+$, $m=1,2$, be infinitely often 
 differentiable on $(0,\infty)$ with $j_1(r)=1$ for $0\le r\le 1$, $j_1(r)>0$ for $r\le 3/2$,   $j_1(r)=0$ for 
 $r\ge  7/4$,  
 and $j_2(r)=0$ for $r\le 5/4$,   $j_2(r)>0$ for $r\ge  3/2$,    $j_2(r)=1$ for $r\ge  2$. 
 Then $\inf_{r\ge 0}(j_1^2(r)+j_2^2(r))>0$ and thus 
 \begin{align*}
 	\xi_1 \coloneqq \frac{j_1}{\sqrt{j_1^2+j_2^2}}\, , 
 	\qquad 
 	\xi_2 \coloneqq \frac{j_2}{\sqrt{j_1^2+j_2^2}} 
 \end{align*}
are infinitely often differentiable with bounded derivatives and $\xi_1^2+\xi_2^2=1$. Given $R\ge 1$ we set 
\begin{align*}
	\xi_{R_-}(x)\coloneqq \xi_1(|x|/R) ,  
	\qquad 	\xi_{R_+}(x)\coloneqq \xi_2(|x|/R) .
\end{align*}
Note that $\xi_{R_+}, \xi_{R_-}\in C^\infty(\R^3)$ that the all partial derivatives of $\xi_{R_+}$ and $\xi_{R_-}$ . Moreover, 
 $\xi_{R_-}$ has compact support, and $\supp(\xi_{R_+})\subset  \U_R^{\, c}=\{x\in\R^3: |x|\ge R\}$. 
By construction,
\begin{align*}
  \La \varphi_n, V\varphi_n \Ra_+ 
  	= \La \xi_{R_-}^2\varphi_n, V\varphi_n \Ra_+ +  \La \xi_{R_+}^2\varphi_n, V\varphi_n \Ra_+ .
\end{align*}
From \cite[Lemma~4.6]{ahk} it follows that 
$$
\sup_{R\geq 1}\sup_{n\in\N} \| (P-A\, )\xi_{R_+}\, \varphi_n\|   <\infty,  \qquad \text{and} \quad \forall\, R\geq 1\ : \limsup_{n\to\infty} \| (P-A)\, \xi_{R_-}\, \varphi_n\| = 0. 
$$
Hence a combination of Lemma \ref{lem-1}, Assumption \ref{ass-B-rel-bounded} and equation \eqref{eq-local-vanishing-2} applied with $W= \xi_{R_-}$ gives  
\begin{align}
		\sup_{R\geq 1}\sup_{n\in\N} \|\sigma\cdot (P-A\, )\xi_{R_+}\, \varphi_n\|  & <\infty,  \quad \text{\rm and}\quad  \forall\, R\geq 1\ :  \limsup_{n\to\infty} \|\sigma\cdot (P-A)\, \xi_{R_-}\, \varphi_n\| = 0.      \label{ahk-eq-1}
\end{align}
Let us now treat the terms containing $V$. Since $V$ is form bounded with respect to $(P-A)^2$,  it follows from Lemma \ref{lem-2a} and equations \eqref{eq-local-vanishing-2}, \eqref{ahk-eq-1} that for a fixed  $R\ge 1$ we have 
\begin{align*}
	 \La \xi_{R_-}^2\varphi_n, V\varphi \Ra_+
	 & = \La \xi_{R_-}\varphi_n, V\xi_{R_-}\varphi_n \Ra_+
	 	\, \lesssim\,  \|\sigma\cdot (P-A)\xi_{R_-}\, \varphi_n\|_2^2+ \|\xi_{R_-}\, \varphi_n\|_2^2
	 	\to 0\, , \text{ as } n\to\infty 
\end{align*}
Moreover, since $V_+$ vanishes at infinity w.r.t.~$(P-A)^2$, 
there exist sequences $\alpha_R, \gamma_R$ with 
$\alpha_R, \gamma_R \to 0$ as $R\to\infty$ such that 
\begin{align*}
\La \xi_{R_+}^2\varphi_n, V\varphi_n \Ra_+
		= \La \xi_{R_+}\varphi_n, V\xi_{R_+}\varphi_n \Ra_+\,
		\le\,  \alpha_R\, \|\sigma\cdot (P-A)\xi_{R_+}\, \varphi_n\|_2^2 + \gamma_R\|\xi_{R_+}\, \varphi_n\|_2^2 \, .
\end{align*}
Equation \eqref{ahk-eq-1} then shows that
\begin{align*}
		\limsup_{n\to\infty} \La \xi_{R_+}^2\, \varphi_n, V\varphi_n \Ra_+ \, 
		  \lesssim\,   \alpha_R  + \gamma_R 
		  \to 0\, , \text{ as } R\to\infty\, ,
\end{align*}
which proves \eqref{eq-punch-2-1}. Next,  from  \eqref{eq-energy boost}, \eqref{eq-punch-2} and \eqref{eq-punch-2-1} we obtain  
\begin{align*}
	\liminf_{n\to\infty} \La  \sigma\cdot(P-A)\varphi_n,  \sigma\cdot(P-A)\varphi_n \Ra 
		&= \liminf_{n\to\infty} \big( E + \La \nabla F_n \varphi_n, \nabla F_n \varphi_n \Ra - \La\varphi_n, V\varphi_n \Ra \big) \geq E+\mu_*^2 -\nu .
\end{align*}
Hence \eqref{eq-punch-2-2}. To treat the term with $|\wti{B}|$ we argue in the same way as  for $V$ and conclude that for any fixed $R$, 
\begin{align*}
	\limsup_{n\to\infty}  \La\varphi_n,|\wti{B}|^2\varphi_n\Ra
	\le 	\limsup_{n\to\infty}  \La  \xi_{R_+}^2\, \varphi_n|\wti{B}|^2,\,  \varphi_n\Ra  
	\, \lesssim\, \veps_R +\beta_R^2 ,   
\end{align*}
where we used Assumption \ref{ass-bounded infinity} and equation \eqref{ahk-eq-1}. 
Since $\veps_R\to 0$ and $\beta_R\to\beta$, as $R\to\infty$, with the help of \eqref{sigma-w} we get  
\begin{align*}
	\limsup_{n\to\infty} \|\sigma \cdot\wti{B}\, \varphi_n\|_2 \le \beta \, . 
\end{align*}
Moreover, equations \eqref{eq-punch-2} and \eqref{eq-energy boost} imply
\begin{equation} 
\limsup_{n\to\infty}  \| \sigma\cdot (P-A)\, \varphi_n \|_2\, \leq \, \sqrt{E+ \mu_*^2 -\nu }\ .
\end{equation} 
Hence
$$ |\La \sigma\cdot\wti{B}\,  \varphi_n, \sigma\cdot (P-A)\, \varphi_n \Ra| 
\le  \| \sigma\cdot \wti{B}\,  \varphi_n\|_2\,  \| \sigma\cdot (P-A)\, \varphi_n \|_2 \leq \beta  \sqrt{E+ \mu_*^2 -\nu }, 
$$  
which proves \eqref{eq-punch-2-3}.   
If the potential splits as $V=V^s+V^\ell$ with $V^s,V^\ell$ satisfying Assumptions \ref{ass-bounded unique continuation} and \ref{ass-bounded infinity}, then one can argue exactly as above to conclude with   
\begin{align*}
	\limsup_{n\to\infty} |\La xV^s\varphi_n,\sigma\cdot (P-A)\varphi_n\Ra| 
		&\le  \ \omega_1 \sqrt{E+ \mu_*^2 -\nu } \ \qquad\text{\rm and} \qquad 
		\limsup_{n\to\infty} |\La \varphi_n, x \cdot\nabla V^\ell\varphi_n\Ra| 
		   \le  \ \omega_2\, .
\end{align*}
Moreover, if $V^s$ and $V^\ell$ satisfying Assumptions \ref{ass-bounded unique continuation} and 
	\ref{ass-bounded infinity},and 
$\varphi\in\D(P-A)$ with $\supp(\varphi)\subset \{|x|\ge R\}$, then using Lemma \ref{lem-2a} we get
\begin{align*}
	 |\La \varphi,  V^s\, \varphi\Ra| 
	 	&=   |\La |x|^{-1}\varphi, |x| V^s\varphi\Ra| 
	 		\le \| |x|^{-1}\varphi\|_2 \||x| V^s\varphi\|_2
	 		\lesssim R^{-1}\|\varphi\|_2 \left( \| \sigma\cdot (P-A)\varphi\|_2^2 + \|\varphi\|_2^2 \right)^{1/2}\, .
\end{align*}
Thus $\lim_{n\to\infty}\La\varphi_n, V^s\, \varphi_n\Ra=0,$ and Corollary \ref{cor-mixed-virial} gives
  \begin{equation*}
  	\limsup_{n\to\infty} 
  		\La \varphi , x\cdot\nabla V\varphi \Ra
  		\le 2\omega_1(E+\mu_*^2-\nu)^{1/2} +\omega_2\, .
  		\qedhere
  \end{equation*}
\end{proof}

\begin{proof}[\bf Proof of Proposition \ref{prop-decay}]
Assume that $0\le \mu_*^2<\infty$. One easily verifies that $F_n$ and $g_n$ satisfy the assumptions of Lemma \ref{lem-com-ef}. 
The latter in combination with equation \eqref{eq-punch-3} shows that
\begin{equation}\label{eq-contradiction-1}
	\limsup_{n\to\infty} \La\varphi_n,  \, [H, i D] \, \varphi_n \Ra\le 0\, . 
\end{equation}
On the other hand, equation \eqref{eq-hd} and Lemma \ref{lem-punch-2} imply the lower bound
\begin{equation*}
  \begin{split}
	\liminf_{n\to\infty}  \La\varphi_n,  \, [H, iD] \, \varphi_n \Ra
		&\ge 2(E+\mu_*^2-\nu) -2(\beta+\omega_1) (E+\mu_*^2-\nu)^{1/2} - \omega_2\\
		&= 2\left[ \left(\sqrt{E+\mu_*^2-\nu}-\frac{\beta+\omega_1}{2}\right)^2 -\left( \frac{\beta+\omega_1}{2}\right)^2 -\frac{\omega_2}{2}\right] .
  \end{split}
\end{equation*}
Hence if
$$
\sqrt{E+\mu_*^2-\nu}\ > \ \frac{1}{2}\big(\beta+\omega_1 + \sqrt{(\beta+\omega_1)^2 +2\omega_2}\, \big)= \sqrt{\Lambda},
$$ 
then 
\begin{equation*}
	\liminf_{n\to\infty}  \La\varphi_n, i \, [H, D] \, \varphi_n \Ra  >0 ,
\end{equation*}
which contradicts \eqref{eq-contradiction-1}. 
Thus $\mu_*=\infty$ and  \eqref{exp-decay} follows.
\end{proof}

\smallskip

\subsection{Absence of embedded eigenvalues}
We are now in position to prove our main result.
\begin{theorem} \label{thm-abs}
  Let $B$ and $V$ satisfy Assumptions \ref{ass-V-hermit}-\ref{ass-bounded unique continuation}.
  Then the Pauli operator $H_{A,V}$ has 
  no eigenvalues in the interval $(\Lambda,\infty)$, where 
  $\Lambda$ is given by \eqref{edge}.  
 \end{theorem}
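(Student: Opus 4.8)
The plan is to run the two-step Froese--Herbst scheme announced at the beginning of this section, feeding in the two expressions for the weighted commutator: the identity \eqref{eq-psiF-2} of Lemma~\ref{lem-com-ef} and the lower bound \eqref{com-lowerb-2} of Lemma~\ref{lem-lowerb} (which encodes the energy--boost identity \eqref{eq-energy boost-2}). Let $\psi$ be an eigenfunction of $H_{A,V}$ with eigenvalue $E>\Lambda$; since $\D(H_{A,V})\subset\D(Q_{A,V})=\D(P-A)$, $\psi$ is in particular a weak eigenfunction, and the goal is to show $\psi=0$. First I would apply Proposition~\ref{prop-decay} with $\ol\mu=0$ and $\lambda=1$: as $\psi\in L^2$ and $E+\ol\mu^2=E>\Lambda$, this gives $e^{\mu\x}\psi\in L^2(\R^3,\C^2)$ for every $\mu>0$, i.e.\ $\psi$ decays faster than exponentially.

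For the second step assume, for contradiction, that $\psi\neq0$. For fixed $\mu,\veps>0$ take $F=F_{\mu,\veps}$ as in \eqref{eq-F}, so that $g=g_{\mu,\veps}\ge0$, $\psi_F=e^F\psi\in\D(P-A)$, and $F$ satisfies the hypotheses of Lemmas~\ref{lem-com-ef} and~\ref{lem-lowerb} (the form boundedness of $x\cdot\nabla V$ being supplied by Corollaries~\ref{cor-uc} and~\ref{cor-mixed-virial} under Assumptions~\ref{ass-V-hermit}--\ref{ass-bounded unique continuation}). Combining \eqref{com-lowerb-2} with \eqref{eq-psiF-2} and dropping the nonpositive term $-4\|\sqrt g\,D\psi_F\|_2^2$ gives
\[
\kappa\,\La\psi_F,|\nabla F_{\mu,\veps}|^2\psi_F\Ra - c_\kappa\|\psi_F\|_2^2 \ \le\ \La\psi_F,\big((x\cdot\nabla)^2 g_{\mu,\veps}-x\cdot\nabla|\nabla F_{\mu,\veps}|^2\big)\psi_F\Ra .
\]
Then I would let $\veps\to0$ with $\mu$ fixed. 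Since $|\nabla F_{\mu,\veps}|^2=\mu^2(1-\x^{-2})e^{-2\veps\x}\nearrow\mu^2(1-\x^{-2})$ and $e^{2F_{\mu,\veps}}\nearrow e^{2\mu\x}$, monotone convergence turns the left side into $\kappa\mu^2\big(\|\psi_\mu\|_2^2-\La\psi_\mu,\x^{-2}\psi_\mu\Ra\big)-c_\kappa\|\psi_\mu\|_2^2$, where $\psi_\mu:=e^{\mu\x}\psi\in L^2$. A direct computation shows that $(x\cdot\nabla)^2g_{\mu,\veps}-x\cdot\nabla|\nabla F_{\mu,\veps}|^2$ is the sum of a term of size $O(\mu\x^{-1})$, a nonpositive term, and a term bounded by $C\mu^2\,\veps\x\,e^{-2\veps\x}$, hence by $C\mu^2 e^{2\mu\x}|\psi|^2$, which is dominated in $L^1$ uniformly in $\veps$ and vanishes pointwise as $\veps\to0$; by dominated convergence the right side is therefore at most $C\mu\,\La\psi_\mu,\x^{-1}\psi_\mu\Ra$ in the limit. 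Dividing by $\|\psi_\mu\|_2^2$, setting $\phi_\mu:=\psi_\mu/\|\psi_\mu\|_2$ and $t_\mu:=\La\phi_\mu,\x^{-1}\phi_\mu\Ra$, and using $\x^{-2}\le\x^{-1}\le1$, this rearranges to $1-t_\mu\le C\kappa^{-1}\mu^{-1}+c_\kappa\kappa^{-1}\mu^{-2}$, so $t_\mu\to1$ as $\mu\to\infty$.

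To reach the contradiction, note that since $0\neq\psi\in L^2$ the number $K_0:=\esssup\{\x:x\in\supp\psi\}$ satisfies $K_0>1$ (possibly $K_0=\infty$). For any $K_1<K_0$, choosing $\delta>0$ with $K_1+\delta<K_0$ yields
\[
\|\id_{\{\x\le K_1\}}\phi_\mu\|_2^2 \ \le\ e^{-2\mu\delta}\,\|\psi\|_2^2\Big/\!\!\int_{\x>K_1+\delta}\!|\psi|^2\,dx \ \longrightarrow\ 0\qquad(\mu\to\infty),
\]
hence $\limsup_{\mu\to\infty}t_\mu\le K_1^{-1}$; letting $K_1\uparrow K_0$ gives $\limsup_\mu t_\mu\le K_0^{-1}<1$, contradicting $t_\mu\to1$. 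Therefore $\psi=0$, and $H_{A,V}$ has no eigenvalue in $(\Lambda,\infty)$. The hard part is the second step: one must verify that the $O(\mu^2)$ contributions to \eqref{eq-psiF-2} which survive the limit $\veps\to0$ carry a favorable sign — this is exactly where the explicit choice \eqref{eq-F} of the weight, combined with the Poincar\'e gauge and the Pauli algebra already exploited in deriving \eqref{eq-psiF-2} and Lemma~\ref{lem-lowerb}, does the work — and that all the limits are justified by monotone and dominated convergence built on the super-exponential decay of the first step. Everything else is bookkeeping with the hypotheses through Corollary~\ref{cor-uc} and the virial identities of Section~\ref{sec-comm}; in particular, no separate unique continuation theorem is needed, the super-exponential decay together with the commutator estimates being enough to annihilate the eigenfunction.
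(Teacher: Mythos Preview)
Your argument is correct and follows the same two–step Froese--Herbst scheme as the paper: Proposition~\ref{prop-decay} supplies super-exponential decay, and then the pair \eqref{eq-psiF-2}, \eqref{com-lowerb-2} is played off against itself to kill the eigenfunction. The difference is in how the second step is closed.

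The paper works with a \emph{three}-parameter weight $F_{\mu,\veps,\lambda}(x)=\tfrac{\mu}{\veps}\big(1-e^{-\veps\x_\lambda}\big)$, lets $\veps\to0$ first and then $\lambda\to0$. The second limit is used to remove the residual virial remainder and arrive directly at
\[
\kappa\,\mu^2\,\|\psi_\mu\|_2^2\ \le\ C\,\|\psi_\mu\|_2^2\qquad\forall\,\mu>0,
\]
which is an immediate contradiction for $\mu$ large. You instead fix $\lambda=1$, take only $\veps\to0$, and accept that the right-hand side retains a positive piece of order $\mu\x^{-1}$. You then neutralise it by a concentration argument: since $\phi_\mu=\psi_\mu/\|\psi_\mu\|_2$ escapes every bounded region as $\mu\to\infty$, one has $\limsup_\mu t_\mu\le K_0^{-1}<1$, contradicting $t_\mu\to1$. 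This trades the second limit (and the pointwise sign analysis of $(x\cdot\nabla)^2 g - x\cdot\nabla|\nabla F|^2$ at $\veps=0$) for a short localisation lemma; the paper's route gives a cleaner algebraic contradiction once that sign analysis is carried out, while yours is more robust in that it only needs the crude bound $|(x\cdot\nabla)^2 g_{\mu,0}|\le C\mu\x^{-1}$ together with dominated convergence, both of which you justify correctly via the super-exponential decay from the first step.
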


\begin{proof}
Assume that $E\La\varphi,\psi\Ra = Q_{A,V}(\varphi,\psi)$ holds for all $\varphi\in\D(Q_{A,V})=\D(P-A)$, and that  $E>\Lambda$. From Proposition \ref{prop-decay} we then deduce that 
\begin{equation*}
x\mapsto e^{\, \mu  \x_\lambda  }\, \psi(x) \in L^2(\R^3,\C^2) \qquad \forall \, \mu >0, \quad \, \forall \, \lambda>0 ,
\end{equation*}
where $  \x_\lambda= (\lambda+|x|^2)^{1/2}$. Let $\mu >0, \eps>0, \lambda>0$, and define 
$$
F(x) = F_{\mu,\eps,\lambda}(x) =  \frac\mu\eps \left(1-e^{-\eps \x_\lambda}\right)\, .
$$
Then 
$$
\nabla F_{\mu,\eps,\lambda}(x) = x g_{\mu,\eps,\lambda}(x), \qquad \text{with} \qquad   g_{\mu,\eps,\lambda}(x) = \frac{\mu\, e^{-\eps \x_\lambda}}{\sqrt{\lambda+|x|^2}}\, .
$$
Let $\psi_{\mu,\eps,\lambda} = e^{F_{\mu,\eps,\lambda}} \, \psi$. Lemma \ref{lem-lowerb} and equation \eqref{eq-psiF-2} give
\begin{align} \label{lowerb-3}
 \kappa\,  \La\psi_{\mu,\eps,\lambda}, |\nabla F_{\mu,\eps,\lambda}|^2\, \psi_{\mu,\eps,\lambda} \Ra &  \leq 
 \La\psi_{\mu,\eps,\lambda}, \big((x\cdot \nabla)^2 g_{\mu,\eps,\lambda} -x\cdot \nabla |\nabla F_{\mu,\eps,\lambda}|^2  \big) \psi_{\mu,\eps,\lambda} \Ra
 + C\, \|\psi_{\mu,\eps,\lambda}\|_2^2
\end{align}
for all $\mu,\eps,\lambda >0$ and  some constant $C$ independent of $\mu, \lambda$ and $\eps$.  
Now a direct calculation shows that
\begin{equation}
\lim_{\eps\to 0} \, x\cdot \nabla |\nabla F_{\mu,\eps,\lambda}(x)|^2  = 2\lambda \mu^2 \x_\lambda^{-1} \big(1\, -\x_\lambda^{-2}\big ) \, >0\, ,
\end{equation} 
and
\begin{equation}
\lim_{\eps\to 0} \, (x\cdot \nabla)^2  g_{\mu,\eps,\lambda}(x) = -2\lambda \mu \x_\lambda^{-3} |x|^2 <0 \, .
\end{equation}
Since 
$$
\lim_{\eps\to 0} F_{\mu,\eps,\lambda}(x) := F_{\mu,\lambda}(x) = \mu \x_\lambda\, ,
$$
in view of Proposition \ref{prop-decay} we can pass to limit $\eps\to 0$ in \eqref{lowerb-3} to obtain
\begin{equation} \label{lowerb-4}
 \kappa\,  \mu^2\, \La\psi_{\mu,\lambda}, \frac{|x|^2}{\lambda+|x|^2}\, \psi_{\mu,\lambda} \Ra \, \leq 
 C\, \|\psi_{\mu,\lambda}\|_2^2 \qquad \forall \, \mu, \lambda >0 , 
\end{equation}
where 
$$
\psi_{\mu,\lambda}(x) := e^{\mu \x_\lambda}\, \psi(x) \in L^2(\R^3,\C^2)\, .
$$
Using once again Proposition \ref{prop-decay} together with the monotone convergence theorem we arrive, by letting $\lambda\to 0$, at the inequality
\begin{equation}
 \kappa\,  \mu^2\,\|\psi_{\mu}\|_2^2\, \leq 
 C\, \|\psi_{\mu}\|_2^2 \qquad \forall \, \mu >0 , 
\end{equation}
where $\psi_\mu (x)=e^{\mu |x|}\, \psi(x)$. This is of course impossible for $\mu$ large enough. 
Hence $\psi_\mu=0$ and the claim follows.
\end{proof}

\begin{remark} \label{rem-zero}
The statement of Theorem \ref{thm-abs} cannot be extend to the interval $[\Lambda, \infty)$. Indeed, the result of Loss and Yau \cite{ly} shows that 
if 
\begin{equation} \label{ly-ex}
B(x) = \frac{12}{(1+|x|^2)^3} \big (2 x_1x_3-2x_2, 2x_2x_3+2x_1, 1-x_1^2-x_2^2+x_3^2 \big), 
\end{equation} 
then zero is an eigenvalue of $H_{A,0}$. More precisely, Loss and Yau proved that there exits $A:\R^3\to 	\R^3$ with $\rt A=B$ such that 
\begin{equation} \label{zero-ef}
\sigma\cdot (P-A)\, \varphi = 0, \qquad \varphi =  \frac{1+i \sigma\cdot x}{(1+|x|^2)^{3/2}}\,  \varphi_0 ,
\end{equation}
where $\varphi_0$ is an arbitrary normalized spinor.
In this case we have  $\Lambda=0$, see equation  \eqref{ly-ex} and Lemma \ref{lem-pointwise}. Hence Theorem \ref{thm-abs} guarantees the absence of eigenvalues in the interval $(0,\infty)$. The fact that our technique cannot be applied to exclude zero eigenvalue is reflected also by the power-like decay of $\varphi$ at infinity, see \eqref{zero-ef}, which is in stark contrast to 
the super-exponential decay of eigenfunctions with positive eigenvalues, cf.~Proposition \ref{prop-decay}.

For more examples of magnetic fields supporting a zero eigenvalue we refer to \cite{amn,elt,rs}. It should be pointed out, however, that the existence of a zero eigenvalue of $H_{A,0}$ is an exceptional event. Indeed, 
it was proven in \cite{be} that  those magnetic fields for which zero is not an eigenvalue of $H_{A,0}$ form a dense set in $L^{3/2}(\R^3; \R^3)$. 
\end{remark}

As a  simple consequence of Theorem \ref{thm-abs} we obtain sufficient conditions for absence of positive eigenvalues of  $H_{A,V}$. 

\begin{corollary} \label{cor-v0}
Let $B,V$ satisfy assumptions of Proposition \ref{prop-sufficient} and suppose moreover that  $B(x) = o(|x|^{-1})$ and $\|V(x)\| = o(|x|^{-1})$ as $|x|\to \infty$. Then the operator $H_{A,V}$ has no positive eigenvalues. 
\end{corollary}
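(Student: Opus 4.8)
The plan is to reduce the statement to Theorem \ref{thm-abs} by showing that, under the additional decay hypotheses, the critical energy $\Lambda(B,V)$ of \eqref{edge} equals $0$, so that the excluded interval $(\Lambda,\infty)$ is precisely $(0,\infty)$.

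First I would invoke Proposition \ref{prop-sufficient} (together with Proposition \ref{prop-Lp-unif} and Lemma \ref{lem-pointwise} as needed) to confirm that the hypotheses imposed on $B$ and $V$ imply all of Assumptions \ref{ass-V-hermit}--\ref{ass-bounded unique continuation}, so that Theorem \ref{thm-abs} is applicable; the remaining task is then purely to evaluate the constants $\beta$, $\omega_1$, $\omega_2$. For this I would exploit the freedom in the decomposition $V = V^s + V^\ell$ and simply take $V^s := V$, $V^\ell := 0$. This splitting is admissible: every condition on the long-range part is vacuous when $V^\ell \equiv 0$, while the conditions on the short-range part $V^s = V = o(|x|^{-1})$ are comfortably met (this is exactly the regime of the remark following \eqref{edge}, now allowing $B\ne 0$). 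Lemma \ref{lem-pointwise} then gives the three vanishing statements: since $|\widetilde B(x)| = |B(x)\wedge x| \le |B(x)|\,|x| = o(1)$ as $|x|\to\infty$ we obtain $\beta \le \limsup_{|x|\to\infty}|\widetilde B(x)| = 0$; since $\|V(x)\| = o(|x|^{-1})$ and $V^s = V$ we obtain $\omega_1 \le \limsup_{|x|\to\infty}|x\,V^s(x)|_{\C^2} = 0$; and since $V^\ell \equiv 0$ we obtain $\omega_2 \le \limsup_{|x|\to\infty}|(x\cdot\nabla V^\ell(x))_+|_{\C^2} = 0$. Plugging these into \eqref{edge},
\[
\Lambda(B,V) = \frac14\Big(\beta + \omega_1 + \sqrt{(\beta+\omega_1)^2 + 2\,\omega_2}\,\Big)^2 = 0 ,
\]
and Theorem \ref{thm-abs} rules out all eigenvalues in $(\Lambda,\infty) = (0,\infty)$, i.e.\ $H_{A,V}$ has no positive eigenvalues.

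I do not expect a serious obstacle, since the corollary is essentially a specialization of Theorem \ref{thm-abs}. The one point to handle carefully is verifying that the all-short-range splitting $V^s = V$, $V^\ell = 0$ is still compatible with Assumptions \ref{ass-bounded infinity} and \ref{ass-bounded unique continuation} — in particular that $\|x V^s \varphi\|_2^2$ and $|x|^2 (V^s)^2$ are suitably form-bounded relative to $(P-A)^2$ — which is where the strict decay $\|V(x)\| = o(|x|^{-1})$ (rather than mere boundedness of $|x|\,V$) is genuinely used, in combination with the relative-boundedness estimates underlying Proposition \ref{prop-sufficient}. Everything else is a direct substitution.
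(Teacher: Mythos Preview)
Your proposal is correct and follows essentially the same approach as the paper: choose the splitting $V^s=V$, $V^\ell=0$, use Proposition~\ref{prop-sufficient} (with Lemma~\ref{lem-pointwise}) and the decay hypotheses to get $\beta=\omega_1=\omega_2=0$, and then apply Theorem~\ref{thm-abs}. The paper's proof is terser but identical in content.
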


\begin{proof}
We use the splitting $V^s=V, \ V^\ell=0$. 
From the assumptions of the corollary and from Proposition \ref{prop-sufficient} we get  $\beta=\omega_1=\omega_2=0$. The claim thus follows 
from Theorem \ref{thm-abs}.
\end{proof}



\section{\bf Example}  
\label{sec-example} 

In this section we construct an example which indicates the sharpness of the critical energy $\Lambda$. Consider the radial magnetic field 
\begin{equation} \label{mag-example}
B(x) = \big(0,0, b(r) \big), \qquad b(r) = \frac{b_0}{\sqrt{1+r^2}}, \qquad r=\sqrt{x_1^2+x_2^2} \,
\end{equation} 
The vector potential associated to $B$ in the Poincar\'e gauge is then given by 
$$
A(x) = \frac{(-x_2, x_1,0)}{r} \, \int_0^r b(s) s\,  ds\,  =: \big(a_1(x_1,x_2), a_2(x_1,x_2), 0\big) .
$$
Let $v:\R \to (-\infty,0]$ be a bounded compactly supported function such that $\int_\R v <0$, and let
$$
V(x) = 
 \left( \begin{array}{cc}
0  & 0 \\
0 & v(x_3)
\end{array} \right). 
$$
Then
\begin{equation} \label{HA-eps}
H_{A,\eps V} = 
 \left( \begin{array}{cc}
h_+ \oplus P_3^2& 0 \\
0 & h_- \oplus (P_3^2 +\eps v) 
\end{array} \right) ,
\end{equation}
where $h_\pm$ are the operators in $L^2(\R^2)$ acting on their domains  as 
$$
h_\pm = (P_1-a_1)^2 + (P_2-a_2)^2  \pm b\, .
$$
Obviously, the operators $h_\pm$ are non-negative being the components of the associated two-dimensional Pauli operator. 
In addition, since $b(r)\to 0$ as $r\to\infty$, the structure of the spectra of $h_\pm$ is the same as that of the two-dimensional magnetic Schr\"odinger
operator $(P_1-a_1)^2 + (P_2-a_2)^2$. In particular, from the well-known example of Miller-Simon \cite{ms}, with a numerical error corrected in \cite[Sec.~6.1]{ahk}, it follows that 
 the spectrum of $h_\pm$ is {\em dense pure point} in $[0, b_0^2)$ and  {\em absolutely continuous} in $[b_0^2, \infty)$.

Hence if $\eps>0$ is small enough such that the operator $P_3^2 +\eps v$ in $L^2(\R)$ has exactly one discrete negative eigenvalue $-\lambda(\eps)$, then by  \eqref{HA-eps},
$$
\sigma_{\rm es} \big(H_{A,\eps V}\big) = [-\lambda(\eps), \infty),
$$
 and the spectrum of $H_{A,\eps V}$ is  {\em  dense pure point} in  $[-\lambda(\eps), b_0^2-\lambda(\eps)\big)$ and  {\em absolutely continuous} in  $\big[b_0^2-\lambda(\eps), \infty\big)$.

On the other hand, it is easily verified that $B$ and $V$ satisfy assumptions of Theorem \ref{thm-abs}, see Proposition \ref{prop-sufficient}. Moreover,  putting $V^s=0$ and $V^\ell=V$ gives  $\omega_1=\omega_2=0$. Therefore,  for any $\eps>0$ we have $\Lambda(B, \eps V)= \beta^2 =b_0^2$ . Since $\lambda(\eps) \to 0$ as $\eps\to 0$, the above example shows  that the threshold energy $\Lambda$ cannot be improved.

\begin{remark}
Examples of magnetic fields which produce  embedded eigenvalues  of $H_{A,V}$ above any fixed energy were found in \cite[Thm.~5.1]{ahs} and \cite[Thm.~3.1]{bbr}.  
\end{remark}


\section{\bf  The Dirac operator} 
\label{sec-dirac}

\noindent The  magnetic Dirac operator in $L^2(\R^3, \C^4)$ is given by 
\begin{equation} \label{dirac-op}
\mathbb{D} = \begin{pmatrix}
m \id& \sigma\cdot (P-A) \\
\sigma\cdot (P-A)  &   -m \id
\end{pmatrix}
\, ,
\end{equation}
where $m\geq 0$ is a constant. From Assumption \ref{ass-B-rel-bounded} and equation \eqref{eq-domains}
 it follows that $\D(\mathbb{D} ) = \D(P-A)$. Recall also that 
 $$
 \sigma(\mathbb{D})=  \sigma_{\rm es}(\mathbb{D}) =  (-\infty, -m]\cup[m,\infty) .
 $$
 
 We have

\begin{theorem} \label{thm-dirac}
Let $B$ satisfy the Assumptions \ref{ass-B-mild-int}, \ref{ass-B-rel-bounded}, \ref{ass-bounded infinity} and \ref{ass-bounded unique continuation} $($with $V=0)$. Suppose  that $A\in L^2_{\rm loc}(\R^3;\R^3)$ is such that ${\rm curl}\, A=B$. 
Then the Dirac operator $\mathbb{D}$ has no eigenvalues in 
$$
\big(-\infty, -\sqrt{\beta^2 +m^2} \ \big) \, \cup \big( \sqrt{\beta^2 +m^2}, \, \infty\big ),
$$
where $\beta$ is given by \eqref{def-beta-omega}.
\end{theorem}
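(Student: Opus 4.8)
The plan is to reduce the spectral problem for the Dirac operator $\mathbb{D}$ to the spectral problem for a Pauli-type operator, to which Theorem \ref{thm-abs} (or rather the machinery behind it) applies. The standard device is squaring: $\mathbb{D}^2$ is block diagonal, with diagonal blocks $(\sigma\cdot(P-A))^2 + m^2 = H_{A,0} + m^2$ acting on each $\C^2$-factor. Concretely, if $\Psi = \binom{\varphi}{\chi} \in \D(\mathbb{D}) = \D(P-A)$ satisfies $\mathbb{D}\Psi = E\Psi$, then both components satisfy $\mathbb{D}^2\Psi = E^2\Psi$, i.e. $(H_{A,0}+m^2)\varphi = E^2\varphi$ and $(H_{A,0}+m^2)\chi = E^2\chi$. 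Hence $E^2 - m^2$ is an eigenvalue of $H_{A,0}$ (with eigenfunction $\varphi$ or $\chi$, whichever is nonzero; at least one is, since $\Psi\neq 0$).

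**Key steps.** First I would verify that the squaring argument is legitimate at the level of quadratic forms: since $\D(\mathbb{D})=\D(P-A)=\D(\sigma\cdot(P-A))$ by \eqref{eq-domains}, and $H_{A,0}$ is the self-adjoint operator associated with $Q_{A,0}$, a weak eigenfunction $\Psi$ of $\mathbb{D}$ at energy $E$ yields, by testing $\mathbb{D}\Psi=E\Psi$ against $\mathbb{D}\Phi$ for $\Phi\in\D(P-A)$, that each $\C^2$-component is a weak eigenfunction of $H_{A,0}$ at energy $E^2-m^2$. Second, I would apply Theorem \ref{thm-abs} with $V=0$: choosing the trivial splitting $V^s=V^\ell=0$ gives $\omega_1=\omega_2=0$, so $\Lambda(B,0) = \tfrac14(\beta+\sqrt{\beta^2})^2 = \beta^2$. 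The hypotheses of Theorem \ref{thm-abs} with $V=0$ are exactly Assumptions \ref{ass-B-mild-int}, \ref{ass-B-rel-bounded}, \ref{ass-bounded infinity} and \ref{ass-bounded unique continuation} (the $V$-assumptions being vacuous), which are precisely what is assumed here. Therefore $H_{A,0}$ has no eigenvalue in $(\beta^2,\infty)$. Third, I would conclude: if $|E| > \sqrt{\beta^2+m^2}$, then $E^2 - m^2 > \beta^2$, so $E^2-m^2$ cannot be an eigenvalue of $H_{A,0}$, contradicting the existence of $\Psi$. Hence $\mathbb{D}$ has no eigenvalue outside $[-\sqrt{\beta^2+m^2},\sqrt{\beta^2+m^2}]$, which is the claim.

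**Main obstacle.** The only genuinely delicate point is the first step: justifying rigorously that a weak eigenfunction of $\mathbb{D}$ gives rise to a weak eigenfunction of $H_{A,0}$, since $\mathbb{D}$ is a first-order operator and $H_{A,0}$ is defined via a form whose domain is $\D(\sigma\cdot(P-A))$. One must check that $\mathbb{D}\Psi \in L^2$ (which holds since $\Psi$ is an eigenfunction) and that the off-diagonal structure of $\mathbb{D}$ together with the identity $\mathbb{D}^2 = (H_{A,0}+m^2)\oplus(H_{A,0}+m^2)$ at the form level lets one write, for all $\Phi = \binom{\eta}{0}$ with $\eta\in\D(P-A)$,
\begin{equation}
Q_{A,0}(\eta,\varphi) + m^2\La\eta,\varphi\Ra = \La \mathbb{D}\Phi, \mathbb{D}\Psi\Ra = E\La\mathbb{D}\Phi,\Psi\Ra = E^2\La\eta,\varphi\Ra,
\end{equation}
using that $\mathbb{D}$ is self-adjoint on $\D(P-A)$ and that $\Psi$ is a weak eigenfunction. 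A short computation with the block form, using $\D(\sigma\cdot(P-A))=\D(P-A)$, makes the middle equality precise. Everything else is a direct substitution into Theorem \ref{thm-abs}. I would also remark (or state as Corollary \ref{cor-dirac}) the immediate consequence that if, in addition, $B(x)=o(|x|^{-1})$ then $\beta=0$ and $\mathbb{D}$ has no eigenvalues in $(-\infty,-m)\cup(m,\infty)$, i.e. no embedded eigenvalues at all.
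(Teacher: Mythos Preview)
Your proposal is correct and follows essentially the same route as the paper: square the Dirac operator to obtain $\mathbb{D}^2=(H_{A,0}+m^2)\oplus(H_{A,0}+m^2)$ at the form level, infer that any eigenfunction of $\mathbb{D}$ at energy $E$ yields a weak eigenfunction of $H_{A,0}$ at energy $E^2-m^2$, and then invoke Theorem~\ref{thm-abs} with $V=0$ (so $\Lambda=\beta^2$). The only point the paper makes explicit that you do not is the preliminary use of gauge invariance of $\sigma(\mathbb{D})$ to reduce to the Poincar\'e gauge \eqref{eq-a}, since the arbitrary $A$ in the hypothesis need not satisfy $x\cdot A=0$ while the machinery behind Theorem~\ref{thm-abs} is built for that gauge.
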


\begin{proof}
Since the spectrum of $\mathbb{D}$ is gauge invariant, we  may suppose without loss of generality that $A$ is given by \eqref{eq-a}. Note that 
\begin{equation} 
\mathbb{D}^2 = 
 \begin{pmatrix}
H_{A,0}+ m^2\id  & 0\\
0 &   H_{A,0}+ m^2\id 
\end{pmatrix}
\, , \\[3pt]
\end{equation}
 in the sense of quadratic  forms on $\D(P-A)$. This means that if $\mathbb{D}\, \psi= E \psi$ for some $\psi\in\D(P-A)$, then $\psi$ is a weak eigenfunction of $H_{A,0}$ relative to eigenvalue $E^2-m^2$. Since $\Lambda=\beta^2$,  in view of Theorem \ref{thm-abs} we must have $E^2-m^2 \leq \beta^2 $.
\end{proof}

\begin{corollary} \label{cor-dirac} 
Let $B$ be such that $|\wti B|\in L^p_{\rm loc} (\R^3)$ for some $p>3$, and suppose that $B(x) = o(|x|^{-1})$ as $|x|\to\infty$. Then the  operator $\mathbb{D}$ has no eigenvalues in 
$(-\infty, - m ) \, \cup ( m, \, \infty)$.
\end{corollary}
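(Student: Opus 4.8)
The plan is to deduce Corollary \ref{cor-dirac} directly from Theorem \ref{thm-dirac}, so the only real work is to check that the pointwise hypotheses $|\wti B|\in L^p_{\rm loc}(\R^3)$ for some $p>3$ and $B(x)=o(|x|^{-1})$ as $|x|\to\infty$ imply the structural Assumptions \ref{ass-B-mild-int}, \ref{ass-B-rel-bounded}, \ref{ass-bounded infinity} and \ref{ass-bounded unique continuation} (with $V=0$) that enter Theorem \ref{thm-dirac}, and that these hypotheses force $\beta=0$; then, since $\sqrt{\beta^2+m^2}=m$, Theorem \ref{thm-dirac} excludes eigenvalues in $(-\infty,-m)\cup(m,\infty)$. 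The passage from pointwise to structural hypotheses is exactly the content of the Appendix, so concretely I would invoke Lemma \ref{lem-pointwise} and Propositions \ref{prop-sufficient}, \ref{prop-Lp-unif} specialized to $V\equiv0$.

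Spelled out: for Assumption \ref{ass-B-mild-int}, since $|\wti B|^2\in L^{p/2}_{\rm loc}$ with $p/2>3/2$ while $|x-w|^{-1}\in L^q_{\rm loc}$ for every $q<3$, Hölder's inequality with $q\in(p/(p-2),3)$ (chosen slightly below $3$ so as to absorb the logarithmic weight) gives $|x-w|^{-1}\log_+^2(R/|x-w|)\,|\wti B_w(x)|^2\in L^1_{\rm loc}$ for all $R>0$. For the relative form bounds in Assumptions \ref{ass-B-rel-bounded} and \ref{ass-bounded unique continuation} I would split $|\wti B|^2=|\wti B|^2\chi_{\U_R}+|\wti B|^2\chi_{\U_R^c}$: the first term lies in $L^{p/2}$ with $p/2>3/2$ and has compact support, hence is infinitesimally form bounded relative to $-\Delta$ and therefore, via the diamagnetic inequality \eqref{diamagnetic-inequality}, relative to $(P-A)^2$; the second satisfies $\||\wti B|\chi_{\U_R^c}\|_\infty\to0$ since $|\wti B(x)|\le|x|\,|B(x)|=o(1)$, so it is form bounded with relative bound $0$. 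Hence $|\wti B|^2$ is infinitesimally form bounded with respect to $(P-A)^2$, which gives the $\wti B$-part of \eqref{ass-B1-eq} and \eqref{ass-B2-eq} with $\alpha_1<1$, while \eqref{ass-V1-eq}, \eqref{ass-xV-quantitative} are vacuous because $V=0$. For $|B|^2$ I would argue similarly: at infinity $|B|^2=o(|x|^{-2})$, which is form bounded relative to $-\Delta$ with arbitrarily small relative bound by Hardy's inequality, and the local part is handled through the Poincaré-gauge representation $A(x)=\int_0^1\wti B(tx)\,dt$, which yields $A\in L^p_{\rm loc}$ with $p>3$, together with $B=\curl A$ being the given function. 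Finally, Assumption \ref{ass-bounded infinity} holds with $\beta_j\to0$ because $|\wti B(x)|\to0$, and Lemma \ref{lem-pointwise} then gives $\beta\le\limsup_{|x|\to\infty}|\wti B(x)|=0$.

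With all hypotheses of Theorem \ref{thm-dirac} verified and $\beta=0$, that theorem shows $\mathbb D$ has no eigenvalues in $(-\infty,-\sqrt{\beta^2+m^2})\cup(\sqrt{\beta^2+m^2},\infty)=(-\infty,-m)\cup(m,\infty)$, which is the assertion. The step I expect to demand the most care is the relative form boundedness of $|B|^2$ in Assumption \ref{ass-B-rel-bounded}: the hypotheses directly constrain only $\wti B$, whereas $|B|$ cannot be pointwise bounded by $|\wti B|$ (the component of $B$ parallel to $x$ is invisible to $\wti B$), so one must pass through the Poincaré gauge and $B=\curl A$ locally and use the decay $B(x)=o(|x|^{-1})$ at infinity — this is the one place where the Appendix does genuine analytic work; everything else is bookkeeping once it is in hand.
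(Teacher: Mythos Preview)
Your proposal is correct and matches the paper's approach exactly: the paper's proof is the single line ``This is a combination of Proposition \ref{prop-sufficient} and Theorem \ref{thm-dirac},'' and your plan unpacks precisely this, invoking the Appendix to verify the structural assumptions (with $\beta=0$ since $|\wti B(x)|\le |x|\,|B(x)|\to 0$) and then applying Theorem \ref{thm-dirac}. Your closing remark that the $|B|^2$ form bound is the delicate step is well taken, but the paper delegates this entirely to Proposition \ref{prop-sufficient} without further comment.
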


\begin{proof}
This is a combination of Proposition \ref{prop-sufficient} and Theorem \ref{thm-dirac}.
\end{proof}

\begin{remark} \label{rem-zero-dirac} 
As in the case of Pauli operator we note that the claim of Corollary \ref{cor-dirac} cannot be extended to the set $(-\infty, - m ] \, \cup [ m, \, \infty)$. Indeed, the magnetic field given by  \eqref{ly-ex} satisfies assumptions of Corollary \ref{cor-dirac}, but the associated Dirac operator $\mathbb{D}$ has eigenvalues $m$ and $-m$. To see this, consider the spinor $\varphi\in L^2(\R^3,\C^2)$ given by \eqref{zero-ef}. Then, with a slight abuse of notation, 
$$
\mathbb{D} 
\begin{pmatrix}
\varphi \\
0 
\end{pmatrix}  = m \begin{pmatrix}
\varphi \\
0
\end{pmatrix}  \qquad \text{and} \qquad 
\mathbb{D} 
\begin{pmatrix}
0 \\
\varphi 
\end{pmatrix}  = -m 
\begin{pmatrix}
0 \\
\varphi
\end{pmatrix} .
$$

One should mention that sufficient conditions for the absence of all eigenvalues of $\mathbb{D}$ were established in \cite{cfk}.  Indeed, it was proved there that when $A\in W^{1,3}_{\rm loc}(\R^3)$, then the operator  $\mathbb{D}$ has no eigenvalues in 
$(-\infty, - m ] \, \cup [ m, \, \infty)$, and therefore no eigenvalues at all, if the functional inequality 
\begin{equation} \label{cfk} 
\int_{\R^3} |x|^2 |B|^2\, |u|^2 \  \leq \ c^2\!\int_{\R^3} |(P-A) \,u|^2\\[3pt]
\end{equation} 
holds for all $u\in C_0^\infty(\R^3)$ and with a constant $c$ which satisfies 
$$
c\, \Big( 11 +\frac{3^{3/2}}{2}\, \sqrt{c}\ \Big) <1,
$$
see \cite[Thm.~3.6]{cfk}.

\end{remark}

\begin{remark} 
Non existence of eigenvalues of the perturbed Dirac operator $\mathbb{D} + \id q$ was studied already by Kalf \cite{kalf}. He proved that if
\begin{equation} \label{kalf}
|x| \big( |q(x)| +|B(x)| \big) \to 0 \qquad \text{as} \ \ |x| \to \infty,
\end{equation}
then the operator $\mathbb{D} +\id  q$ has no eigenvalues in $\R\setminus [-m,m]$. Note that \eqref{kalf} implies $\beta=0$. The result of Kalf was later extended to matrix valued potentials in \cite{bg}.

\end{remark}


\appendix


\section{\bf Pointwise and local $L^p$ conditions}  
\label{sec-pointwise}

Here we formulate sufficient conditions which guarantee the validity of Assumptions \ref{ass-V-vanishing infinity}, \ref {ass-V-bounded-infinity}, \ref{ass-bounded infinity}. 

\subsection{Pointwise conditions}

\begin{lemma}  \label{lem-pointwise}
Given a magnetic field $B$ and potential $V=V^s+V^\ell$ assume that 
$|\wti{B}|, |V|_{\C^2},  |xV^s|_{\C^2}$, and $|x\cdot\nabla V^\ell|_{\C^2}$ are bounded outside of a compact set, and that
$$
\lim_{|x|\to \infty}  |V_+(x)|_{\C^2} 
=0.
$$ 
Then Assumptions \ref{ass-V-vanishing infinity}, \ref {ass-V-bounded-infinity} and \ref{ass-bounded infinity} are satisfied and  
\begin{equation} \label{beta-omega-2}
\beta\le  \, \limsup_{|x|\to\infty} |\wti B(x)|, 
\quad \omega_1 \, \le  \, \limsup_{|x|\to\infty} |x\, V^s(x)|_{\C^2}, \quad\text{and }\ \  \omega_2\, \le \, 
\limsup_{|x|\to\infty}\, | (x\cdot \nabla V^\ell(x))_+ |_{\C^2}.
\end{equation}
\end{lemma}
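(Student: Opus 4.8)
The goal is to deduce, from pointwise boundedness and decay hypotheses on $|\wti B|$, $|V|_{\C^2}$, $|xV^s|_{\C^2}$ and $|x\cdot\nabla V^\ell|_{\C^2}$, the operator-theoretic statements in Assumptions \ref{ass-V-vanishing infinity}, \ref{ass-V-bounded-infinity} and \ref{ass-bounded infinity}, together with the bounds \eqref{beta-omega-2}.

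\medskip

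\textbf{Plan.} The strategy is to convert pointwise bounds on a Hermitian matrix valued multiplication operator $M$ into quadratic form bounds of the type appearing in those assumptions, using only the elementary inequality $\La\varphi,M\varphi\Ra\le \esssup_{x}|M(x)_+|_{\C^2}\,\|\varphi\|_2^2$ (and its absolute-value version $|\La\varphi,M\varphi\Ra|\le \esssup_x|M(x)|_{\C^2}\|\varphi\|_2^2$), restricted to $\varphi$ supported in the complement of a large ball. First I would fix a compact set $K$ outside of which all four quantities are bounded. Choosing $R_j\to\infty$ with $K\subset\U_{R_j}$, for $\varphi\in\D(P-A)$ with $\supp\varphi\subset\U_j^c$ one has directly
\begin{equation*}
\||\wti B|\varphi\|_2^2\le \Big(\esssup_{|x|\ge R_j}|\wti B(x)|\Big)^2\|\varphi\|_2^2,\qquad
\|xV^s\varphi\|_2^2\le \Big(\esssup_{|x|\ge R_j}|xV^s(x)|_{\C^2}\Big)^2\|\varphi\|_2^2,
\end{equation*}
and $\La\varphi,x\cdot\nabla V^\ell\varphi\Ra\le \big(\esssup_{|x|\ge R_j}|(x\cdot\nabla V^\ell(x))_+|_{\C^2}\big)\|\varphi\|_2^2$. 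Thus \eqref{ass-B-inft}--\eqref{ass-x nablaV_2-inft} hold with $\eps_j=0$ and with $\beta_j,\omega_{1,j},\omega_{2,j}$ equal to the respective essential suprema over $|x|\ge R_j$; passing to the limit $j\to\infty$ and recalling the definition \eqref{def-beta-omega} and the definition of $\limsup_{|x|\to\infty}$ given in Section \ref{ssec-notation} yields \eqref{beta-omega-2}. Hence Assumption \ref{ass-bounded infinity} is verified.

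\medskip

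For Assumption \ref{ass-V-bounded-infinity} I would argue similarly: since $|V|_{\C^2}$ is bounded outside $K$, for $\varphi$ supported in $\U_R^c$ with $R$ large enough that $K\subset\U_R$ one gets $\La\varphi,V\varphi\Ra\le \big(\esssup_{|x|\ge R}|V(x)|_{\C^2}\big)\|\varphi\|_2^2$, which is \eqref{eq-bounded infinity} with $\alpha_R=0$ and $\gamma_R=\esssup_{|x|\ge R}|V(x)|_{\C^2}$; this is bounded and $\liminf_{R\to\infty}\gamma_R<\infty$, so $V$ is bounded at infinity in the sense of Definition \ref{def-bounded infinity}. Because $|xV^s|_{\C^2}$ is bounded outside $K$, on $\U_R^c$ we have $|V^s|_{\C^2}\le R^{-1}|xV^s|_{\C^2}$, so $|V^s|_{\C^2}$ is likewise bounded (indeed vanishing) at infinity, giving the $V^s$ part of Assumption \ref{ass-V-bounded-infinity}. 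For Assumption \ref{ass-V-vanishing infinity} I would use the hypothesis $\lim_{|x|\to\infty}|V_+(x)|_{\C^2}=0$: then $\la\varphi,V\varphi\ra_+\le\big(\esssup_{|x|\ge R}|V(x)_+|_{\C^2}\big)\|\varphi\|_2^2$ for $\varphi$ supported in $\U_R^c$, and the coefficient tends to $0$ as $R\to\infty$, so we may take $\alpha_R=0$ and $\gamma_R=\esssup_{|x|\ge R}|V(x)_+|_{\C^2}\to0$. For the $V^s$ and $V^\ell$ pieces, note $|V^s_+|_{\C^2}\le R^{-1}|xV^s|_{\C^2}\to0$ on $\U_R^c$; and $V^\ell_+=(V-V^s)_+\le V_++|V^s|_{\C^2}$ pointwise in the sense of the positive-part norm, so $|V^\ell_+|_{\C^2}\to0$ at infinity as well. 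This gives Assumption \ref{ass-V-vanishing infinity}.

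\medskip

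\textbf{Main obstacle.} The routine part is the pointwise-to-form conversion; the only genuinely delicate points are bookkeeping ones. First, one must be careful that the positive part of a difference of Hermitian matrices is controlled: the needed inequality $|(A+B)_+|_{\C^2}\le |A_+|_{\C^2}+|B|_{\C^2}$ for Hermitian $A,B$ follows from $(A+B)_+\le A_++|B|_{\C^2}\id$ in the operator order, which in turn follows from $A+B\le A+|B|_{\C^2}\id$ together with monotonicity and positive-homogeneity of $X\mapsto X_+$ under addition of a multiple of the identity — this is the step I would state as a small auxiliary lemma. Second, the definition of $\limsup_{|x|\to\infty}$ in the paper is via essential suprema over $|x|\ge r$, which matches exactly the quantities produced above, so no measure-theoretic subtlety arises there. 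Finally, one should double-check that in all three assumptions the allowed choice $\eps_j=0$, $\alpha_R=0$ is consistent with the way the constants are later used (it is, since those are upper bounds on the allowed coefficients). I expect no real difficulty beyond assembling these observations cleanly.
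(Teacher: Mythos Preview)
Your proposal is correct in spirit and matches the paper's approach: the paper's own proof is a single line (``This is a straightforward consequence of the definitions of $\beta$, $\omega_1$ and $\omega_2$''), and your detailed verification is exactly the sort of unpacking one would supply. The choices $\eps_j=0$, $\alpha_R=0$ with the essential suprema as the remaining constants are precisely what the definitions call for, and your identification of $\lim_j$ of these suprema with the $\limsup_{|x|\to\infty}$ of Section~\ref{ssec-notation} is on the nose.

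One small correction: the operator inequality you invoke, $(A+B)_+\le A_+ + |B|_{\C^2}\id$ in the Loewner order, is \emph{false} in general. For instance with
\[
A=\begin{pmatrix}10 & 0\\ 0 & -1\end{pmatrix},\qquad B=\begin{pmatrix}0 & 5\\ 5 & 0\end{pmatrix},
\]
one computes that $A_++|B|_{\C^2}\id-(A+B)_+$ has negative determinant. The map $X\mapsto X_+$ is not operator-monotone. What \emph{is} true, and is all you need, is the scalar inequality
\[
|(A+B)_+|_{\C^2}=(\lambda_{\max}(A+B))_+\le(\lambda_{\max}(A)+|B|_{\C^2})_+\le(\lambda_{\max}(A))_+ +|B|_{\C^2}=|A_+|_{\C^2}+|B|_{\C^2},
\]
which follows directly from the monotonicity of the largest eigenvalue $\lambda_{\max}$ under the Loewner order (via the Rayleigh quotient). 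Replace your auxiliary lemma's justification with this one-line argument and the proof is complete.
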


\begin{proof}
This is a straightforward consequence of the definitions of $\beta, \omega_1$ and $\omega_2$.
 \end{proof}

 \subsection{Local $L^p$ conditions}
The conditions of Lemma \ref{lem-pointwise} can be relaxed by considering potentials which are not necessarily bounded at infinity, but which belong to $L_{\rm loc}^p(\R^3)$ for a suitable $p$. 

\begin{proposition} \label{prop-sufficient}
Let $B,V$ satisfy conditions of Lemma \ref{lem-pointwise}, and let $V$ satisfy Assumption \ref{ass-V-hermit}. Suppose moreover that $|\wti B| \in  L_{\rm loc}^p(\R^3)$ and  $|V^{s,\ell}(\cdot)|_{\C^2} \in L_{\rm loc}^p(\R^3)$ for some $p>3$. 
Then all the hypotheses of Section \ref{sec-ass} are satisfied with $\alpha_j=0$ for $j=0,1,2,3$, and the constants $\beta, \omega_1, \omega_2$ satisfy \eqref{beta-omega-2}. 
\end{proposition}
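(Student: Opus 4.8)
The plan is to verify Assumptions \ref{ass-V-hermit}--\ref{ass-bounded unique continuation} in turn, reading off the bounds \eqref{beta-omega-2} for $\beta,\omega_1,\omega_2$ along the way. Assumption \ref{ass-V-hermit} is part of the hypotheses, so nothing is needed there. Each of the remaining assumptions is, in one guise or another, a relative form bound against $(P-A)^2$ of a nonnegative object built from $|B|^2$, $|\wti B|^2$, $|V|_{\C^2}$, $|xV^s|_{\C^2}^2$, $|V^s|_{\C^2}$ or $(x\cdot\nabla V^\ell)_+$, required either for all $\varphi\in\D(P-A)$ or only for $\varphi$ with $\supp(\varphi)\subset\U_R^{\,c}$. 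The analytic ingredient for the global bounds is the local--$L^p$ form-boundedness statement packaged as Proposition \ref{prop-Lp-unif}: if $0\le W\in L^{p/2}_{\mathrm{loc}}(\R^3)$ with $p>3$ is bounded outside a compact set, then $W$ is relatively form bounded with respect to $(P-A)^2$ with relative bound $0$, i.e.\ for every $\delta>0$ there is $C_\delta$ with $\La\varphi,W\varphi\Ra\le\delta\,\|(P-A)\varphi\|_2^2+C_\delta\|\varphi\|_2^2$ for all $\varphi\in\D(P-A)$; this is the content of ``$\alpha_j=0$'' in the assertion. By the diamagnetic inequality \eqref{diamagnetic-inequality} it suffices to prove this with $P-A$ replaced by $P$, where it is a standard consequence of the Sobolev embedding $H^1(\R^3)\hookrightarrow L^{2p/(p-2)}(\R^3)$ (valid since $p>3$) combined with Hölder's inequality on small balls, the local $L^{p/2}$-norms of $W$ being made as small as desired by shrinking the balls.

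I would first dispose of Assumption \ref{ass-B-mild-int}, which is independent of the rest: since $p>3$ we have $p/(p-2)<3$, so $x\mapsto|x-w|^{-1}\log_+^2\!\big(R/|x-w|\big)$ lies in $L^{p/(p-2)}_{\mathrm{loc}}(\R^3)$ for every $R>0$, while $|\wti B_w|^2\in L^{p/2}_{\mathrm{loc}}$ and $\tfrac{2}{p}+\tfrac{p-2}{p}=1$, so Hölder's inequality yields \eqref{B-int-cond}. For the global relative bounds, by the hypotheses of Lemma \ref{lem-pointwise} together with the $L^p_{\mathrm{loc}}$ assumptions, each of $|B|^2$, $|\wti B|^2$, $|V|_{\C^2}$, $|xV^s|_{\C^2}^2$, $|V^s|_{\C^2}$ belongs to $L^{p/2}_{\mathrm{loc}}$ and is bounded outside a compact set, so Proposition \ref{prop-Lp-unif} applies to each. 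This yields Assumption \ref{ass-B-rel-bounded}, Assumption \ref{ass-V-form small} with relative bound $0$ (hence $\alpha_0$ as small as we like), and the inequalities \eqref{ass-B2-eq}, \eqref{ass-xV-quantitative} of Assumption \ref{ass-bounded unique continuation} with $\alpha_1,\alpha_3$ arbitrarily small; for \eqref{ass-V1-eq} one splits $\R^3=K\cup K^c$ with $K$ a compact set outside which $|x\cdot\nabla V^\ell|_{\C^2}$ is bounded, handling $K^c$ with a vanishing $\|(P-A)\varphi\|_2^2$-coefficient and $K$ as before. Thus all of $\alpha_0,\dots,\alpha_3$ may be taken arbitrarily small, so \eqref{alphas} holds.

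For the behaviour--at--infinity assumptions the $L^p$ machinery is not needed: by the hypotheses of Lemma \ref{lem-pointwise} the functions $|\wti B|$, $|xV^s|_{\C^2}$, $|V|_{\C^2}$, $|V^s|_{\C^2}$ and $|x\cdot\nabla V^\ell|_{\C^2}$ are bounded outside a compact set, so for $R$ large and $\supp(\varphi)\subset\U_R^{\,c}$ each of the inequalities in Assumptions \ref{ass-V-bounded-infinity}, \ref{ass-bounded infinity} holds with a vanishing $\|(P-A)\varphi\|_2^2$-coefficient and with $\|\varphi\|_2^2$-coefficient equal to the essential supremum of the relevant quantity over $\{|x|\ge R\}$; in particular $\eps_j=0$ is admissible, and letting $R\to\infty$ in \eqref{def-beta-omega} gives $\beta\le\limsup_{|x|\to\infty}|\wti B(x)|$, $\omega_1\le\limsup_{|x|\to\infty}|xV^s(x)|_{\C^2}$ and $\omega_2\le\limsup_{|x|\to\infty}|(x\cdot\nabla V^\ell(x))_+|_{\C^2}$, which is \eqref{beta-omega-2}. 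Assumption \ref{ass-V-vanishing infinity} follows the same way, now using the hypothesis $|V_+(x)|_{\C^2}\to0$ to force the corresponding $\|\varphi\|_2^2$-coefficients to tend to $0$.

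The step I expect to need the most care is the global virial bound \eqref{ass-V1-eq}. Away from a compact set $x\cdot\nabla V^\ell$ is a bounded function and causes no trouble, but near the origin it is a priori only a distribution, and $|V^\ell|_{\C^2}\in L^p_{\mathrm{loc}}$ by itself says nothing about its derivative. One must therefore ensure the quadratic form $\La\varphi,x\cdot\nabla V^\ell\varphi\Ra$ is well defined and $(P-A)^2$-form bounded near the origin --- either by using the freedom in the splitting $V=V^s+V^\ell$ to take $V^\ell$ smooth, or supported in $K^c$, so that $x\cdot\nabla V^\ell$ is a genuine function, or by reading the local $L^p$ hypothesis as also covering $x\cdot\nabla V^\ell$, in which case Proposition \ref{prop-Lp-unif} applies to $(x\cdot\nabla V^\ell)_+$ exactly as to the other quantities. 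Once that is settled, the rest is routine bookkeeping of Hölder exponents and essential suprema.
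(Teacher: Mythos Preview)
Your overall plan matches the paper's for all assumptions except one, and the way you package the key analytic input is essentially what the paper does. A minor bookkeeping point: what you label ``Proposition \ref{prop-Lp-unif}'' is not that proposition (which concerns the uniformly local class $L^p_{\text{loc,unif}}$, $p>3/2$); the statement you actually describe---relative form bound $0$ for $W\in L^{p/2}_{\rm loc}$, $p>3$, bounded outside a compact set---is exactly what the paper establishes \emph{inside} the proof of Proposition \ref{prop-sufficient} itself, via the compact Sobolev embedding $H^1(K)\hookrightarrow L^s(K)$, $2\le s<6$, H\"older on $K$, and the diamagnetic inequality. So you have the right argument with a mislabelled reference.

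The genuine gap is the local part of the virial bound \eqref{ass-V1-eq}. Your ``$K$ as before'' would require $(x\cdot\nabla V^\ell)_+\in L^{p/2}_{\rm loc}$ on $K$, which is not among the hypotheses: Lemma \ref{lem-pointwise} gives nothing about $x\cdot\nabla V^\ell$ inside the compact set. Your two proposed fixes do not close this gap for the proposition as stated: the splitting $V=V^s+V^\ell$ is \emph{given} as part of the data (the Assumptions in Section \ref{sec-ass} must be verified for that splitting, and the constants $\omega_1,\omega_2$ in \eqref{beta-omega-2} refer to it), so you are not free to replace $V^\ell$ by a smooth modification; and reading $x\cdot\nabla V^\ell\in L^p_{\rm loc}$ into the hypothesis would be strengthening it. The paper's device is to integrate by parts on a ball $\U_R\supset K$: for scalar $W$ and $u\in\h^1(\R^3)$,
\[
\int_{\U_R} (x\cdot\nabla W)\,|u|^2\,dx
= R\int_{\partial\U_R} W\,|u|^2\,dS
- \int_{\U_R} W\big(3\,|u|^2 + 2\,\mathrm{Re}(\bar u\, x\cdot\nabla u)\big)\,dx\,,
\]
which expresses the virial entirely in terms of $W$, with no derivative of $W$ appearing. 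The boundary term is handled by the compact trace embedding $H^1(\U_R)\hookrightarrow L^2(\partial\U_R)$ (plus diamagnetic), giving arbitrarily small relative bound; in the bulk term one uses the Poincar\'e gauge identity $x\cdot\nabla u = i\,x\cdot(P-A)u$ together with the already established $L^p_{\rm loc}$ bound on $W=V^\ell_{jk}$. Polarisation then passes from scalar $W$ to the matrix-valued $V^\ell$. This yields \eqref{ass-V1-eq} with $\alpha_2$ arbitrarily small, using only $|V^\ell|_{\C^2}\in L^p_{\rm loc}$.
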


\begin{proof}
If $|\wti B| \in  L_{\rm loc}^p(\R^3)$ with $p>3$, then it is easily seen that Assumption \ref{ass-B-mild-int} holds. In view of Lemma \ref{lem-pointwise} it thus remains to prove Assumptions \ref{ass-B-rel-bounded}, \ref{ass-V-form small} and \ref{ass-bounded unique continuation}.  Given a matrix valued function $M$ on  $\R^3$ and a test function 
\eqref{phi-u-v} with $u,v \in  \D(P-A)$, we have 
\begin{align} 
\|  M \varphi \|_2^2 \, & \leq \, \|\|M(x)\|_{\C^2}\, \varphi\|_2^2 = \int_{\R^3}  \|M(x)\|^2_{\C^2} \big(|u(x)|^2+ |v(x)|^2) \, dx, \label{M-bound-1}\\
\big | \LL \varphi, M \varphi \RR \big | \, & \leq \, \LL \varphi, \|M(x)\|_{\C^2}\, \varphi \RR = \int_{\R^3}  \|M(x)\|_{\C^2} \big(|u(x)|^2+ |v(x)|^2)\, dx .\label{M-bound-2}
\end{align}

\smallskip

\noindent  So let $ u\in \D(P-A)$, and  let $W\in L_{\rm loc}^p(\R^3), \, p>3,$ be bounded outside a compact set $K\subset\R^3$.  
The compactness of the Sobolev embedding $H^1(K) \hookrightarrow L^s(K), \, 2\leq s < 6$ and the diamagnetic inequality imply that for any $\eps>0$ there exists $C_\eps$ such that 
\begin{equation} \label{eq-comp}
\left(\int_{K} |u|^s dx \right)^{\frac 2s} \ \leq \ \eps \|\nabla |u|\|_{L^2(\R^3)}^2 + C_\eps \|u\|_{L^2(\R^3)}^2 \leq \ \eps \|(P-A) u\|_{L^2(\R^3)}^2 + C_\eps \|u\|_{L^2(\R^3)}^2  \qquad \forall\, s \in [2,6) .
\end{equation} 
Equation \eqref{eq-comp} and the H\"older inequality give
\begin{align}
\| W u\|_{L^2(\R^3)}^2\  & \leq  \|W\|_{L^\infty(K^c)}^2\,   \|u\|_{L^2(\R^3)}^2 +  \|W\|^2_{L^p(K)}\,  \left(\int_{K} |u|^{p'} dx \right)^{\frac{2}{p'}} 
 \leq \eps \| (P-A) u\|_{L^2(\R^3)}^2 + C_\eps \,  \|u\|_{L^2(\R^3)}^2,   \label{bound-0}
\end{align}
where $p' \in (2,6)$ satisfies $\frac 2p + \frac{2}{p'} =1$. By the hypotheses of the proposition we can apply the above estimate with $W$ replaced by $|B|, |\wti{B}| $ and 
$\|x V^s\|_{\C^2} $ respectively. This in combination with \eqref{M-bound-1} implies Assumption \ref{ass-B-rel-bounded}  and the upper bound \eqref{ass-B2-eq} of Assumption \ref{ass-bounded unique continuation} with $\alpha_1=0$. In the same way we get
\begin{equation*} 
\int_{K} |W| |u|^2\, dx \ \leq\  \|W\|_{L^p(K)}  \big(\,  \eps\,  \| (P-A) u\|_{L^2(\R^3)}^2 + C_\eps \,  \|u\|_{L^2(\R^3)}^2\big). 
\end{equation*} 
Inserting $W= \|V\|_{\C^2}$ in the above estimate and using \eqref{M-bound-2} we obtain Assumption \ref{ass-V-form small} with $\alpha_0=0$, and estimate \eqref{ass-xV-quantitative} with $\alpha_3=0$.

\smallskip

\noindent  To prove \eqref{ass-V1-eq} consider $W$ as above and take $R$ large enough such that $K\subset \U_R$. Integration by parts yields
\begin{align} \label{xdotW}
\int_{\U_R} x\cdot \nabla W\, |u|^2 \, dx & = R  \int_{\partial\U_R}\! W  |u|^2\, dS -2 \int_{\U_R} W \left(|u|^2 + {\rm Re\, } (\bar u\, x\cdot \nabla u)\right)\, dx .
\end{align}
Now let $\eps>0$. Since the trace embedding $H^1(\U_R) \hookrightarrow L^2(\partial\U_R)$ is compact and $W\in L^\infty(\partial \U_R)$, there exists $C_\eps$ such that 
\begin{equation}  \label{boundary}
R  \int_{\partial\U_R} W |u|^2\, dS\ \leq \ \eps \int_{\U_R} |\nabla |u||^2\, dx + C_\eps \int_{\U_R} |u|^2\, dx\ \leq\ \eps \| (P-A) u\|_{L^2(\R^3)}^2 + C_\eps \,  \|u\|_{L^2(\R^3)}^2 ,
\end{equation}
where we have used also the diamagnetic inequality. 
As for the second term in \eqref{xdotW}, we note that $x\cdot \nabla u = x\cdot (P-A) u$, see \eqref{poincare}, and hence
\begin{align*}
\Big | \int_{\U_R} W\,  {\rm Re\, } (\bar u\, x\cdot \nabla u)\, dx \Big |&  = \Big | \int_{\U_R} W\,  {\rm Im\, } (\bar u\, x\cdot (P-A) u)\, dx \Big |
\leq \  R \int_{\U_R} |W|\,  |u|\, |(P-A) u| \, dx \\ 
& \leq   R  \|Wu\|_{L^2(\U_R)} \| (P-A) u\|_{L^2(\R^3)} \leq  \frac{R^2}{\sqrt{\eps}}\, \|Wu\|_{L^2(\U_R)}^2 + \sqrt{\eps}\ \| (P-A) u\|_{L^2(\R^3)}^2 \\
& \leq \sqrt{\eps}\,  \left(R^2\,  \|W\|_{L^p(\U_R)}^2 + 1\right)\, \| (P-A) u\|_{L^2(\R^3)}^2  + C_\eps\|u\|_{L^2(\R^3)}^2 . 
\end{align*}
where we have used the estimate
\begin{equation*}
\int_{\U_R} W^2\,  |u|^2\, dx \ \leq\  \|W\|_{L^p(\U_R)}^2  \big( \,  \eps\,  \| (P-A) u\|_{L^2(\R^3)}^2 + C_\eps \,  \|u\|_{L^2(\R^3)}^2\big), 
\end{equation*} 
see \eqref{bound-0}. Putting the above estimates together and using $W\in L^\infty(K^c)$ we find that  
\begin{equation*}
| \LL u, x\cdot \nabla W u\RR_{L^2(\R^3)} | \ \leq \ \eps \| (P-A) u\|_{L^2(\R^3)}^2 + C_\eps \,  \|u\|_{L^2(\R^3)}^2\, .
\end{equation*}
The polarisation identity \eqref{polarisation} now gives
$$
|  \LL v, x\cdot \nabla W u\RR_{L^2(\R^3)} | \ \leq \ \eps \big( \| (P-A) u\|_{L^2(\R^3)}^2 +\| (P-A) v\|_{L^2(\R^3)}^2\big)+ C_\eps \,  \big(\|u\|_{L^2(\R^3)}^2 +\|v\|_{L^2(\R^3)}^2\big )\, .
$$
Applying the above estimate with $W$ replaced by the matrix elements of $ V^\ell$ yields inequality \eqref{ass-V1-eq}  with $\alpha_2=0$.
\end{proof}

\subsection{Uniformly local $L^p$ conditions} In order to include potentials with stronger singularities than those allowed by Proposition \ref{prop-sufficient}, we introduce the class 
 \begin{align}
 	L^p_{\text{loc,unif}}= \Big\{ f:  \sup_{x\in\R^3}\int_{\U_1(x)} | f(y)| ^p\, dy <\infty \big\}, \qquad p> \frac32 
 \end{align}
equipped with the norm 
 \begin{align}
 	\|f\|_{L^p_{\text{loc,unif}}} =   \sup_{x\in\R^3}\Big(\int_{\U_1(x)} |f(y)|^p\, dy\Big)^{1/p}\, .
 \end{align}
 
 \begin{definition} Let $f\in  L^p_{\text{loc,unif}}$. 
We say that $g$ is equivalent to $g$ at infinity, and write $g \sim f$ if $g\in L^\infty(\R^3)$ and if
 \begin{align}
 	\limsup_{R\to\infty} \big\|\id_{\U_R^c}  (f-g ) \big\|_{L^p_{\text{loc,unif}}}=0 .
 \end{align}
 Given $f\in L^p_{\text{loc,unif}}$, we define 
 \begin{align*}
	\gamma(f) & =  \inf  \big( \, \| g\|_\infty \, : \, g \sim f\big).
\end{align*}
\end{definition}

We then have 

\begin{proposition} \label{prop-Lp-unif}
Let $B,V$ satisfy conditions of Lemma \ref{lem-pointwise}, and let $V$ satisfy Assumption \ref{ass-V-hermit}. Suppose moreover that $|\wti B| \in  L_{\rm loc,unif}^p(\R^3)$ and  $|V^{s,\ell}(\cdot)|_{\C^2} \in L_{\rm loc,unif}^p(\R^3)$ for some $p>3/2$. 
Then all the hypotheses of Section \ref{sec-ass} are satisfied with $\alpha_j=0$ for $j=0,1,2,3$, and the constants $\beta, \omega_1, \omega_2$ satisfy 
\begin{equation} 
\beta\le  \, \limsup_{|x|\to\infty} \gamma(|\wti B(x)|) ,  
\qquad \omega_1 \, \le  \gamma( |x V^s(x)|_{\C^2}), \qquad  \omega_2\, \le \, \gamma( |x\cdot \nabla V^\ell(x)|_{\C^2})
\end{equation}
\end{proposition}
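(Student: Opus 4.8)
The plan is to rerun the proof of Proposition~\ref{prop-sufficient} essentially verbatim, changing only the single step in which compactness of a Sobolev embedding on a fixed compact set was used --- the chain \eqref{eq-comp}--\eqref{bound-0}, which is what forces $p>3$ there --- and replacing it by a translation-invariant substitute valid on the larger scale $L^p_{\text{loc,unif}}$ with $p>3/2$. The sharp constants $\beta,\omega_1,\omega_2$ will then be read off from the $\gamma(\cdot)$-mechanism rather than from pointwise suprema at infinity.

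The analytic core is the following uniformly-local Kato-type estimate, which I would prove first. \emph{If $W\in L^p_{\text{loc,unif}}(\R^3)$ with $p>3/2$, then for every $\delta>0$ there is $C_\delta$, depending only on $\delta$ and $\|W\|_{L^p_{\text{loc,unif}}}$, such that}
\[
\int_{\R^3}|W|\,|u|^2\,dx\ \le\ \delta\,\|\nabla u\|_2^2+C_\delta\,\|u\|_2^2\qquad\text{for all }u\in H^1(\R^3),
\]
\emph{and, for $R\ge1$ and $u$ supported in $\U_R^c$, the same bound holds with $C_\delta$ replaced by a constant depending on $W$ only through $\big\|\id_{\U_{R-1}^c}W\big\|_{L^p_{\text{loc,unif}}}$, uniformly in $R$.} For the proof I would cover $\R^3$ by the unit balls $\U_1(k)$, $k\in\Z^3$, fix a subordinate smooth partition of unity $\{\chi_k^2\}$ with $\sum_k|\nabla\chi_k|^2\lesssim1$, and on each ball bound $\int|W|\chi_k^2|u|^2$ by Hölder's inequality followed by the scaling- and translation-invariant Gagliardo--Nirenberg--Sobolev inequality $\|\chi_ku\|_{L^{2p'}}^2\le\delta_0\|\nabla(\chi_ku)\|_2^2+C_{\delta_0}\|\chi_ku\|_2^2$, which is legitimate since $p>3/2$ forces the conjugate exponent to satisfy $2p'\in[2,6]$, with constants independent of $k$; summing over $k$ and using the finite overlap gives the global bound, and the localized version follows because for $\supp(u)\subset\U_R^c$ only the finitely overlapping balls meeting $\U_{R-1}^c$ contribute. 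The diamagnetic inequality \eqref{diamagnetic-inequality} then lets one replace $\nabla$ by $(P-A)$ throughout.

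Granted this, verifying the hypotheses of Section~\ref{sec-ass} is a transcription of the proof of Proposition~\ref{prop-sufficient}: one uses \eqref{M-bound-1}--\eqref{M-bound-2} to reduce the matrix-valued fields to their scalar norms and applies the estimate above to $|V(\cdot)|_{\C^2}$ for the linear terms and to the squares $|\wti B|^2$ and $|xV^s(\cdot)|_{\C^2}^2$ for the quadratic ones, all of which lie in $L^q_{\text{loc,unif}}$ for a suitable $q>3/2$ under the hypotheses. This yields Assumption~\ref{ass-B-mild-int}, Assumption~\ref{ass-B-rel-bounded}, Assumption~\ref{ass-V-form small} with $\alpha_0$ arbitrarily small, and the bounds \eqref{ass-B2-eq}, \eqref{ass-xV-quantitative} of Assumption~\ref{ass-bounded unique continuation} with $\alpha_1=\alpha_3=0$; the bound \eqref{ass-V1-eq} with $\alpha_2=0$ follows, exactly as in Proposition~\ref{prop-sufficient}, from the integration-by-parts identity \eqref{xdotW}, whose boundary integral over $\partial\U_R$ is still controlled by the compact trace embedding $H^1(\U_R)\hookrightarrow L^2(\partial\U_R)$ and whose interior integral is handled by the estimate above together with $x\cdot\nabla u=x\cdot(P-A)u$ in the Poincar\'e gauge, see \eqref{poincare}. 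Assumptions~\ref{ass-V-vanishing infinity} and \ref{ass-V-bounded-infinity} follow from the behaviour at infinity assumed in the hypotheses via Lemma~\ref{lem-pointwise}, combined with the same form-boundedness input. The sharp constants are obtained by decomposing each relevant field $f$ (one of $|\wti B|^2$, $|xV^s(\cdot)|_{\C^2}^2$, $|x\cdot\nabla V^\ell(\cdot)|_{\C^2}$) as $f=g+(f-g)$ with $g\sim f$ and $\|g\|_\infty\le\gamma(f)+\eta$: for $\varphi$ supported in $\U_{R_j}^c$ the bounded part contributes at most $(1+\eta)\|g\|_\infty^2\|\varphi\|_2^2$, while the localized estimate applied to $(f-g)\id_{\U_{R_j}^c}$ --- whose uniformly-local norm tends to $0$ as $j\to\infty$ by the definition of $\sim$ --- absorbs the remainder into $\eps_j\|(P-A)\varphi\|_2^2$ plus a vanishing multiple of $\|\varphi\|_2^2$; letting $j\to\infty$ and then $\eta\to0$ gives the claimed bounds on $\beta$, $\omega_1$, and $\omega_2$.

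The step I expect to be the main obstacle is the \emph{localized, constant-tracking} version of the uniformly-local Kato estimate: obtaining, for $\varphi$ supported in $\U_R^c$, a relative-boundedness inequality whose constant depends on $W$ only through the uniformly-local norm of its restriction to a fixed neighbourhood of infinity, and doing so uniformly in $R$. Everything downstream --- in particular the identification of $\beta,\omega_1,\omega_2$ with the $\gamma$-quantities, which requires absorbing the ``small at infinity'' part of each field into an arbitrarily small relative bound while paying only $\|g\|_\infty^2$ for its bounded part --- hinges on getting this dependence right; the remaining verifications are routine bookkeeping modelled on Proposition~\ref{prop-sufficient}.
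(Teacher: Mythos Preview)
The paper does not actually prove this proposition: it simply declares it a matrix-valued version of results in \cite[Sec.~A.1]{ahk} and refers the reader there. Your outline supplies far more detail than the paper itself, and its architecture --- a uniformly-local Kato bound via a partition of unity and Gagliardo--Nirenberg, followed by a transcription of the proof of Proposition~\ref{prop-sufficient}, with the $\gamma(\cdot)$ mechanism replacing pointwise $\limsup$'s at infinity --- is the natural strategy and is what one expects the cited reference to do in the scalar case.

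There is, however, a genuine gap in your treatment of the quadratic terms. You assert that under the hypotheses the squares $|\wti B|^2$ and $|xV^s|_{\C^2}^2$ lie in $L^q_{\text{loc,unif}}$ for some $q>3/2$. But the hypotheses give only $|\wti B|,\,|xV^s|_{\C^2}\in L^p_{\text{loc,unif}}$ with $p>3/2$, together with boundedness outside a compact set $K$; on $K$ the squares are merely in $L^{p/2}$, and $p/2>3/2$ forces $p>3$ --- exactly the threshold of Proposition~\ref{prop-sufficient}, not the weaker $p>3/2$ stated here. For $3/2<p\le 3$ the claim is false: take $|\wti B|=|x|^{-\alpha}\chi_{\U_1}$ with $1<\alpha<3/p<2$, which satisfies all the stated hypotheses, yet $|\wti B|^2\sim|x|^{-2\alpha}$ with $2\alpha>2$ is worse than the Hardy potential and hence is \emph{not} form bounded with respect to $-\Delta$. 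Your linear Kato estimate controls $\int|W|\,|u|^2$, not $\int|W|^2|u|^2$, so it cannot rescue Assumption~\ref{ass-B-rel-bounded} or the bound \eqref{ass-B2-eq} in this regime. Either the stated hypothesis should read $p>3$ (or equivalently be placed on the squares), or an additional assumption is needed; as written, your verification of Assumption~\ref{ass-B-rel-bounded} and of \eqref{ass-B2-eq} with $\alpha_1=0$ does not go through.
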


Proposition \ref{prop-Lp-unif} is a matrix valued version of the results established in \cite[Sec.~A.1]{ahk}. We therefore omit the proof and refer to \cite{ahk}.

\noindent 
\textbf{Acknowledgments:}  Hynek Kova\v{r}\'{\i}k has been partially supported by Gruppo Nazionale per Analisi Matematica, la Probabilit\`a e le loro Applicazioni (GNAMPA) of the Istituto Nazionale di Alta Matematica (INdAM). 
Dirk Hundertmark acknowledges funding by the Deutsche Forschungsgemeinschaft (DFG, German Research Foundation) -- Project-ID 258734477 -- SFB 1173.


\end{document}